\DeclareSymbolFont{symbolsC}{U}{txsyc}{m}{n}
\DeclareMathSymbol{\circledwedge}{\mathrel}{symbolsC}{84}
\DeclareMathSymbol{\circledvee}{\mathrel}{symbolsC}{85}
  \newenvironment{full}{}{}
\spnewtheorem*{remark*}{Remark}{\itshape}{\upshape}
\author{Colin Riba \and Alexandre Kejikian}
\authorrunning{C. Riba and A. Kejikian}
\institute{ENS de Lyon, CNRS, Université Claude Bernard Lyon 1, LIP, UMR 5668,
69342, Lyon cedex 07, France}
\newcommand\mytitle{Infinitary Refinement Types for Temporal Properties in Scott Domains}
\title{\mytitle}
\begin{document}

\maketitle              

\begin{abstract}
We discuss an infinitary refinement type system for input-output
temporal specifications of functions that handle infinite objects
like streams or infinite trees.
Our system is based on a reformulation of Bonsangue and Kok's
infinitary extension 
of Abramsky's Domain Theory in Logical Form to saturated properties.
We show that in an interesting range of cases,
our system is complete without the need of an infinitary rule
introduced by Bonsangue and Kok to reflect the well-filteredness of Scott domains.

\keywords{Refinement Types \and Scott Domains \and Temporal Logic.}
\end{abstract}

\section{Introduction}
\label{sec:intro}

\noindent
We are interested in input-output
specifications of higher-order programs that handle infinite 
data, such as streams or non-wellfounded trees.
Consider e.g.
\[
\begin{array}{r c l}
  \filter
& :
& (\BT \to \Bool)
  ~\longto~
  \Stream \BT
  ~\longto~
  \Stream \BT
\\

  \filter\ p\ (a \Colon x)
& =
& \term{if}~ (p\ a)
  ~\term{then}~ a \Colon (\filter\ p\ x)
  ~\term{else}~ (\filter\ p\ x)
\end{array}
\]

\noindent
where $\Stream\BT$ stands for the type of streams on $\BT$.
Assume $p : \BT \to \Bool$ is a function that
tests for a property $\psi$.
If $x$ is a stream on $\BT$,
then $(\filter\ p\ x)$ retains those elements of $x$ which satisfy $\psi$.
The stream produced by $(\filter\ p\ x)$ is thus only partially defined
if $x$ has only finitely many elements satisfying $\psi$.

Logics like $\LTL$ (Linear Temporal Logic),
$\CTL$ (Computation Tree Logic)
or the modal $\mu$-calculus are
widely used to formulate, on infinite objects,
safety and liveness properties (see e.g.~\cite{hr07chapter,bs07chapter}).
Safety properties state that
some ``bad'' event will not occur,
while liveness properties 
specify that ``something good'' will happen 
(see e.g.~\cite{bk08book}).
One typically uses temporal modalities like $\Box$ (\emph{always})
or $\Diam$ (\emph{eventually})
on streams in specifications of programs.

A possible specification for $\filter$
asserts that $(\filter\ p \ x)$ is a totally defined stream whenever
$x$ is a totally defined stream with infinitely many elements satisfying $\psi$.
We express this as follows.
Let $\BT$ be finite and assume given,
for each $a$ of type $\BT$, a formula $\form a$ which holds on $b : \BT$
exactly when $b$ is $a$.
Then $\Box \bigvee_a \form a$ selects those streams on $\BT$
which are totally defined.
The formula $\form\hd\psi$ holds on a stream $(a \Colon x)$ when $\psi$ holds on $a$.
Hence $\Box\Diam \form\hd\psi$ expresses that a stream
has infinitely many elements satisfying $\psi$.
We can thus state that 
\begin{equation}
\label{eq:intro:spec}
\begin{array}{c}
  \text{$x$ satisfies $\Box\bigvee_a \form a$ and $\Box\Diam \form\hd\psi$}
  \quad\longimp\quad
  \text{$(\filter\ p\ x)$ satisfies $\Box\bigvee_a \form a$}
\end{array}
\end{equation}

It is undecidable whether a given higher-order program satisfies
such an input-output temporal property written with formulae
of the modal $\mu$-calculus \cite{ktu10popl}.
A previous work~\cite{jr21esop}
provided a refinement type system for proving such properties.
This type system handles the (negation-free) alternation-free modal $\mu$-calculus
on infinite types such as streams or trees.
But it is based on guarded recursion
and does not allow for non-productive functions such as $\filter$.

In this paper, we consider a fragment of $\FPC$ equipped with general recursion
($\FPC$ extends $\PCF$~\cite{plotkin77tcs} with
recursive types, see e.g.\ \cite{pierce02book}).
We are interested in specifications as in \eqref{eq:intro:spec},
but interpreted at the level of denotational semantics:
In our view, since a stream (as opposed to e.g.\ an integer)
is an inherently infinite object, the above specification for $\filter$
should hold for any stream whatsoever, and not only
for those definable in a given programming language.

This leads us to consider temporal properties
on infinite datatypes in Scott domains.
We noted in \cite{rs24jfla} that the usual rule of Scott induction
(see e.g. \cite[\S 6.2]{ac98book}) does not prove liveness
properties like \eqref{eq:intro:spec} above.
We instead resort to Abramsky's paradigm of
``Domain Theory in Logical Form'' (DTLF) \cite{abramsky91apal}.
We build on~\cite{bk03ic}, in which Bonsangue and Kok extend DTLF
to an infinitary type system which is sound and complete for
a large family of infinitary properties, known as the \emph{saturated} ones.%
\footnote{A subset of a domain is \emph{saturated} when it is upward-closed.}
This includes the \emph{negation-free} formulae of
(suitable adaptations of) the modal $\mu$-calculus \cite{rs24jfla},
and thus the specification in \eqref{eq:intro:spec}.

We present an infinitary refinement type system for saturated properties.
Our system is a reformulation of DTLF which,
in contrast with \cite{abramsky91apal,bk03ic}
(see also \cite[\S 10.5]{ac98book}),
has no specific syntactic entity for compact open sets.
We do isolate a finitary logical fragment,
but it only consists of finite conjunctions and falsity,
without non-empty disjunctions.
As consequence, our version of Abramsky's \emph{coprimeness predicate}
is a \emph{consistency} predicate,
which selects those finite formulae with a non-empty interpretation.
Besides, our consistency predicate has a \emph{positive} (inductive) definition
(cf.\ \cite{bk03ic,ac98book}).

Also, similarly as in~\cite{bk03ic}, the completeness of
our system relies on a topological property of Scott domains
known as \emph{well-filteredness},
and reflected in an infinitary rule ($\ax{WF}$ in \S\ref{sec:wf}).
We show that this rule is actually not needed for an
interesting range of specifications,
including \eqref{eq:intro:spec} for $\filter$,
as well as various specifications for functions on streams and trees
(see Example~\ref{ex:reft:fun} and Theorem~\ref{thm:main}).

Having some control on the rule $\ax{WF}$
is relevant in the context of this work.
We ultimately target a finitary system in which,
similarly as in~\cite{jr21esop},
infinitary behaviours of fixpoint formulae are simulated
by explicit quantifications over the number of unfoldings
of these fixpoints.
To this end, it is important to know that the rule $\ax{WF}$
can be avoided in many interesting cases.

\paragraph{Organization of the paper.}
We devise our refinement type system in \S\ref{sec:system}.
Its (Scott) semantics is presented in \S\ref{sec:sem},
and completeness is handled in \S\ref{sec:compl}.
Finally, \S\ref{sec:conc} discusses our results in the perspective of further works.

\opt{short}{Appendix \ref{sec:app} contains additional technical material.
Proofs are available in the Appendices of the full version~\cite{rk25full}.}%
\opt{full}{Proofs are available in the Appendices.}

\section{A Refinement Type System}
\label{sec:system}

We assume given a collection of sets $\Base$,
which will play the role of \emph{base types}.

\subsection{The Pure System}
\label{sec:pure}

The \emph{pure types}
(notation $\PT,\PTbis,\dots$) are
the closed types over the grammar
\[
\begin{array}{r @{\ \ }c@{\ \ } l}
     \PT
&    \bnf
&    \BT
\gss \PT \times \PT
\gss \PT \arrow \PT
\gss \TV
\gss \rec \TV.\PT 
\end{array}
\]

\noindent
where $\BT \in \Base$,
where $\TV$ ranges over an infinite supply of \emph{type variables},
and where $\rec\TV.\PT$ binds $\TV$ in $\PT$.
We consider terms from the grammar
\[
\begin{array}{r !{~}r!{~} l}
    M,N 
&   \bnf
&   x
\gs \lambda x.M
\gs M N
\gs \fix x.M
\gs \fold(M)
\gs \unfold(M)
\\

&   \mid
&   \pair{M,N}
\gs \pi_1(M)
\gs \pi_2(M)
\gs a
\gs \cse\ M\ \copair{a \mapsto N_a \mid a \in \BT}
\end{array}
\]

\noindent
where $\BT \in \Base$ and $a \in \BT$.
The term formers $\fold, \unfold, \pi_1, \pi_2$
are often written in curried form:
e.g.\ $(\fold M)$ stands for $\fold(M)$.
Write $M \comp N$ for $\lambda x. M(N\, x)$.

Terms are typed as usual, with judgments of the form
$\Env \thesis M : \PT$, where $\Env$ is a list
$x_1:\PTbis_1,\dots,x_n:\PTbis_n$ with $x_i \neq x_j$
if $i \neq j$.
Some typing rules are presented in Figure~\ref{fig:reft:puretyping}.%
\footnote{The set of all typing rules of the pure system is in
Figure~\ref{fig:app:puretyping}, \S\ref{sec:app}.}
Of course, each type $\PT$ is inhabited
by the term $\Omega_\PT \deq \fix x.x : \PT$.

\begin{figure}[t!]
\[
\begin{array}{c}
\dfrac{\Env,x:\PT \thesis M : \PT}
  {\Env \thesis \fix x.M : \PT}
\qquad\quad

\dfrac{\Env \thesis M : \PT[\rec\TV.\PT/\TV]}
  {\Env \thesis \fold(M) : \rec\TV.\PT}

\qquad\quad

\dfrac{\Env \thesis M : \rec\TV.\PT}
  {\Env \thesis \unfold(M) : \PT[\rec\TV.\PT/\TV]}

\\\\

\dfrac{}
  {\Env \thesis a : \BT}
~\text{($\BT \in \Base$ and $a \in \BT$)}

\qquad\quad

\dfrac{ \Env \thesis M : \BT
  \qquad\text{for each $a \in \BT$,\quad} \Env \thesis N_a : \PT}
  {\Env \thesis \cse\ M\ \copair{a \mapsto N_a \mid a \in \BT} : \PT}

\end{array}
\]
\caption{Typing rules of the pure calculus (excerpt).%
\label{fig:reft:puretyping}}
\end{figure}

\begin{full}
\begin{remark*} 
This paper concerns the denotational semantics
of the above fragment of $\FPC$ in Scott domains.
This denotational semantics, to be discussed in~\S\ref{sec:sem},
is compatible with the contextual closure of the 
usual evaluation rules.
\[
\begin{array}{r c l !{\qquad} r c l}
  (\lambda x.M)N
& \rhd
& M[N/x]

& \unfold(\fold M)
& \rhd
& M
\\

  \pi_i \pair{M_1,M_2}
& \rhd
& M_i

& \fix x.M
& \rhd
& M[\fix x.M/x]
\\

  \cse\ a\ \copair{a \mapsto N_a \mid a \in \BT}
& \rhd
&  N_a
\end{array}
\]
\end{remark*}
\end{full}

\begin{example} 
\label{ex:pure}
The type of \emph{streams over $\PTbis$} is
$\Stream\PTbis \deq \rec\TV.\, \PTbis \times \TV$.
It is equipped with the constructor
\(
  \Cons
  \deq
  \lambda h .\lambda t.\fold \pair{h,t}
  :
  \PTbis \arrow \Stream\PTbis \arrow \Stream \PTbis
\).
We use the infix notation $(M \Colon N)$ for $(\Cons M\, N)$.
The usual \emph{head} and \emph{tail} functions
are
$\hd \deq \lambda s.\, \pi_1 (\unfold s) : \Stream\PTbis \arrow \PTbis$
and
$\tl \deq \lambda s.\, \pi_2 (\unfold s) : \Stream\PTbis \arrow \Stream\PTbis$.

The type of \emph{binary trees over $\PTbis$} is
$\Tree\PTbis \deq \rec\TV.\, \PTbis \times (\TV \times \TV)$.
The constructor
$\Node : \PTbis \arrow \Tree\PTbis \arrow \Tree\PTbis \arrow \Tree\PTbis$
and the destructors
$\lbl : \Tree\PTbis \arrow \PTbis$
and
$\lft, \rght : \Tree\PTbis \arrow \Tree\PTbis$
are defined similarly as resp.\ $\Cons, \hd, \tl$ on streams.
\end{example}

\begin{table}[t!]
\[
\begin{array}{c}
\toprule

\begin{array}{r !{~}c!{~} l !{~}c!{~} l}
  \bm{\map}
& \deq
& \lambda f. \fix g. \lambda x.~
  (f\ (\hd x)) \Colon (g\ (\tl x))

& :
& (\PT \arrow \PTbis)
  \longarrow
  \Stream\PT
  \longarrow
  \Stream\PTbis

\end{array}

\\\midrule

\begin{array}{r !{~}l!{~} l}
  \bm{\filter}
& :
& (\PTbis \arrow \Bool)
  \longarrow
  \Stream\PTbis
  \longarrow
  \Stream\PTbis
\\

& \deq
& \lambda p. \fix g.\lambda x.~
  \term{if}~ (p\ (\hd x))
  ~\term{then}~ (\hd x) \Colon (g\ (\tl x))
  ~\term{else}~ (g\ (\tl x))
\end{array}

\\\midrule

\begin{array}{r !{~}l!{~} l !{~}c!{~} l}
  \bm{\diag}
& \deq
& \diagaux (\lambda x.x)
& :
& \Stream(\Stream \PTbis) \longarrow \Stream \PTbis
\\[0.5em]

  \bm{\diagaux}
& :
& \multicolumn{3}{l}{
  (\Stream\PTbis \arrow \Stream\PTbis)
  \longarrow
  \Stream (\Stream \PTbis)
  \longarrow
  \Stream \PTbis
  }
\\

& \deq
& \multicolumn{3}{l}{
  \fix g. \lambda k. \lambda x.~
  \big( (\hd \comp\, k) (\hd x) \big)
  \Colon
  \big( g\ (k \comp \tl)\ (\tl x) \big)
  }
\end{array}

\\\midrule

\begin{array}{c !{\qquad} c}

\begin{array}[t]{l !{~}l!{~} l}

  \bm{\extract}
& :
& \Rou \PTbis \longarrow \PTbis
\\
& \deq 
& \fix e.\lambda c. \unfold\, c\ e
\end{array}

&

\begin{array}[t]{l !{~}l!{~} l}
  \bm{\Over}
& :
& \Rou \PTbis
\\
& \deq
& \fix c. \fold(\lambda k. k\ c)
\end{array}

\end{array}

\\\\

\begin{array}{r !{~}l!{~} l}
  \bm{\bft}
& \deq
& \begin{array}{l !{~}c!{~} l}
    \lambda t.~ \extract (\bftaux\ t\ \Over)
  & :
  & \Tree \PTbis
    \longarrow
    \Stream \PTbis
  \end{array}
\\[0.5em]

  \bm{\bftaux}
& :
& \Tree\PTbis
  \longarrow
  \Rou (\Stream\PTbis)
  \longarrow
  \Rou (\Stream\PTbis)
\\

& \deq
& \fix g.\lambda t.\lambda c.
  \fold \left(
    \lambda k.~
    (\lbl t) \Colon
    \left( \unfold\, c \ \big( k \comp (g (\lft t)) \comp (g (\rght t)) \big) \right)
  \right)

\end{array}

\\\bottomrule
\end{array}
\]
\caption{Functions on Streams and Trees.\label{tab:ex}}
\end{table}

\begin{example}
\label{ex:pure:fun}
Table~\ref{tab:ex} defines functions on streams and trees
for which we will be able to derive specifications which improve on \cite{jr21esop}
(see Examples \ref{ex:reft:fun} and \ref{ex:main}).

On streams, besides the usual $\map$ function,
we consider the $\filter$ function from \S\ref{sec:intro}.
This assumes that $\Base$ contains a set $\Bool = \{\term{tt},\term{ff} \}$
of \emph{Booleans}.
The notation
$\term{if}~ M ~\term{then}~ N_{\term{tt}} ~\term{else}~ N_{\term{ff}}$
stands for the term
$\cse\ M\ \copair{a \mapsto N_a \mid a \in \Bool}$.
Finally, the function $\diag$ computes the diagonal of a stream of streams.
We refer to~\cite[Example 8.3]{jr21esop} for explanations.

On trees, the function $\bft$ implements Martin Hofmann's breadth-first traversal
(see e.g.~\cite{bms19types,jr21esop}).
It uses the recursive type
$\Rou\PTbis \deq \rec\TV.\, (\TV \arrow \PTbis) \arrow \PTbis$.
\end{example}

\begin{full}
\begin{remark*} 
We assumed a term former $\fix x.M$ for term-level fixpoints,
but it is well known that $\fix$ is definable in presence of recursive types
(cf.\ e.g.\ \cite[\S 20.1]{pierce02book}).
\end{remark*}
\end{full}

\subsection{Negation-Free Infinitary Modal Logics}
\label{sec:log}

\begin{figure}[t!]
\[
\begin{array}{c}

\dfrac{\varphi \in \Lang(\PT_1)}
  {\form{\pi_1} \varphi \in \Lang(\PT_1 \times \PT_2)}

\qquad\quad

\dfrac{\varphi \in \Lang(\PT_2)}
  {\form{\pi_2} \varphi \in \Lang(\PT_1 \times \PT_2)}

\qquad\quad

\dfrac{\psi \in \Lang(\PTbis)
  \qquad
  \varphi \in \Lang(\PT)}
  {\psi \realto \varphi \in \Lang(\PTbis \arrow \PT)}

\\\\

\dfrac{\varphi \in \Lang(\PT[\rec \TV.\PT/\TV])}
  {\form\fold \varphi \in \Lang(\rec \TV.\PT)}

\qquad\qquad

\dfrac
  {\text{$\BT \in \Base$ and $a \in \BT$}}
  {\form a \in \Lang(\BT)}

\end{array}
\]
\caption{Modalities.%
\label{fig:modal}}
\end{figure}

\noindent
We consider negation-free infinitary formulae with modalities
as in~\cite{abramsky91apal,bk03ic,jr21esop}.

\begin{definition}[Formulae]
\label{def:form}
Let $\PT$ be a pure type.

The formulae $\varphi \in \Lang(\PT)$
are formed using the modalities in Figure~\ref{fig:modal}
together with arbitrary set-indexed 
conjunctions $\bigwedge_{i \in I} \varphi_i$ and
disjunctions $\bigvee_{i \in I} \varphi_i$.
We write $\True$ (resp.\ $\False$) for the empty conjunction (resp.\ disjunction).

We let $\Lang_\land(\PT)$
consist of those $\varphi \in \Lang(\PT)$
in which all conjunctions are finite and all disjunctions are empty
($\False$ is the only disjunction allowed in $\Lang_\land(\PT)$).

The formulae $\varphi \in \Lang_\Open(\PT)$
are formed from formulae in $\Lang_\land(\PT)$
using arbitrary disjunctions and finite conjunctions.

The \emph{normal forms} $\varphi \in \Norm(\PT)$
are the $\varphi = \bigwedge_{i \in I} \bigvee_{j \in J_i}\psi_{i,j}$
with $\psi_{i,j} \in \Lang_\land(\PT)$
and where $I$ and the $J_i$'s are arbitrary sets.
\end{definition}

The semantics of formulae is defined in~\S\ref{sec:sem:log}.
Their intended meaning is as follows.
The formula $\psi \realto \varphi \in \Lang(\PTbis \arrow \PT)$
is intended to select those $M : \PTbis \arrow \PT$
such that $\varphi$ holds on $M N : \PT$ whenever $\psi$ holds on $N : \PTbis$.
Similarly, $\form\fold \varphi$ holds on $M$ whenever
$\varphi$ holds on $\unfold M$.
For $i = 1,2$,
the formula $\form{\pi_i} \varphi$
selects those $M : \PT_1 \times \PT_2$
such that $\varphi$ holds on $\pi_i M$.
With
$\pair{\varphi_1,\varphi_2} \deq \form{\pi_1}\varphi_1 \land \form{\pi_2} \varphi_2$,
we have a formula which holds on those $M$ such that
$\varphi_i$ holds on $\pi_i M$ for $i = 1,2$.

\begin{example}
\label{ex:form:base}
Given $\BT \in \Base$ and $a \in \BT$,
the formula $\form a$ is intended to hold
on $a$ but not on the $b \in \BT \setminus \{a\}$.
For instance, given $\SP \sle \BT$, the formula
$\bigwedge_{a \in \SP}(\form a \realto \form{\term{tt}})$
is intended to select the $p : \BT \arrow \Bool$
such that $(p\, a)$ is $\term{tt}$ for all $a \in \SP$.
\end{example}

\begin{example}
\label{ex:form:stream}
On streams $\Stream\PTbis$, 
the composite modalities $\form\hd$ and $\form\tl$
are defined as $\form\hd \psi \deq \form\fold \form{\pi_1} \psi$
and $\form\tl \varphi \deq \form\fold \form{\pi_2} \varphi$.
Given $\psi \in \Lang(\PTbis)$ and $\varphi \in \Lang(\Stream\PTbis)$,
the formulae $\form\hd\psi \in \Lang(\Stream\PTbis)$
and $\form\tl\varphi \in \Lang(\Stream\PTbis)$
select those streams $M$ such that $\psi$ holds on $(\hd M)$
and such that $\varphi$ holds on $(\tl M)$, respectively.
In the following, we write $\Next\varphi$ for $\form\tl\varphi$.

Using $\NN$-indexed connectives,
we can define the usual $\LTL$ modalities $\Box$ and $\Diam$ as 
$\Box \varphi \deq \bigwedge_{n \in \NN} \Next^n \varphi$
and
$\Diam \varphi \deq \bigvee_{n \in \NN} \Next^n \varphi$.
Hence, $\Box \varphi$ (resp.\ $\Diam \varphi$)
is intended to hold on those $M : \Stream\PTbis$
such that $\varphi$ holds on $\tl^n M$ for all $n \in \NN$
(resp.\ for some $n \in \NN$).
In particular, $\Box\Diam \form\hd \psi$ (resp.\ $\Diam\Box\form\hd\psi$)
selects those streams with infinitely many (resp.\ ultimately all) elements
satisfying $\psi$.
\end{example}

\begin{example}
\label{ex:form:tree}
Similarly, on trees $\Tree\PTbis$
one can define $\form\lbl$, $\form\lft$ and $\form\rght$
such that
$\form\lbl\psi, \form\lft\varphi, \form\rght\varphi \in \Lang(\Tree\PTbis)$
whenever $\psi \in \Lang(\PTbis)$ and $\varphi \in \Lang(\Tree\PTbis)$.

Moreover, the $\LTL$ stream modalities $\Box,\Diam$ have their
usual $\CTL$ counterparts
$\forall\Box$, $\exists\Box$,
$\forall\Diam$ and $\exists\Diam$.
Namely, given $\varphi \in \Lang(\Tree\PTbis)$,
\[
\begin{array}{r l l !{\qquad\quad} r l l}
  \forall \Box \varphi
& \deq
& \bigwedge_{n \in \NN} 
  (\varphi \land \Land(\pl))^n(\True)

& \forall\Diam \varphi
& \deq
& \bigvee_{n \in \NN} 
  (\varphi \lor \Land(\pl))^n(\False)
\\

  \exists \Box \varphi
& \deq
& \bigwedge_{n \in \NN} 
  (\varphi \land \Lor(\pl))^n(\True)

& \exists\Diam \varphi
& \deq
& \bigvee_{n \in \NN} 
  (\varphi \lor \Lor(\pl))^n(\False)
\end{array}
\]

\noindent
with
$\Land\theta \deq \form\lft \theta \land \form\rght \theta$
and
$\Lor\theta \deq \form\lft \theta \lor \form\rght \theta$,
and where
$(\varphi \land \Lor(\pl))^0(\True)$ is $\True$
and
$(\varphi \land \Lor(\pl))^{n+1}(\True)$
is the formula
$\varphi \land \Lor \left( (\varphi \land \Lor(\pl))^{n}(\True) \right)$.

The intended meaning of $\forall\Box \form\lbl\psi$ is to select
those trees whose node labels all satisfy $\psi$,
while $\exists\Box \form\lbl\psi$ asks $\psi$ to hold on all
labels in some infinite path.
The formula $\exists\Diam \form\lbl\psi$
holds if there is a node whose label satisfies $\psi$,
and $\forall\Diam \form\lbl\psi$
requires that every infinite path has a node label on which $\psi$ holds.
\end{example}

Examples~\ref{ex:form:stream} and~\ref{ex:form:tree}
are generalized in Example~\ref{ex:sem:modalmu} (\S\ref{sec:sem:reft})
to (negation-free) least and greatest fixpoints 
in the style of the modal $\mu$-calculus
(see e.g. \cite{bs07chapter,bw18chapter}).

\begin{figure}[t!]
\[
\begin{array}{c}

\dfrac{\psi \thesis \theta
  \quad
  \theta \thesis \varphi}
  {\psi \thesis \varphi}

\quad

\dfrac{a \neq b}
  {\form a \land \form b \thesis_{\BT} \False}

\quad

\ax{D}
\dfrac{}
  {\bigwedge_{i \in I}\bigvee_{j \in J_i} \varphi_{i,j}
  \,\thesis\,
  \bigvee_{f \in \prod_{i \in I} J_i}\bigwedge_{i \in I} \varphi_{i,f(i)}}

\\\\

\dfrac{}
  {\varphi \thesis \varphi}

\quad

\dfrac{\text{for each $i \in I$, $\psi \thesis \varphi_i$}}
  {\psi \thesis \bigwedge_{i \in I} \varphi_i}

\quad

\dfrac{\psi_i \thesis \varphi}
  {\bigwedge_{i \in I} \psi_i \thesis \varphi}
~(i \in I)

\quad

\dfrac{}
  {\bigwedge_{i \in I} \form\triangle \varphi_i
  \thesis
  \form\triangle \bigwedge_{i \in I} \varphi_i}

\\\\

\dfrac{\psi \thesis \varphi_i}
  {\psi \thesis \bigvee_{i \in I} \varphi_i}
~(i \in I)

\qquad\qquad

\dfrac{\text{for each $i \in I$, $\psi_i \thesis \varphi$}}
  {\bigvee_{i \in I}\psi_i \thesis \varphi}

\qquad\qquad

\dfrac{}
  {\form\triangle \bigvee_{i \in I}\varphi_i
  \thesis
  \bigvee_{i \in I} \form\triangle\varphi_i}

\\\\

\ax{F}
\dfrac{\psi \in \Lang_\land(\PTbis)
  \quad~~
  \varphi_i \in \Lang(\PT)
  \quad~~
  I \neq \emptyset}
  {\psi \realto \left( \bigvee_{i \in I}\varphi_i \right)
  \,\thesis\,
  \bigvee_{i \in I} \left( \psi \realto \varphi_i \right) }

\qquad

\dfrac{\psi' \thesis_{\PTbis} \psi
  \qquad
  \varphi \thesis_{\PT} \varphi'}
  {\psi \realto \varphi \,\thesis_{\PTbis \arrow \PT}\, \psi' \realto \varphi'}

\qquad

\dfrac{\psi \thesis \varphi}
  {\form\triangle \psi \thesis \form\triangle \varphi}

\\\\

\dfrac{}
  {\bigwedge_{i \in I}\left(\psi \realto \varphi_i \right)
  \,\thesis\,
  \psi \realto \left(\bigwedge_{i \in I} \varphi_i\right)}

\qquad\qquad

\dfrac{}
  {\bigwedge_{i \in I}\left( \psi_i \realto \varphi \right)
  \,\thesis\,
  \left(\bigvee_{i \in I} \psi_i \right) \realto \varphi}

\end{array}
\]
\caption{Basic deduction rules, where $\triangle$ is either $\pi_1$, $\pi_2$ or $\fold$.%
\label{fig:log:ded}}
\end{figure}

\begin{definition}[Deduction]
A \emph{sequent} has the form $\psi \thesis_{\PT} \varphi$
where $\varphi,\psi \in \Lang(\PT)$.
We often write $\psi \thesis \varphi$ for $\psi \thesis_{\PT} \varphi$.
\emph{Basic deduction} is defined by the rules in
Fig.~\ref{fig:log:ded}.

We write $\psi \thesisiff \varphi$ when the sequents
$\psi \thesis \varphi$ and $\varphi \thesis \psi$
are both derivable.
\end{definition}

Note that $\varphi \thesis \True$ and $\False \thesis \varphi$
by definition of $\True$ and $\False$.
One can derive that $\thesis$ preserves conjunctions
and disjunctions:
if $\psi_i \thesis \varphi_i$ for all $i \in I$,
then
$\bigwedge_{i \in I} \psi_i \thesis \bigwedge_{i \in I} \varphi_i$
and
$\bigvee_{i \in I} \psi_i \thesis \bigvee_{i \in I} \varphi_i$.

\begin{example}
\label{ex:log:modalnf}
Let $\triangle$ be either $\pi_1,\pi_2$ or $\fold$.
The modality $\form\triangle$ commutes over conjunctions and disjunctions
($\bigwedge_i \form\triangle \varphi_i \thesisiff \form\triangle\bigwedge_i \varphi_i$,
and similarly for $\bigvee$).
In particular, for each normal form $\varphi$
there is a normal form $\psi$ such that $\form\triangle \varphi \thesisiff \psi$.
\end{example}

\begin{example}
\label{ex:log:distr}
As usual, the converse of $\ax{D}$ is derivable, and so is the dual
law
\(
  \bigwedge_{f \in \prod_{i \in I}J_i}\bigvee_{i \in I} \varphi_{i,f(i)}
  \,\thesisiff\,
  \bigvee_{i \in I}\bigwedge_{j \in J_i}\varphi_{i,j}
\)
(see e.g. \cite[Lemma VII.1.10]{johnstone82book}).
\end{example}

\begin{remark}
\label{rem:log:realto}
Taking $I = \emptyset$ in the last two rules of Fig.~\ref{fig:log:ded}
yields $\True \thesis \left(\psi \realto \True \right)$
and $\True \thesis \left(\False \realto \varphi \right)$.
The rule $\ax{F}$ would thus be unsound with $I = \emptyset$ and $\psi = \False$.
Rule $\ax{F}$ differs from usual systems for DTLF
(cf.\ \cite[\S 4.2]{abramsky91apal} \cite[Figure 5]{bk03ic}
and \cite[Figure 10.3]{ac98book}).
The case of $I = \emptyset$
will be handled by rule $\ax{C}$ in \eqref{eq:compl:cc}, \S\ref{sec:compl:fin}.
\end{remark}

\subsection{Refinement Types}
\label{sec:reft}

\emph{Refinement types} (or \emph{types}), notation $\RT,\RTbis,\dots$,
are given by the grammar
\[
\begin{array}{r @{\ \ }c@{\ \ } l}
    \RT
&   \bnf
&   \PT
\gs \reft{\PT \mid \varphi}
\gs \RT \times \RT
\gs \RT \arrow \RT
\end{array}
\]

\noindent
where $\PT$ is a pure type and $\varphi \in \Lang(\PT)$.
We shall consider typing judgments of the form
$\Env \thesis M : \RT$,
where $\Env$ is allowed to mention refinement types.
A judgment
$M : \reft{\PT \mid \varphi}$
is intended to mean that $M$ is of pure type $\PT$ and satisfies $\varphi$.

\begin{example}
\label{ex:reft:base}
Given a base type $\BT \in \Base$ and $\SP \sle \BT$,
a judgment of the form
\(
  p :
  \reft{\BT \arrow \Bool
  \mid
  \bigwedge_{a \in \SP}\left(\form a \realto \form{\term{tt}} \right)}
\)
expresses that 
$(p\, a)$ yields $\term{tt}$ for all $a \in \SP$.
\end{example}

\begin{table}[t!]
\begin{center}
\scalebox{0.9}{\(
\begin{array}{c}
\toprule

\multicolumn{1}{l}{\text{\textbf{Map on streams}
(with $\triangle$ either $\Box$, $\Diam$, $\Diam\Box$ or $\Box\Diam$)}}
\\
\begin{array}{*{7}{l}}
  \map
& :
& \reft{\PT \arrow \PTbis \mid \psi \realto \varphi}
& \longarrow
& \reft{\Stream \PT \mid \triangle \form{\hd}\psi}
& \longarrow
& \reft{\Stream \PTbis \mid \triangle \form{\hd}\varphi}
\end{array}

\\\midrule

\multicolumn{1}{l}{\text{\textbf{Filter on streams}
(with $\triangle$ either $\Box$ or $\Box\Diam$)}}
\\
\begin{array}{*{7}{l}}
  \filter
& :
& \reft{\BT \arrow \Bool \mid \bigwedge_{a \in \SP}(\form a \realto \form{\term{tt}})}
& \longarrow
& \reft{\Stream \PTbis \mid \triangle \form{\hd}\bigvee_{a \in \SP}\form a}
& \longarrow
& \reft{\Stream \PTbis \mid \Box \form{\hd}\bigvee_{a \in \SP}\form a}
\end{array}

\\\midrule

\multicolumn{1}{l}{\text{\textbf{Diagonal of streams of streams}
(with $\triangle$ either $\Box$ or $\Diam\Box$)}}
\\
\begin{array}{l l l l l}
  \diag
& :
& \reft{\Stream (\Stream \PTbis) \mid \triangle \form\hd \Box \form{\hd}\varphi}
& \longto 
& \reft{\Stream \PTbis \mid \triangle \form{\hd}\varphi}
\\
\end{array}

\\\midrule

\multicolumn{1}{l}{\text{\textbf{Breadth-first tree traversal}
(see Example~\ref{ex:reft:fun} for $\triangle$ and $\overline\triangle$)}}
\\

\begin{array}{l l r c l}
  \bft
& :
& \reft{\Tree\PTbis \mid \triangle \form\lbl \varphi}
& \longto
& \reft{\Stream\PTbis \mid \overline\triangle \form\hd \varphi}
\end{array}

\\\bottomrule

\end{array}\)}
\end{center}
\caption{Some judgments with refinement types
(functions defined in Table~\ref{tab:ex}).%
\label{tab:reft}}
\end{table}

\begin{example}
\label{ex:reft:fun}
Table~\ref{tab:reft} presents some specifications,
expressed as refinement types, for functions defined in Table~\ref{tab:ex}
(see Example~\ref{ex:pure:fun}).

For the $\map$ function,
assuming $f : \reft{\PTbis \arrow \PT \mid \psi \realto \varphi}$,
if $\triangle$ is $\Box$ (resp.\ $\Diam, \Box\Diam, \Diam\Box$),
then the judgment expresses
that $(\map f)$
takes a stream with all (resp.\ some, infinitely many, ultimately all)
elements satisfying $\psi$ to a stream with all
(resp.\ some, infinitely many, ultimately all)
elements satisfying~$\varphi$.

The specifications for $\filter$ are the expected ones.
Let $p : \BT \to \Bool$ such that
$(p\, a)$ yields $\term{tt}$ for all $a \in \SP$.
If $\triangle$ is $\Box$ (resp.\ $\Box\Diam$)
then the judgment means that
$(\filter p)$ takes a stream with all (resp.\@ infinitely many) elements in $\SP$
to a stream with all elements in $\SP$.
Recalling that the stream formula $\Box \form\hd \form a$ amounts to
$\bigwedge_{n \in \NN} \Next^n \form\hd \form a$,
note that none of the formulae $\Next^n \form\hd \form a$ hold on
$\Omega_{\Stream\BT} : \Stream\BT$.

Concerning the diagonal,
if $\triangle$ is $\Box$ (resp.\ $\Diam\Box$),
then the judgment expresses that $\diag$ takes a stream whose component streams
all (resp.\@ ultimately all) satisfy $\Box \form\hd \varphi$
to a stream whose elements all (resp.\ ultimately all)
satisfy $\varphi$.

For the tree traversal $\bft$ we can allow for any sound
combination of $\triangle$ and $\overline\triangle$.
This includes all pairs $(\triangle,\overline\triangle)$
among
$(\forall \Box, \Box)$,
$(\exists \Box, \Box\Diam)$,
$(\exists \Diam, \Diam)$,
$(\forall \Diam, \Diam)$
and
$(\forall \Box \exists \Diam, \Box\Diam)$.
For instance,
if $\triangle$ is $\forall\Box$
(resp.\ $\exists\Diam, \forall\Box\exists\Diam$),
then
the judgment says that $\bft$
takes a tree with all (resp.\ some, infinitely many) node
labels satisfying $\varphi$ to a stream with all (resp.\ some, infinitely may)
elements satisfying~$\varphi$.
\end{example}

\begin{figure}[t!]
\[
\begin{array}{c}

\dfrac{}
  {\RT \subtype \UPT\RT}

\qquad

\dfrac{}
  {\PT \subtype \reft{\PT \mid \True}}

\qquad

\dfrac{\psi \thesis_{\PT} \varphi}
  {\reft{\PT \mid \psi} \subtype \reft{\PT \mid \varphi}}

\qquad

\dfrac{\RT \subtype \RTbis
  \qquad
  \RTbis \subtype \RTter}
  {\RT \subtype \RTter}

\\\\

\dfrac{\RT \subtype \RT'
  \qquad
  \RTbis \subtype \RTbis'}
  {\RT \times \RTbis \subtype \RT' \times \RTbis'}

\qquad

\dfrac{}{
  \reft{\PT \mid \varphi}
  \times
  \reft{\PTbis \mid \psi}
  \eqtype
  \reft{\PT \times \PTbis \mid \pair{\varphi,\psi}}}

\qquad

\dfrac{}
  {\RT \subtype \RT}

\\\\

\dfrac{\RTbis' \subtype \RTbis
  \qquad
  \RT \subtype \RT'}
  {\RTbis \arrow \RT \subtype \RTbis' \arrow \RT'}

\qquad

\dfrac{}
  {\reft{\PTbis \mid \psi} \arrow \reft{\PT \mid \varphi}
  \eqtype
  \reft{\PTbis \arrow \PT \mid \psi \realto \varphi}}

\end{array}
\]

\caption{Subtyping.%
\label{fig:reft:subtyping}}
\end{figure}

\begin{figure}[t!]
\begin{center}
\scalebox{0.89}{\(
\begin{array}{c}
\dfrac{(x:\RT) \in \Env}
  {\Env \thesis x:\RT}

\qquad\qquad

\dfrac{\Env,x:\RTbis \thesis M : \RT}
  {\Env \thesis \lambda x.M : \RTbis \arrow \RT}

\qquad\qquad

\dfrac{\Env \thesis M : \RTbis \arrow \RT
  \qquad
  \Env \thesis N : \RTbis}
  {\Env \thesis M N : \RT}

\\\\

\dfrac{\Env \thesis M : \RT
  \qquad
  \Env \thesis N : \RTbis}
  {\Env \thesis \pair{M,N} : \RT \times \RTbis}

\qquad\qquad

\dfrac{\Env \thesis M : \RT \times \RTbis}
  {\Env \thesis \pi_1(M) : \RT}

\qquad\qquad

\dfrac{\Env \thesis M : \RT \times \RTbis}
  {\Env \thesis \pi_2(M) : \RTbis}

\\\\

\dfrac{
  \begin{array}{l}
  \UPT\Env \thesis M : \PT
  \\
  \text{for each $i \in I$,}\quad \Env \thesis M : \reft{\PT \mid \varphi_i}
  \end{array}}
  {\Env \thesis M : \reft{\PT \mid \bigwedge_{i \in I} \varphi_i}}

\qquad\quad

\dfrac{
  \begin{array}{l}
  \UPT\Env, x:\PTbis, \UPT{\Env'} \thesis M : \UPT\RT
  \\
  \text{for each $i \in I$,}\quad
  \Env, x:\reft{\PTbis \mid \psi_i},\Env' \thesis M : \RT
  \end{array}}
  {\Env, x : \reft{\PTbis \mid \bigvee_{i \in I} \psi_i} , \Env' \thesis M : \RT}

\\\\

\dfrac{
  \Env' \subtype \Env
  \quad 
  \RT \subtype \RT'
  \quad
  \Env \thesis M : \RT}
  {\Env' \thesis M : \RT'}

\qquad

\dfrac{\Env \thesis \fix x.M : \reft{\PT \mid \psi}
  \quad
  \Env, x: \reft{\PT \mid \psi} \thesis M : \reft{\PT \mid \varphi}}
  {\Env \thesis \fix x.M : \reft{\PT \mid \varphi}}
~(\varphi,\psi \in \Lang_\land)

\\\\

\dfrac{\Env \thesis M : \reft{\PT_1 \times \PT_2 \mid \form{\pi_i} \varphi}}
  {\Env \thesis \pi_i(M) : \reft{\PT_i \mid \varphi}}
~(i=1,2)

\qquad\quad

\dfrac{\Env \thesis M_i : \reft{\PT_i \mid \varphi}
  \qquad
  \Env \thesis M_{3-i} : \PT_{3-i}}
  {\Env \thesis \pair{M_1,M_2} : \reft{\PT_1 \times \PT_2 \mid \form{\pi_i} \varphi}}
~(i=1,2)
\\\\

\dfrac{}
  {\Env \thesis a : \reft{\BT \mid \form a}}

\qquad\quad

\dfrac{
  \Env \thesis M : \reft{\BT \mid \form b}
  \qquad
  \Env \thesis N_b : \RT
  \qquad
  \text{for each $a \in A$,\quad} \UPT\Env \thesis N_a : \UPT\RT}
  {\Env \thesis \cse\ M\ \copair{a \mapsto N_a \mid a \in \BT} : \RT}

\\\\

\dfrac{\Env \thesis M : \reft{\PT[\rec\TV.\PT/\TV] \mid \varphi}}
  {\Env \thesis \fold(M) : \reft{\rec\TV.\PT \mid \form\fold \varphi}}

\qquad\quad

\dfrac{\Env \thesis M : \reft{\rec\TV.\PT \mid \form\fold \varphi}}
  {\Env \thesis \unfold(M) : \reft{\PT[\rec\TV.\PT/\TV] \mid \varphi}}

\end{array}\)}
\end{center}
\caption{Typing with refinement types.%
\label{fig:reft:reftyping}}
\end{figure}

Each refinement type $\RT$ has an \emph{underlying pure type} $\UPT\RT$
defined by induction from $\UPT\PT \deq \PT$
and $\UPT{\reft{\PT \mid \varphi}} \deq \PT$.
We write $\UPT\Env$ for the extension of $\UPT{\pl}$ to $\Env$.

We derive typing judgments $\Env \thesis M : \RT$
using the rules in Figure~\ref{fig:reft:reftyping}
augmented with all the typing rules of the pure system (\S\ref{sec:pure}).
Deduction on formulae (\S\ref{sec:log})
enters the type system via a subtyping relation $\RTbis \subtype \RT$.
Subtyping rules are presented in Figure~\ref{fig:reft:subtyping},
where $\RTbis \eqtype \RT$
stands for the conjunction of $\RTbis \subtype \RT$ and $\RT \subtype \RTbis$.
Note that for each $\PT$ we have $\PT \eqtype \reft{\PT \mid \True}$.
Subtyping is extended to typing contexts:
given $\Env = x_1:\RTbis_1,\dots,x_n:\RTbis_n$
and $\Env' = x_1:\RTbis'_1,\dots,x_n:\RTbis'_n$,
we let $\Env \subtype \Env'$ when $\RTbis_i \subtype \RTbis'_i$
for all $i =1,\dots,n$.
Note that if $\Env \thesis M :\RT$ is derivable
then so is $\UPT\Env \thesis M : \UPT\RT$.

The rules in Figures~\ref{fig:reft:reftyping} and~\ref{fig:reft:subtyping}
are direct adaptations of those in~\cite{abramsky91apal,bk03ic,jr21esop}.
In particular, the rule for $\fix$
(in which $\varphi,\psi \in \Lang_\land(\PT)$)
comes from~\cite{abramsky91apal}.

\begin{example}
The following rules are derived using the 
last rule in Figure~\ref{fig:reft:subtyping}.
\[
\begin{array}{c}
\dfrac{\Env,x : \reft{\PTbis \mid \psi} \thesis M : \reft{\PT \mid \varphi}}
  {\Env \thesis \lambda x.M : \reft{\PTbis \arrow \PT \mid \psi \realto \varphi}}

\quad

\dfrac{\Env \thesis M : \reft{\PTbis \arrow \PT \mid \psi \realto \varphi}
  \quad
  \Env \thesis N : \reft{\PTbis \mid \psi}}
  {\Env \thesis M N : \reft{\PT \mid \varphi}}
\end{array}
\]
\end{example}

\begin{restatable}{lemma}{LemReft}
\label{lem:reft}%
For each type $\RT$, there is a $\varphi \in \Lang(\UPT\RT)$
such that $\RT \eqtype \reft{\UPT\RT \mid \varphi}$.
\end{restatable}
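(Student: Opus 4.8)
The plan is to proceed by induction on the structure of the refinement type $\RT$, exhibiting in each case an explicit formula $\varphi \in \Lang(\UPT\RT)$ together with derivations of the two subtyping judgments witnessing $\RT \eqtype \reft{\UPT\RT \mid \varphi}$. The base cases are immediate: if $\RT = \PT$ is pure, take $\varphi = \True$ and use the rule $\PT \subtype \reft{\PT \mid \True}$ together with the fact (noted just after Figure~\ref{fig:reft:subtyping}) that $\PT \eqtype \reft{\PT \mid \True}$; if $\RT = \reft{\PT \mid \psi}$, take $\varphi = \psi$ and use reflexivity of $\subtype$. Note $\UPT\RT = \PT$ in both cases, so the required formula indeed lives in $\Lang(\UPT\RT)$.

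For the product case $\RT = \RT_1 \times \RT_2$, apply the induction hypothesis to get $\varphi_1 \in \Lang(\UPT{\RT_1})$ and $\varphi_2 \in \Lang(\UPT{\RT_2})$ with $\RT_i \eqtype \reft{\UPT{\RT_i} \mid \varphi_i}$. Then, using the congruence rule for $\times$ on subtyping, $\RT_1 \times \RT_2 \eqtype \reft{\UPT{\RT_1} \mid \varphi_1} \times \reft{\UPT{\RT_2} \mid \varphi_2}$, and the latter is $\eqtype \reft{\UPT{\RT_1} \times \UPT{\RT_2} \mid \pair{\varphi_1,\varphi_2}}$ by the dedicated product rule in Figure~\ref{fig:reft:subtyping}; since $\UPT{(\RT_1 \times \RT_2)} = \UPT{\RT_1} \times \UPT{\RT_2}$ and $\pair{\varphi_1,\varphi_2} = \form{\pi_1}\varphi_1 \land \form{\pi_2}\varphi_2 \in \Lang(\UPT{\RT_1} \times \UPT{\RT_2})$, we are done, using transitivity of $\subtype$ to compose. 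The arrow case $\RT = \RTbis \arrow \RT'$ is entirely parallel: the induction hypothesis gives $\psi \in \Lang(\UPT\RTbis)$ and $\varphi' \in \Lang(\UPT{\RT'})$ with $\RTbis \eqtype \reft{\UPT\RTbis \mid \psi}$ and $\RT' \eqtype \reft{\UPT{\RT'} \mid \varphi'}$, and then the contravariant-covariant congruence rule for $\arrow$ on subtyping yields $\RTbis \arrow \RT' \eqtype \reft{\UPT\RTbis \mid \psi} \arrow \reft{\UPT{\RT'} \mid \varphi'}$, which by the dedicated arrow rule is $\eqtype \reft{\UPT\RTbis \arrow \UPT{\RT'} \mid \psi \realto \varphi'}$; as before $\UPT{(\RTbis \arrow \RT')} = \UPT\RTbis \arrow \UPT{\RT'}$ and $\psi \realto \varphi' \in \Lang(\UPT\RTbis \arrow \UPT{\RT'})$ by the formation rule for $\realto$.

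There is essentially no hard step here: the statement is a straightforward structural induction, and each inductive case is an application of one congruence rule followed by one of the two ``collapsing'' equalities ($\reft{\cdot} \times \reft{\cdot} \eqtype \reft{\cdot \times \cdot \mid \pair{\cdot,\cdot}}$ and $\reft{\cdot} \arrow \reft{\cdot} \eqtype \reft{\cdot \arrow \cdot \mid \cdot \realto \cdot}$) together with transitivity. The only point requiring a moment's care is the bookkeeping that the formula produced actually belongs to $\Lang(\UPT\RT)$ and not some larger type — but this is immediate from the fact that $\UPT{\pl}$ commutes with $\times$ and $\arrow$, and from the formation rules in Figure~\ref{fig:modal}, which are exactly the clauses $\form{\pi_i}$, $\realto$, $\form\fold$ used to build $\pair{\varphi_1,\varphi_2}$ and $\psi \realto \varphi'$.
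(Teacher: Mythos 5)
Your proof is correct and follows essentially the same route as the paper's own: a structural induction on $\RT$ with $\varphi = \True$ for pure types, the identity for refinements, and the two collapsing equalities of Figure~\ref{fig:reft:subtyping} (composed with the congruence rules for $\times$ and $\arrow$) for the inductive cases. The paper's proof is just a terser version of yours.
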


Our goal is to devise extensions of this type system which
are sound and complete w.r.t.\ the usual Scott semantics,
the sense that given $\thesis M : \PT$,
\[
\begin{array}{c !{\quad}c!{\quad} c}
  \thesis M : \reft{\PT \mid \varphi}
& \text{if, and only if,}
& \text{$\varphi$ holds on $\I M$ in the Scott semantics.}
\end{array}
\]

\noindent
The Scott semantics is recalled in~\S\ref{sec:sem},
while \S\ref{sec:compl} discusses completeness.
In particular, all 
typing judgments in Table~\ref{tab:reft} (Example~\ref{ex:reft:fun})
will be derivable.
Those for $\filter$ and $\bft$ improve on \cite{jr21esop}
(see Example~\ref{ex:main}).

\section{Semantics}
\label{sec:sem}

\subsubsection{Scott Domains.}
\label{sec:scott}
We shall interpret pure types as Scott domains and terms as Scott-continuous functions.
We mostly use the terminology of~\cite[\S 1]{ac98book}.
A \emph{dcpo} is a poset with all directed suprema.
A \emph{cpo} is dcpo with a least element (often denoted $\bot$).
A function between dcpos is \emph{Scott-continuous}
if it preserves the order (i.e.\ is monotone) as well as directed suprema.
A Scott-continuous function is \emph{strict} if it preserves
least elements.

\begin{definition}[Scott Domain]
\label{def:scott}
A \emph{Scott domain} is a bounded-complete algebraic cpo.
$\Scott$ is the category of Scott domains and Scott-continuous functions.
\end{definition}

Recall that a cpo $X$ is bounded-complete if
any two $x,y \in X$ have a sup (or \emph{least} upper bound)
$x \vee y \in X$ whenever they have an upper bound in $X$.

An element $x$ of a dcpo $X$ is \emph{finite}
if for all directed $D \sle X$ such that $x \leq \bigvee D$,
we have $d \in \down D$ (i.e.\ $x \leq d$ for some $d \in D$).%
\footnote{Finite elements are called \emph{compact} in~\cite{ac98book}.}
Note that $\bot$ is always finite,
and that if $d,d' \in X$ are finite,
then $d \vee d'$ is finite whenever it exists.
A dcpo $X$ is \emph{algebraic} if for each $x \in X$,
the set $\{ d\in X \mid \text{$d$ finite and $\leq x$} \}$
is directed and has sup $x$.

The category $\Scott$ is Cartesian-closed
(see e.g.\ \cite[Corollary 4.1.6]{aj95chapter}).

\subsubsection{Semantics of the Pure System.}
\label{sec:sem:pure}
Typed terms $\Env \thesis M : \PT$ of the pure system (\S\ref{sec:pure})
are interpreted as morphisms $\I M \colon \I\Env \to \I\PT$ in $\Scott$,
where $\I{\Env} = \prod_{i=1}^n\I{\PTbis_i}$ when
$\Env = x_1:\PTbis_1,\dots,x_n:\PTbis_n$.
This is well-known.

Base types $\BT \in \Base$ are interpreted as \emph{flat domains}
$\I\BT \deq \BT_\bot$,
where
$\BT_\bot$ is $\BT+\{\bot\}$ with $\BT$ discrete.
For each $a \in \BT$, we let
$\I{a} \colon \one \to \I{\BT}$
be the constant map of value $a$.
The term 
$\cse\ M\ \copair{a \mapsto N_a \mid a \in \BT}$
is interpreted using the strict Scott-continuous function
which takes $b \in \BT$ and $(y_a)_a \in X^{\BT}$ to $y_b$.

\begin{full}
Product types $\PT \times \PTbis$ are interpreted using
the Cartesian product of $\Scott$, i.e.\ the Cartesian product
of sets equipped with component-wise order.
Arrow types $\PTbis \to \PT$ are interpreted using the
closed structure of $\Scott$,
given by equipping each homset $\Scott(X,Y)$ with the pointwise order.
\end{full}

We refer to~\cite{ac98book,aj95chapter,streicher06book}
for the interpretation of recursive types $\rec\TV.\PT$.%
\opt{full}{\footnote{See Appendix~\ref{sec:proof:sem:pure} for details.}}%
\opt{short}{\footnote{See also \cite{rk25full} for details.}}%

Term-level fixpoints $\fix x.M$ are interpreted
using the usual fixpoint combinators $\term Y \colon (X \to X) \to X$
taking $f \colon X \to X$ to $\term Y(f) \deq \bigvee_{n \in \NN} f^n(\bot)$.

\begin{example}
\label{ex:scott:stream-tree}
The domain $\I{\Stream\PTbis}$
of streams 
(resp.\ $\I{\Tree\PTbis}$ of trees)
is $\I\PTbis^K$ equipped with the pointwise order,
where $K = \NN$ (resp.\ $K = \two^*$).
The finite elements are those $z \in \I\PTbis^K$ such that $z(p)$ is finite in
$\I\PTbis$ for all $p \in K$,
and $z(p) \neq \bot$ for at most finitely many $p \in K$.

Given $x \in \I{\Stream\PTbis}$,
we have $\I\hd(x) = x(0)$
while $\I\tl(x)$ is the stream taking $n \in \NN$ to $x(n+1) \in \I\PTbis$.
Moreover, $x = \I\Cons(\I\hd(x),\I\tl(x))$.%
\footnote{Note that $\I{\Stream\BT}$ differs from the usual \emph{Kahn domain}
$\BT^* \cup \BT^\omega$
(see e.g.~\cite[Definition 3.7.5 and Example 5.4.4]{vickers89book}
or~\cite[\S 7.4]{dst19book}, see also~\cite{vvk05concur}).}

Similarly, if $y \in \I{\Tree\PTbis}$
then $\I\lbl(y) = y(\es)$ is the root label of $y$,
while $\I\lft(y)$ and $\I\rght(y)$ are the left- and right-subtrees
of $y$, respectively.
\end{example}

\subsubsection{Scott Topology.}
\label{sec:top}
The semantics of refinement types involves some topology.
We refer to e.g.\ \cite[\S 1.2]{ac98book},
\cite[\S 2.3]{aj95chapter} or \cite[\S 7.1]{gg24book}.
See also~\cite{rs24jfla}.

Let $(X,\leq)$ be a dcpo.
A set $\SP \sle X$ is \emph{Scott-open}
if $\SP$ is upward-closed
(if $x \in \SP$ and $x \leq y$ in $X$, then $y \in \SP$),
and if moreover $\SP$ is inaccessible by directed sups,
in the sense that if $\bigvee D \in \SP$
with $D \sle X$ directed, then $D \cap \SP \neq \emptyset$.
This equips $X$ with a topology, called the \emph{Scott topology}.%
\footnote{Moreover, we have $x \leq y$ if, and only if,
$x \in \SP$ implies $y \in \SP$ for every Scott-open $\SP$.}
A function between dcpos is Scott-continuous
precisely when it is continuous for the Scott topology.

If $X$ is algebraic, then the Scott-opens are exactly the unions of sets of the form
$\up d = \{ x \in X \mid d \leq x\}$, with $d$ finite in $X$.
Note that $\up d$ is a compact subset of $X$ when $d$ is finite in $X$.
If $X$ is a Scott domain, then $\up d \cap \up d'$
is compact for all finite $d,d' \in X$
(by bounded-completeness,
if $\up d \cap \up d'$ is non-empty, then $d \vee d'$
is defined, finite and such that $\up (d \vee d') = \up d \cap \up d'$).%
\footnote{It is well-known that Scott domains are \emph{coherent} topological spaces
(see \cite[Proposition 4.2.17, \S 4.2.3]{aj95chapter},
and also \cite[Definition 5.2.21]{goubault13book}
and \cite[\S 2.3]{gg24book}).}

A set $\SP \sle X$ is \emph{saturated}
if $\SP$ is upward-closed,
or equivalently if $\SP$ is an intersection of Scott-open sets
(see e.g.\ \cite[Proposition 4.2.9]{goubault13book}).

\subsubsection{Semantics of Formulae.}
\label{sec:sem:log}
For each $\varphi \in \Lang(\PT)$ we define a set
$\I\varphi \sle \I\PT$
using the following \emph{semantic modalities}:
$\I{\form a} \deq \{a\} \sle \I\BT$
for $\BT \in \Base$ and $a \in \BT$,
and
\begin{equation*}
\begin{array}{r !{~}c!{~} l l l}
  \SP \in \Po(\I{\PT_i})
& \longmapsto
& \I{\form{\pi_i}}(\SP)
& \deq
& \left\{
  x \in \I{\PT_1 \times \PT_2}
  \mid
  \pi_i(x) \in \SP
  \right\}
\\

  \SP \in \Po(\I{\PT[\rec\TV.\PT/\TV]})
& \longmapsto
& \I{\form\fold}(\SP)
& \deq
& \left\{
  x \in \I{\rec\TV.\PT}
  \mid
  \I\unfold(x) \in \SP
  \right\}
\\

  \SP \in \Po(\I\PTbis)
  \,,\,
  \SPbis \in \Po(\I\PT)
& \longmapsto
& (\SP \realto \SPbis)
& \deq
& \left\{
  f \in \I{\PTbis \arrow \PT}
  \mid
  \forall x \in \SP,~ f(x) \in \SPbis
  \right\}
\end{array}
\end{equation*}

\noindent
We let
$\I{\form{\pi_i}\varphi} \deq \I{\form{\pi_i}}(\I\varphi)$,
$\I{\form{\fold}\varphi} \deq \I{\form{\fold}}(\I\varphi)$,
and
$\I{\psi \realto \varphi} \deq \I\psi \realto \I\varphi$.
Conjunctions and disjunctions are interpreted as intersections
and unions. 

\begin{example}
\label{ex:sem:modalmu}
Assume given \emph{propositional variables}
$p^\PT,\dots$ for each pure type $\PT$.
If a formula $\varphi(p^\PT)$ of type $\PT$ is positive in $p^\PT$,
then it induces a monotone function on $(\Po(\I\PT),\sle)$
with least and greatest fixpoints
\(
  \I{\mu p.\varphi}
  =
  \llbracket \bigvee_{\alpha \leq |\Po(\I\PT)|} \varphi^\alpha(\False) \rrbracket
\)
and
\(
  \I{\nu p.\varphi}
  =
  \llbracket\bigwedge_{\alpha \leq |\Po(\I\PT)|} \varphi^\alpha(\True) \rrbracket
\)
\cite[\S 20]{gtw02alig}.
This generalizes Examples~\ref{ex:form:stream}, \ref{ex:form:tree}.
\end{example}

Lemmas \ref{lem:top:char:fin}, \ref{lem:top:char}
below are semantic characterizations of 
the classes of formulae in Definition~\ref{def:form} (\S\ref{sec:log}).
This yields the soundness of the rule $\ax{F}$
in Figure~\ref{fig:log:ded} (\S\ref{sec:log}).

\begin{restatable}{lemma}{LemTopCharFin}
\label{lem:top:char:fin}%
Given $\varphi \in \Lang_\land(\PT)$,
if $\I\varphi \neq \emptyset$ then
$\I\varphi = \up d$ for some finite $d \in \I\PT$.
Conversely, if $d \in \I\PT$ is finite, then $\up d = \I\varphi$ for some
$\varphi \in \Lang_{\land}(\PT)$.
\end{restatable}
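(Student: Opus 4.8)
The plan is to prove the two implications by separate inductions: the first statement by induction on the formula $\varphi\in\Lang_\land(\PT)$, the second by induction on the pure type $\PT$. Both directions rest on the standard description of the finite elements of the domains built by $\times$ and $\arrow$ in $\Scott$, together with bounded-completeness. For the forward direction, assume $\I\varphi\neq\emptyset$ and argue on the shape of $\varphi$. The atomic case $\varphi=\form a$ gives $\I\varphi=\{a\}=\up a$, with $a$ finite in the flat domain $\I\BT$, and $\varphi=\False$ is vacuous. For a finite conjunction $\varphi=\bigwedge_{i\in I}\varphi_i$, each $\I\varphi_i$ contains $\I\varphi$, hence is non-empty, so the induction hypothesis gives $\I\varphi_i=\up d_i$ with $d_i$ finite; non-emptiness of $\bigcap_i\up d_i$ means the finite family $(d_i)_i$ is bounded, so by bounded-completeness it has a finite least upper bound $d$ with $\up d=\bigcap_i\up d_i=\I\varphi$ (the empty conjunction yields $d=\bot$). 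For $\form{\pi_i}\psi$ and $\form\fold\psi$ I would pull back the up-set $\up e=\I\psi$ given by induction along $\pi_i$, resp.\ along the isomorphism $\I\unfold$, obtaining $\up(e,\bot)$, resp.\ $\up(\I\fold(e))$, again a finite element. For $\psi\realto\chi$: if $\I\psi=\emptyset$ then $\I\varphi$ is the whole function domain $=\up\bot$; otherwise $\I\psi=\up e$, and since $\I\psi\neq\emptyset$ and $\I\varphi\neq\emptyset$ together force $\I\chi\neq\emptyset$, we get $\I\chi=\up e'$, so monotonicity gives $\I\varphi=\{f\mid f(e)\ge e'\}=\up(e\searrow e')$, the up-set of the single-step function $e\searrow e'$, which is a finite element of the Scott domain $\Scott(\I\PTbis,\I\PT)$ (see e.g.\ \cite{aj95chapter}).

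For the converse, given a finite $d\in\I\PT$ I would exhibit $\varphi\in\Lang_\land(\PT)$ with $\up d=\I\varphi$ by induction on $\PT$. At base types $d$ is $\bot$ or some $a$, handled by $\True$ or $\form a$. At a product type $d=(d_1,d_2)$ with $d_i$ finite, and the induction hypothesis turns $\up d_i$ into $\I{\varphi_i}$, so $\varphi=\form{\pi_1}\varphi_1\land\form{\pi_2}\varphi_2$ works. At an arrow type $\PTbis\arrow\PT'$, I use that every finite element is a finite join $\bigvee_{k\le n}(e_k\searrow e_k')$ of single-step functions with finite data; the induction hypothesis converts $\up e_k$ and $\up e_k'$ into formulae $\psi_k\in\Lang_\land(\PTbis)$ and $\chi_k\in\Lang_\land(\PT')$, and since $\up(e_k\searrow e_k')=\I{\psi_k\realto\chi_k}$ and the up-set of a join is the intersection of the up-sets, $\varphi=\bigwedge_{k\le n}(\psi_k\realto\chi_k)$ does the job.

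The main obstacle is the recursive case $\PT=\rec\TV.\PT'$, where plain structural induction breaks down because $\PT'[\rec\TV.\PT'/\TV]$ is not a subexpression of $\rec\TV.\PT'$. I would handle it through the inverse-limit presentation of $\I{\rec\TV.\PT'}$: there is an increasing chain of idempotent deflations $\delta_n$ with $\delta_0=\bot$, with $\delta_{n+1}$ obtained by conjugating the functorial action of $\PT'$ applied to $\delta_n$ through $\I\fold,\I\unfold$, and with $\bigvee_n\delta_n=\mathrm{id}$; since $d$ is finite, $d=\delta_n(d)$ for some $n$, and one then inducts on $n$, the successor step being a nested structural induction on $\PT'$ in which the occurrences of $\TV$ are resolved by the induction hypothesis on $n$. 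This is the standard ingredient behind the DTLF constructions of \cite{abramsky91apal,bk03ic}, and I expect the bookkeeping of this nested induction, rather than any conceptual difficulty, to be the bulk of the work.
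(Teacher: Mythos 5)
Your proposal is correct and, for the forward direction, follows the paper's proof essentially verbatim: induction on $\varphi\in\Lang_\land(\PT)$, with conjunctions handled by bounded-completeness, the modalities $\form{\pi_i}$ and $\form\fold$ by pulling back up-sets along $\pi_i$ and $\I\unfold$, and $\realto$ by step functions (including the degenerate case $\I\psi=\emptyset$, where the interpretation is $\up\bot$). For the converse, you correctly identify the one real obstacle --- structural induction on $\PT$ breaks at $\rec\TV.\PT$ --- and your fix (an increasing chain of deflations $\delta_n$ with $\bigvee_n\delta_n=\mathrm{id}$, so that a finite $d$ lives at some finite stage, followed by induction on that stage) is sound and is exactly the standard fact the paper also relies on. The paper packages this differently: it first establishes an inductive characterization $\Fin(\I\PT)$ of the finite elements (Figure~\ref{fig:proof:sem:finelt} together with Proposition~\ref{prop:proof:scott:fin}, whose recursive-type clause is justified by the colimit description of the solution of the domain equation, i.e.\ precisely your deflation argument), and then proves the converse by induction on the derivation of $d\in\Fin(\I\PT)$ rather than on the type. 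The two routes rest on the same domain theory; the paper's factorization isolates the bookkeeping you anticipate into a single reusable lemma, so that the recursive-type case of the converse reduces to one rule application ($\I\fold(d)$ is finite whenever $d$ is, and $\up\I\fold(d)=\I{\form\fold\varphi}$). Your version would work, but you should be explicit that the nested induction is on the pair (stage $n$, structure of the body $\PT'$), with occurrences of $\TV$ resolved by the hypothesis at stage $n-1$; as written that step is only sketched.
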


\begin{restatable}{lemma}{LemTopChar}
\label{lem:top:char}%
A set $\SP \sle \I\PT$ is saturated (resp.\ Scott-open)
if, and only if, there is a formula $\varphi \in \Lang(\PT)$
(resp.\@ $\varphi \in \Lang_\Open(\PT)$)
such that $\SP = \I\varphi$.

In particular, for each $\varphi \in \Lang(\PT)$
we have $\I\varphi = \I\psi$ for some $\psi \in \Norm(\PT)$.
\end{restatable}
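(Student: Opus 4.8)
The plan is to prove the two equivalences (saturated $\leftrightarrow$ $\Lang$, Scott-open $\leftrightarrow$ $\Lang_\Open$) in parallel, with the $\Lang_\land$ case (Lemma~\ref{lem:top:char:fin}) as the semantic building block, and then to deduce the normal-form statement from the Scott-open direction together with the distributivity law $\ax{D}$. For the \emph{right-to-left} directions I would argue by induction on the structure of $\varphi$: each semantic modality $\I{\form{\pi_i}}(\pl)$, $\I{\form\fold}(\pl)$, and $\SP \mapsto (\SP \realto \SPbis)$ preserves saturation (upward-closedness), and preserves Scott-openness in the appropriate argument — for $\realto$, the key point is that $f \mapsto f(x)$ is Scott-continuous, so $(\SP \realto \SPbis)$ is Scott-open when $\SPbis$ is, uniformly in any $\SP$; and $\I{\form{\pi_i}}$, $\I{\form\fold}$ are preimages under the continuous maps $\pi_i$, $\I\unfold$, hence preserve openness. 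Arbitrary intersections and unions of saturated sets are saturated, arbitrary unions of opens are open, and \emph{finite} intersections of opens are open; since $\Lang_\Open$ allows only finite conjunctions, the induction goes through. The base case $\I{\form a} = \{a\} = \up a$ in the flat domain $\I\BT$ is upward-closed and (since $a$ is finite) Scott-open, covering both claims.

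For the \emph{left-to-right} directions I would use algebraicity of Scott domains: every Scott-open set is a union $\bigcup_{d \in F} \up d$ of basic opens with $d$ finite, and every saturated (upward-closed) set is an intersection of Scott-opens, hence expressible in the form $\bigcap_{i \in I}\bigcup_{d \in F_i}\up d$. So it suffices to realize each $\up d$, with $d$ finite in $\I\PT$, as $\I{\psi_d}$ for some $\psi_d \in \Lang_\land(\PT)$ — and this is exactly the converse half of Lemma~\ref{lem:top:char:fin}. Then a Scott-open $\SP = \bigcup_{d}\up d$ is $\I{\bigvee_d \psi_d} \in \Lang_\Open(\PT)$, and a saturated $\SP = \bigcap_i \bigcup_{d \in F_i}\up d$ is $\I{\bigwedge_i \bigvee_{d \in F_i}\psi_d} \in \Lang(\PT)$. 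Finally, for the normal-form statement: given $\varphi \in \Lang(\PT)$, the set $\I\varphi$ is saturated, so by what was just shown $\I\varphi = \I\psi$ with $\psi = \bigwedge_i \bigvee_{d \in F_i}\psi_d$ of the required shape $\psi \in \Norm(\PT)$ (each $\psi_d \in \Lang_\land(\PT)$); alternatively one can stay syntactic and push $\varphi$ into normal form using the basic deduction rules — $\form\triangle$ commutes with $\bigwedge,\bigvee$ (Example~\ref{ex:log:modalnf}), $\realto$ distributes over the connectives via rule $\ax{F}$ and the last two rules of Fig.~\ref{fig:log:ded}, and finite conjunctions of disjunctions are turned into disjunctions of conjunctions by $\ax{D}$ and its converse (Example~\ref{ex:log:distr}) — but the semantic route is shorter and suffices for the stated (semantic) equality $\I\varphi = \I\psi$.

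The main obstacle I anticipate is the $\realto$ clause in the right-to-left direction for Scott-openness, and more precisely making sure the induction hypothesis is the right one: a formula $\psi \realto \varphi \in \Lang_\Open(\PTbis \arrow \PT)$ forces, by the grammar in Definition~\ref{def:form}, that $\psi \in \Lang_\land(\PTbis)$ (the \emph{only} $\realto$-formulae are those built directly over modalities, with the antecedent in $\Lang_\land$), so $\I\psi$ is either empty or $\up d$ for finite $d$ by Lemma~\ref{lem:top:char:fin}, and $\varphi$ on the right may be any open; then $\I{\psi \realto \varphi} = \{f : f(x)\in\I\varphi \ \forall x\in\I\psi\}$. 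If $\I\psi = \up d$ this is $\{f : f(d) \in \I\varphi\}$ by monotonicity, a preimage of the open $\I\varphi$ under the Scott-continuous evaluation $f \mapsto f(d)$, hence open; if $\I\psi = \emptyset$ it is the whole space, also open. So one must be careful that "$\realto$ preserves openness of its second argument for \emph{arbitrary} first argument" is what is needed and what holds — no openness hypothesis on $\I\psi$ is required, only that the quantifier ranges over a set with a least element when non-empty, which is guaranteed by the syntactic restriction. I would flag this as the one place where the precise shape of $\Lang$ versus $\Lang_\Open$ (and Lemma~\ref{lem:top:char:fin}) is genuinely used, rather than a routine structural recursion.
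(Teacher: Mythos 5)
Your proposal is correct and follows essentially the same route as the paper: both directions reduce to Lemma~\ref{lem:top:char:fin} (algebraicity writes every open as a union, and every saturated set as an intersection of unions, of sets $\up d$ realized by $\Lang_\land$-formulae; the converse is a structural induction using closure of saturated sets under all unions and intersections, of opens under unions and finite intersections, and the preservation properties of the semantic modalities), and the normal-form claim falls out of the saturated characterization exactly as in the paper. The one caveat is that your first-paragraph claim that $(\SP \realto \SPbis)$ is Scott-open for \emph{arbitrary} $\SP$ is false as stated (it is an arbitrary intersection of opens), but your final paragraph correctly supersedes it by restricting to $\I\psi$ empty or of the form $\up d$ — which is what the grammar of $\Lang_\Open$ forces, since a $\realto$-formula can only occur there inside an $\Lang_\land$-subformula.
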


\begin{restatable}[Soundness of Deduction]{proposition}{PropSemSoundDed}
\label{prop:sem:sound:ded}%
If $\psi \thesis \varphi$ is derivable in the basic deduction system 
in Figure~\ref{fig:log:ded} (\S\ref{sec:log}),
then $\I\psi \sle \I\varphi$.
\end{restatable}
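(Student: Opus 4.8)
The plan is to proceed by induction on the derivation of $\psi \thesis \varphi$ in the basic deduction system of Figure~\ref{fig:log:ded}, showing for each rule that if the semantic inclusions hold for the premises then they hold for the conclusion. For the reflexivity rule ($\varphi \thesis \varphi$) and transitivity rule the claim is immediate. For the clash rule $\form a \land \form b \thesis_{\BT} \False$ with $a \neq b$, note $\I{\form a} \cap \I{\form b} = \{a\} \cap \{b\} = \emptyset = \I\False$. The rules governing arbitrary conjunctions and disjunctions are routine properties of intersection and union in $\Po(\I\PT)$: e.g.\ $\psi \thesis \varphi_i$ for all $i$ gives $\I\psi \sle \bigcap_i \I{\varphi_i} = \I{\bigwedge_i \varphi_i}$, and dually for disjunctions. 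The distributivity rule $\ax{D}$ holds because in any complete lattice $\bigwedge_{i}\bigvee_{j \in J_i} x_{i,j} = \bigvee_{f \in \prod_i J_i}\bigwedge_i x_{i,f(i)}$ when the lattice is a frame (here $\Po(\I\PT)$ is a frame), which is exactly the infinite distributive law; I would cite \cite[Lemma VII.1.10]{johnstone82book} as in Example~\ref{ex:log:distr}.

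For the modality rules I would treat $\triangle \in \{\pi_1,\pi_2,\fold\}$ uniformly, using that each semantic modality $\I{\form\triangle}(\pl)$ is, by its defining formula $\I{\form\triangle}(\SP) = \{x \mid g(x) \in \SP\}$ for the appropriate continuous map $g$ (either a projection or $\I\unfold$), a preimage operator. Preimage commutes with arbitrary intersections and unions, so $\bigwedge_i \form\triangle\varphi_i \thesisiff \form\triangle\bigwedge_i\varphi_i$ and $\form\triangle\bigvee_i\varphi_i \thesisiff \bigvee_i\form\triangle\varphi_i$ hold semantically, and monotonicity of preimage gives $\psi\thesis\varphi \Rightarrow \form\triangle\psi\thesis\form\triangle\varphi$. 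The implication rules require unfolding $\I{\psi\realto\varphi} = \{f \mid \forall x \in \I\psi,\ f(x)\in\I\varphi\}$. Contravariant/covariant monotonicity ($\psi'\thesis\psi$, $\varphi\thesis\varphi'$ implies $\psi\realto\varphi \thesis \psi'\realto\varphi'$) is a direct check: if $f$ maps $\I\psi$ into $\I\varphi$ and $\I{\psi'}\sle\I\psi$, $\I\varphi\sle\I{\varphi'}$, then $f$ maps $\I{\psi'}$ into $\I{\varphi'}$. The two rules distributing $\realto$ over conjunctions in the codomain and disjunctions in the domain are equally elementary: $f$ maps $\I\psi$ into $\bigcap_i\I{\varphi_i}$ iff it maps it into each $\I{\varphi_i}$; and $f$ maps $\bigcup_i\I{\psi_i}$ into $\I\varphi$ iff it maps each $\I{\psi_i}$ into $\I\varphi$.

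The only genuinely delicate case is the rule $\ax{F}$, which states $\psi \realto \bigvee_{i\in I}\varphi_i \thesis \bigvee_{i\in I}(\psi\realto\varphi_i)$ for $\psi \in \Lang_\land(\PTbis)$ and $I \neq \emptyset$. Semantically this asks: if $f$ sends every $x\in\I\psi$ into $\bigcup_i\I{\varphi_i}$, then there is a single $i$ with $f(\I\psi)\sle\I{\varphi_i}$. This is where the hypothesis $\psi\in\Lang_\land$ is essential. By Lemma~\ref{lem:top:char:fin}, either $\I\psi = \emptyset$ — but then $\psi\realto\bigvee_i\varphi_i$ is unsound to reason from in the way the remark after Remark~\ref{rem:log:realto} warns, so this case is excluded precisely because $\ax{F}$'s soundness is what's being asserted; concretely if $\I\psi=\emptyset$ then $\I{\psi\realto\theta} = \I{\PTbis\arrow\PT}$ for every $\theta$, so in particular $\I{\psi\realto\varphi_i}$ is everything for any fixed $i\in I$ (using $I\neq\emptyset$), and the inclusion holds trivially — or $\I\psi = \up d$ for a finite $d\in\I\PTbis$. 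In the latter case, for $f\in\I{\psi\realto\bigvee_i\varphi_i}$ we have $f(d)\in\bigcup_i\I{\varphi_i}$, so $f(d)\in\I{\varphi_{i_0}}$ for some $i_0$; then for any $x\in\up d$, monotonicity of $f$ gives $f(d)\le f(x)$, and since $\I{\varphi_{i_0}}$ is saturated (upward-closed, by Lemma~\ref{lem:top:char}) we get $f(x)\in\I{\varphi_{i_0}}$. Hence $f\in\I{\psi\realto\varphi_{i_0}}\sle\I{\bigvee_i(\psi\realto\varphi_i)}$. I expect this $\ax{F}$ case — in particular correctly handling the $\I\psi=\emptyset$ subcase via $I\neq\emptyset$, and invoking saturation of $\I{\varphi_i}$ — to be the main point requiring care; everything else is a mechanical induction.
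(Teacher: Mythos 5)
Your proof is correct and follows essentially the same route as the paper's: a routine induction where the only delicate case is $\ax{F}$, handled exactly as the paper does --- splitting on whether $\I\psi$ is empty (using $I \neq \emptyset$) or equal to $\up d$ via Lemma~\ref{lem:top:char:fin}, then using monotonicity of $f$ together with saturation of $\I{\varphi_i}$ (Lemma~\ref{lem:top:char}) to propagate membership from $d$ to all of $\up d$.
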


\begin{proof}
We only detail the case of $\ax{F}$.
If $\I\psi = \emptyset$, then for all $\SP \sle \I\PT$
we have $\I\psi \realto \SP = \I{\PTbis \arrow \PT}$,
and we are done since $I$ is assumed to be non-empty.

Otherwise, we have $\I\psi = \up d$ by Lemma~\ref{lem:top:char:fin}.
Let $f \in \I{\PTbis \arrow \PT}$.
If $\up d$ is included in
$f^{-1}(\I{\bigvee_i \varphi_i}) = \bigcup_{i} f^{-1}(\I{\varphi_i})$,
then $d \in f^{-1}(\I{\varphi_i})$ for some $i$.
Hence $\up d \sle f^{-1}(\I{\varphi_i})$
as $f^{-1}(\I{\varphi_i})$ is saturated
($f^{-1}(\I{\varphi_i})$ is saturated
since $f$ is monotone and since $\I{\varphi_i}$ is saturated
by Lemma~\ref{lem:top:char}).
\qed
\end{proof}

\subsubsection{Semantics of Refinement Types.}
\label{sec:sem:reft}
The interpretation $\I\RT \sle \I{\UPT\RT}$
of a type $\RT$ is defined as
$\I{\reft{\PT \mid \varphi}} \deq \I\varphi$,
$\I{\RT \times \RTbis} \deq \I\RT \times \I\RTbis$
and
$\I{\RTbis \arrow \RT} \deq \I\RTbis \realto \I\RT$.

\begin{definition}[Sound Typing Judgement]
\label{def:sound:typing}
A judgment $\Env \thesis M : \RT$ with
$\Env = x_1:\RTbis_1,\dots,x_n:\RTbis_n$
is \emph{sound} if $\UPT\Env \thesis M :\UPT\RT$ is derivable and
if moreover $\I M(u_1,\dots,u_n) \in \I\RT$
whenever $u_i \in \I{\RTbis_i}$ for all $i =1,\dots,n$.
\end{definition}

\noindent
The judgments in
Tab.~\ref{tab:reft} (Ex.~\ref{ex:reft:fun}) are sound.
Also, derivable judgments are sound.

\begin{theorem}[Soundness of Typing]
\label{thm:sem:sound:reft}
If $\Env \thesis M : \RT$ is derivable in the system of~\S\ref{sec:reft},
then $\Env \thesis M : \RT$ is sound.
\end{theorem}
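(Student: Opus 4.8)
The plan is to argue by induction on the derivation of $\Env \thesis M : \RT$, after first settling the soundness of subtyping. So I would begin by showing, by induction on the derivation of a subtyping judgment in Figure~\ref{fig:reft:subtyping}, that $\RT \subtype \RTbis$ implies $\UPT\RT = \UPT\RTbis$ and $\I\RT \sle \I\RTbis$, and then extend this componentwise to $\Env \subtype \Env'$ (so $\I\Env \sle \I{\Env'}$). Here the reflexivity and transitivity rules, as well as $\RT \subtype \UPT\RT$ and $\PT \subtype \reft{\PT \mid \True}$, are immediate from $\I{\reft{\PT \mid \varphi}} = \I\varphi \sle \I\PT$ and $\I\True = \I\PT$; the rule deriving $\reft{\PT \mid \psi} \subtype \reft{\PT \mid \varphi}$ from $\psi \thesis \varphi$ is exactly Proposition~\ref{prop:sem:sound:ded}; the congruence rules for $\times$ and $\arrow$ hold because $\SP \mapsto \SP \times \SPbis$ and $\SPbis \mapsto \SP \times \SPbis$ are monotone while $(\SP,\SPbis) \mapsto (\SP \realto \SPbis)$ is antitone in $\SP$ and monotone in $\SPbis$ on $(\Po(\I\PT),\sle)$; and the two $\eqtype$ rules follow by unfolding $\I{\pair{\varphi,\psi}}$ and $\I{\psi \realto \varphi}$ and comparing with $\I\RT \times \I\RTbis$ and $\I\RTbis \realto \I\RT$.

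For the main induction I would check that every typing rule — the pure rules of Figure~\ref{fig:reft:puretyping} together with the rules of Figure~\ref{fig:reft:reftyping} — preserves soundness, exploiting that the Scott semantics is compositional, that $\I\RTbis \sle \I{\UPT\RTbis}$, and that a derivable judgment always carries $\I M \colon \I{\UPT\Env} \to \I{\UPT\RT}$. For a pure rule the claim reduces to the (already known) well-definedness of this interpretation. The variable rule is trivial. For $\lambda$-abstraction, application, pairing and projections one uses the definitions of the semantic modalities $\I{\form{\pi_i}}$ and $\realto$ and the Cartesian-closed structure of $\Scott$: e.g.\ for application, if $\I M(\vec u) \in \I\RTbis \realto \I\RT$ and $\I N(\vec u) \in \I\RTbis$ then $\I{M N}(\vec u) = \I M(\vec u)(\I N(\vec u)) \in \I\RT$. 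The base-type rules use $\I{\form a} = \{a\}$: from $\I M(\vec u) \in \I{\form b}$ one gets $\I M(\vec u) = b$, so $\cse\ M\ \copair{a \mapsto N_a \mid a \in \BT}$ evaluates to its $b$-branch $N_b$, which has the required type by hypothesis. The $\fold$ and $\unfold$ rules use that $\I\fold$ and $\I\unfold$ are mutually inverse together with the definition of $\I{\form\fold}$. The rule introducing $\bigwedge_{i \in I}\varphi_i$ and the rule eliminating $\bigvee_{i \in I}\psi_i$ from the context are immediate from $\I{\bigwedge_i \varphi_i} = \bigcap_i \I{\varphi_i}$ and $\I{\bigvee_i \psi_i} = \bigcup_i \I{\psi_i}$, the empty-index subcases being handled by the auxiliary premises $\UPT\Env \thesis M : \PT$ and $\UPT\Env, x:\PTbis, \UPT{\Env'} \thesis M : \UPT\RT$ (with $\I\False = \emptyset$ for the latter). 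Finally, the subsumption rule is precisely the soundness of subtyping proved above.

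The one case needing genuine thought is the rule for $\fix$. Fix $u_i \in \I{\RTbis_i}$, write $\vec u = (u_1,\dots,u_n)$, and set $f_{\vec u} \deq \big( v \mapsto \I M(\vec u, v) \big) \colon \I\PT \to \I\PT$. The Scott semantics interprets the term-level fixpoint so that $\I{\fix x.M}(\vec u) = \term Y(f_{\vec u})$, and $\term Y(f_{\vec u})$ is a fixpoint of $f_{\vec u}$, i.e.\ $f_{\vec u}\big(\term Y(f_{\vec u})\big) = \term Y(f_{\vec u})$. Soundness of the first premise $\Env \thesis \fix x.M : \reft{\PT \mid \psi}$ gives $\term Y(f_{\vec u}) = \I{\fix x.M}(\vec u) \in \I\psi$, and soundness of the second premise $\Env, x \colon \reft{\PT \mid \psi} \thesis M : \reft{\PT \mid \varphi}$ gives that $f_{\vec u}$ sends $\I\psi$ into $\I\varphi$. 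Combining, $\I{\fix x.M}(\vec u) = f_{\vec u}\big(\term Y(f_{\vec u})\big) \in \I\varphi = \I{\reft{\PT \mid \varphi}}$, as required. This uses only that $\term Y(f_{\vec u})$ is a fixpoint, not Scott continuity, so the side condition $\varphi,\psi \in \Lang_\land$ is not needed for this proof (it matters for completeness, \S\ref{sec:compl}).
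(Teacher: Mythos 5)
The paper states this theorem without proof, even in the full version with appendices (the proof appendix for \S\ref{sec:sem} covers only the pure semantics and the semantics of formulae), so there is no official argument to compare against; your proof supplies exactly the expected routine induction and is correct. In particular your handling of the one delicate case, the $\fix$ rule, is right — since the first premise already places $\term Y(f_{\vec u})$ in $\I\psi$, a single application of $f_{\vec u}$ suffices — and your observation that the side condition $\varphi,\psi \in \Lang_\land$ is irrelevant to soundness (it matters only for the completeness machinery of \S\ref{sec:compl}) is accurate.
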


\section{Completeness}
\label{sec:compl}

\subsubsection{The Finite Case.}
\label{sec:compl:fin}
Since the rule $\ax{F}$ assumes $I\neq \emptyset$,
it does not allow us to derive 
$(\psi \realto \False) \thesis \False$.
This sequent is sound only when $\I\psi \neq \emptyset$.
In \cite{abramsky91apal},
Abramsky introduced \emph{coprimeness predicates} 
which select those finite $\varphi$ with $\I\varphi \neq \emptyset$.
We extend our basic deduction system (Figure \ref{fig:log:ded} in \S\ref{sec:log})
with the predicate $\C$ and the rules in eq.\ \eqref{eq:compl:cc} below.
Recall that $\pair{\varphi,\psi} = \form{\pi_1}\varphi \land \form{\pi_2}\psi$.
\begin{equation}
\label{eq:compl:cc}
\scalebox{.9125}{\text{$\begin{array}{c}

\dfrac{}{\C(\True)}

\quad

\dfrac
  {\text{$\BT \in \Base$ and $a \in \BT$}}
  {\C(\form a)}

\quad

\dfrac{\C(\varphi)}
  {\C(\form\fold \varphi)}

\quad

\dfrac{\C(\varphi) 
  \quad
  \C(\psi)}
  {\C(\pair{\varphi,\psi})}

\quad

\dfrac{\C(\psi)
  \quad
  \psi \thesis \varphi
  \quad
  \varphi \in \Lang_\land}
  {\C(\varphi)}
\\\\

\ax{C}
\dfrac{\C(\psi)}
  {(\psi \realto \False) \,\thesis\, \False}

\quad

\dfrac{\begin{array}{l}
  \text{$I$ finite and $\forall i \in I$,}~
  \C(\psi_i) 
  ~\text{and}~
  \C(\varphi_i) ;
  \\
  \text{$\forall J \sle I$,}~
  \bigwedge_{j \in J} \psi_j \thesis \False
  ~~\text{or}~~
  \C\left( \bigwedge_{j \in J} \varphi_j \right)
  \end{array}}
  {\C\left( \bigwedge_{i \in I}(\psi_i \realto \varphi_i) \right)}
\end{array}$}}
\end{equation}

\noindent
In contrast with \cite{abramsky91apal,bk03ic,ac98book},
our $\C$ is a consistency predicate rather than a coprimeness predicate.
Note that the clauses defining $\C$ are positive.%
\footnote{Compare with \cite[Figure 3]{bk03ic} and \cite[Figure 10.3]{ac98book}.}

\begin{restatable}{proposition}{PropComplFinDed}
\label{prop:compl:fin:ded}
In the extension of Figure~\ref{fig:log:ded} (\S\ref{sec:log})
with eq.\ \eqref{eq:compl:cc}:
\begin{enumerate}[(1)]
\item
for all $\varphi,\psi \in \Lang_\land(\PT)$,
we have
$\psi \thesis_\PT \varphi$
if, and only if,
$\I\psi \sle \I\varphi$;

\item
for all $\varphi \in \Lang_\land$,
we have
$\C(\varphi)$
if, and only if,
$\I\varphi \neq \emptyset$.
\end{enumerate}
\end{restatable}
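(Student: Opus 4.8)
The plan is to prove both statements by mutual induction, with (2) providing the missing piece that (1) needs to handle the $\realto$-modality and falsity. Throughout, I rely on Lemma~\ref{lem:top:char:fin}, which tells us that for $\varphi \in \Lang_\land(\PT)$ either $\I\varphi = \emptyset$ or $\I\varphi = \up d$ for a finite $d \in \I\PT$, and conversely every up-set of a finite element is so represented. Soundness of all the new rules in \eqref{eq:compl:cc} (the ``only if'' direction of (1), once $\ax{C}$ is added, and the ``only if'' of (2)) is a routine check: for $\ax{C}$, if $\I\psi \neq \emptyset$ then $\I\psi \realto \emptyset = \emptyset$; for the last big rule, one checks that any $f$ in the interpreted domain satisfies the conjunction, using that for each $J$ either no $x$ lies in all the $\I\psi_j$ simultaneously, or the combined output constraint $\C(\bigwedge_{j \in J}\varphi_j)$ is consistent hence corresponds to some $\up d_J$ that $f$ lands in. So the real content is completeness: the ``if'' directions.

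\textbf{Completeness of (2).} Given $\varphi \in \Lang_\land(\PT)$ with $\I\varphi \neq \emptyset$, I would argue by induction on $\varphi$ that $\C(\varphi)$ holds. The cases $\True$, $\form a$, $\form\fold\psi$, and $\pair{\varphi_1,\varphi_2}$ (i.e.\ $\form{\pi_1}\varphi_1 \land \form{\pi_2}\varphi_2$) are immediate from the corresponding clauses together with the induction hypothesis, using that the semantic modalities reflect non-emptiness (e.g.\ $\I{\form\fold\psi} \neq \emptyset$ iff $\I\psi \neq \emptyset$ since $\I\unfold$ is an iso). General finite conjunctions reduce, via the commutation laws of Example~\ref{ex:log:modalnf} and the fact (Lemma~\ref{lem:top:char:fin}) that a consistent $\Lang_\land$-formula denotes $\up d$ with $d$ the join of the pieces, to a conjunction of $\realto$-formulae plus atoms on disjoint components; the subsumption clause ($\C(\psi),\ \psi \thesis \varphi,\ \varphi \in \Lang_\land \Rightarrow \C(\varphi)$) lets me move between a formula and a normal form for it. The crucial case is $\varphi = \bigwedge_{i \in I}(\psi_i \realto \varphi_i)$ with $I$ finite: I want to invoke the last rule of \eqref{eq:compl:cc}. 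For each $J \sle I$, I must show that either $\bigwedge_{j\in J}\psi_j \thesis \False$ or $\C(\bigwedge_{j\in J}\varphi_j)$. Semantically, since $\I\varphi \neq \emptyset$, pick $f$ in it; if $\I{\bigwedge_{j \in J}\psi_j} \neq \emptyset$, i.e.\ $= \up d$, then $f(d) \in \I{\varphi_j}$ for every $j \in J$, so $\I{\bigwedge_{j \in J}\varphi_j} \neq \emptyset$ and the (outer) induction hypothesis on (2) gives $\C(\bigwedge_{j\in J}\varphi_j)$; otherwise $\I{\bigwedge_{j\in J}\psi_j} = \emptyset$, and by part (1) of the proposition (for the $\Lang_\land$-formula $\bigwedge_{j\in J}\psi_j$, which is smaller, so available by the mutual induction) we get $\bigwedge_{j\in J}\psi_j \thesis \False$. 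This is exactly where the mutual-induction structure is needed and where the main obstacle lies: setting up the induction measure so that both appeals — to (1) on $\bigwedge_J \psi_j$ and to (2) on $\bigwedge_J \varphi_j$ — are legitimately ``smaller''. I expect a measure like (size of $\varphi$, $\psi$) in the lexicographic product, or an induction on the nesting depth of $\realto$, to work, but care is required because $\bigwedge_J \varphi_j$ need not be a syntactic subformula of $\varphi$.

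\textbf{Completeness of (1).} Assume $\I\psi \sle \I\varphi$ with $\psi,\varphi \in \Lang_\land(\PT)$; I show $\psi \thesis \varphi$ by induction on $\varphi$, using the basic rules of Figure~\ref{fig:log:ded} plus $\ax{C}$. If $\I\psi = \emptyset$, then by part (2) (contrapositive) $\neg\C(\psi)$ is not what we need — rather, I handle this by noting $\I\psi = \emptyset$ forces, via Lemma~\ref{lem:top:char:fin} applied structurally, a derivation of $\psi \thesis \False$ (this is itself a small sub-induction: a $\Lang_\land$-formula denotes $\emptyset$ exactly when some atomic clash $\form a \land \form b$ with $a \neq b$ appears under modalities, or a $\realto$ with consistent premise and empty-denoting conclusion occurs — and in the latter case $\ax{C}$ together with the induction hypothesis closes it), whence $\psi \thesis \varphi$. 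If $\I\psi = \up d \neq \emptyset$: when $\varphi = \True$ it is trivial; when $\varphi = \form a$, then $d \in \I{\form a} = \{a\}$ forces $\psi \thesisiff \form a$ by the atomic analysis; when $\varphi = \form\fold\varphi'$ or $\varphi = \pair{\varphi_1,\varphi_2}$, push the modality through $\psi$ (again Example~\ref{ex:log:modalnf}) and apply the induction hypothesis componentwise; when $\varphi = \psi' \realto \varphi'$, the interesting subcase, I use that $d \in \I\psi \realto \I\varphi'$ means: if $\I{\psi'} \neq \emptyset$, say $= \up e$, then $d(e) \in \I{\varphi'}$ — and I extract from $\psi$ (which denotes $\up d$) a sub-formula denoting $\up(d(e))$ to feed the induction hypothesis, combined with the $\realto$-monotonicity rule and the $\bigwedge(\psi_i \realto \varphi) \thesis (\bigvee\psi_i)\realto\varphi$ rule. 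If $\I{\psi'} = \emptyset$ then $\psi' \thesis \False$ by the sub-induction and $\True \thesis \False \realto \varphi'$ gives the result. Finite conjunctions $\varphi = \bigwedge_i \varphi_i$ are immediate from the conjunction-introduction rule and the induction hypothesis applied to each $\I\psi \sle \I{\varphi_i}$. The hard part, as above, is organizing the double recursion and the $\Lang_\land$-normal-form bookkeeping so that every recursive call is on a genuinely smaller instance; once that scaffolding is fixed, each individual case is short.
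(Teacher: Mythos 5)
Your proposal follows essentially the same route as the paper: soundness is checked rule by rule, and the two completeness directions are established by a mutual induction that combines a dichotomy for $\Lang_\land$-formulae (either $\C(\psi)$ or $\psi \thesis \False$ is derivable) with the finite-element/step-function analysis coming from Lemma~\ref{lem:top:char:fin}. The paper packages this as three lemmas: soundness by induction on derivations; the dichotomy by simultaneous induction on the (finite) size of $\psi$, arguing by cases on the pure type; and completeness of $\thesis$ by induction on the sum of the sizes of $\psi$ and $\varphi$, with the arrow case adapted from \cite[Proposition 10.5.2]{ac98book}. Two points in your sketch need tightening. First, the induction measure you leave open does work out: taking plain formula size, the conjunctions $\bigwedge_{j \in J}\psi_j$ and $\bigwedge_{j \in J}\varphi_j$ arising in the last rule of \eqref{eq:compl:cc} are assembled from proper subformulae of $\bigwedge_{i \in I}(\psi_i \realto \varphi_i)$ and are therefore strictly smaller, so the mutual appeals to (1) and (2) are legitimate; this is exactly how the paper's dichotomy lemma is organized. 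Second, your characterization of when a $\Lang_\land$-formula denotes $\emptyset$ (``an atomic clash under modalities, or a $\realto$ with consistent premise and empty-denoting conclusion occurs'') is not correct as stated: at an arrow type, $\bigwedge_{i \in I}(\psi_i \realto \varphi_i)$ can be empty even though every conjunct is individually non-empty, e.g.\ $(\form a \realto \form a) \land (\form a \realto \form b)$ with $a \neq b$, because the corresponding step functions are jointly inconsistent. The derivation of $\psi \thesis \False$ must therefore go through the same quantification over subsets $J \sle I$ that you already use for $\C$: one derives $\psi \thesis (\bigwedge_{j\in J}\psi_j) \realto \bigwedge_{j \in J}\varphi_j$, obtains $\bigwedge_{j\in J}\varphi_j \thesis \False$ and $\C(\bigwedge_{j\in J}\psi_j)$ by induction, and concludes with $\ax{C}$. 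With these two repairs your argument matches the paper's proof.
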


\noindent
In particular, for each $\varphi \in \Lang_\land$,
either $\C(\varphi)$ or $\varphi \thesis \False$ is derivable.

A type is \emph{finite} 
if it only contains formulae $\varphi \in \Lang_\land$.
A typing context $x_1:\RTbis_1,\dots,x_n : \RTbis_n$
is finite if so are all $\RTbis_i$'s.
Completeness for finite types can be obtained
from minor adaptations to \cite{abramsky91apal}.

\begin{restatable}[Abramsky \cite{abramsky91apal}]{theorem}{ThmComplFin}
\label{thm:compl:fin}%
Assume $\Env$ and $\RT$ are finite.
If $\Env \thesis M : \RT$ is sound,
then $\Env \thesis M : \RT$ is derivable in the system of \S\ref{sec:reft}
extended with eq.\ \eqref{eq:compl:cc}.
\end{restatable}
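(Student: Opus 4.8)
The plan is to reduce Theorem~\ref{thm:compl:fin} to Abramsky's completeness theorem for DTLF, by first showing that the finite fragment of our system (restricted to formulae in $\Lang_\land$) is notationally a variant of Abramsky's logic of compact-open properties, and then invoking his result. The key conceptual point is that, although our $\C$ is a consistency predicate rather than a coprimeness predicate, on finite types the two notions coincide up to the logical equivalences available in $\Lang_\land$: a formula $\varphi \in \Lang_\land$ either has $\C(\varphi)$ or satisfies $\varphi \thesis \False$ (this follows from Proposition~\ref{prop:compl:fin:ded}(2)), so there is no genuine gap between a ``finite consistent'' formula and a ``coprime'' one. Thus the finite fragment carries exactly the information of Abramsky's system.

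First I would set up the translation. For each finite pure type $\PT$, associate to each finite refinement type $\RT$ over $\PT$ its defining formula $\varphi_\RT \in \Lang_\land(\PT)$ via Lemma~\ref{lem:reft} (so $\RT \eqtype \reft{\PT \mid \varphi_\RT}$), and observe that the subtyping relation on finite types is, by Proposition~\ref{prop:compl:fin:ded}(1), exactly semantic inclusion $\I{\varphi_\RT} \sle \I{\varphi_\RTbis}$. The typing rules of Figure~\ref{fig:reft:reftyping} restricted to finite types — in particular the $\fix$ rule, which already requires $\varphi, \psi \in \Lang_\land$, and the conjunction-introduction rule, which on finite types becomes a finitary rule — then match Abramsky's rules for typing $\FPC$-terms with compact-open assertions. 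The disjunction-elimination rule on contexts is vacuous here since $\Lang_\land$ admits no nonempty disjunctions (only $\False$), so the only ``disjunctive'' phenomenon is $\False$ in a context, which is handled by the rule $\ax{C}$ together with the fact just noted that every inconsistent $\Lang_\land$-formula proves $\False$.

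Next I would carry out the induction on the structure of the soundness witness. Given a sound judgment $\Env \thesis M : \RT$ with $\Env, \RT$ finite, we have $\UPT\Env \thesis M : \UPT\RT$ derivable in the pure system, and $\I M$ maps $\prod_i \I{\RTbis_i}$ into $\I\RT$. One argues by induction on the derivation of $\UPT\Env \thesis M : \UPT\RT$ (equivalently, on $M$), at each step using the semantic characterizations: at $\lambda$-abstraction and application one uses the arrow equivalence in Figure~\ref{fig:reft:subtyping} and the fact that $\I{\psi \realto \varphi}$ is determined pointwise; at $\fix x.N$ one uses that $\I{\fix x.N}$ is the sup of the finite approximants $\I{N}^{(k)}(\bot)$ and that a finite formula's interpretation is $\up d$ (Lemma~\ref{lem:top:char:fin}), so it is Scott-open and hence some approximant already lands in it, which is precisely what the $\fix$ rule (applied with the invariant $\psi$ being a formula true of the approximant reached) licenses; at $\cse$ one case-splits on whether the scrutinee's value is some $b \in \BT$ or $\bot$, matching the two premises of the $\cse$ typing rule. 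The case of $\form a$ uses $\I{\form a} = \{a\}$ and the rule $a \neq b \thesis \False$ in Figure~\ref{fig:log:ded} (extended with $\C$) to discharge the wrong branches.

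\textbf{The hard part} will be making the $\fix$ case airtight: one must extract, from the semantic fact that $\I M(\vec u) \in \I\varphi = \up d$, a \emph{finite} invariant formula $\psi \in \Lang_\land(\PT)$ that the fixpoint provably satisfies and from which $\varphi$ provably follows — i.e.\ reconstruct Abramsky's finitary fixpoint-induction argument in our notation. This requires knowing that the set of finite elements below a given finite element is finite and that the fixpoint is the directed sup of definable finite approximations, together with the completeness of $\thesis$ on $\Lang_\land$ (Proposition~\ref{prop:compl:fin:ded}(1)) to turn each semantic containment between approximants into a derivable sequent. Since this is exactly the content of Abramsky's original completeness proof in~\cite{abramsky91apal} and our finite fragment is a faithful notational variant of his system, the cleanest presentation is to verify the translation is rule-for-rule faithful and cite~\cite{abramsky91apal} for the induction itself, rather than redo it; the only genuinely new bookkeeping is checking that $\C$ (consistency) plays the role of coprimeness correctly, which is exactly Proposition~\ref{prop:compl:fin:ded}(2) plus the remark that every $\Lang_\land$-formula is either $\C$ or proves $\False$.
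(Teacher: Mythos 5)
Your backbone --- induction on the pure typing derivation of $\UPT\Env \thesis M : \UPT\RT$, powered by Lemma~\ref{lem:top:char:fin} (non-empty $\Lang_\land$-formulae denote $\up d$ with $d$ finite), Proposition~\ref{prop:compl:fin:ded} (completeness of $\thesis$ on $\Lang_\land$ and the dichotomy between $\C(\psi)$ and $\psi \thesis \False$), and finite approximation of fixpoints --- is exactly the paper's. However, your preferred presentation (check that the finite fragment is a rule-for-rule faithful variant of Abramsky's system and cite \cite{abramsky91apal} for the induction) is precisely what the paper declines to do: since the systems formally differ (no syntactic compact-opens, $\C$ is a consistency rather than a coprimeness predicate, rule $\ax{F}$ is restated), establishing that the translation is faithful is not materially cheaper than redoing the induction, which is why the appendix redoes it in full.

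More importantly, your sketches of the two hard cases elide the step that carries the content. In the application case $M = NV$, ``the arrow equivalence and the fact that $\I{\psi \realto \varphi}$ is determined pointwise'' does not suffice, because no candidate $\psi$ is handed to you: you must manufacture one by noting that $\I V(\vec u)$ is the directed sup of the finite elements below it, that $\I N(\vec u)$ is Scott-continuous, and that $\I\varphi = \up d$ with $d$ finite, so $d \leq \I N(\vec u)(e)$ for some finite $e \leq \I V(\vec u)$; Lemma~\ref{lem:top:char:fin} then converts $\up e$ into a $\psi \in \Lang_\land$ making both premises sound, hence derivable by induction. This is the same algebraicity argument as the $\fix$ case and cannot be skipped. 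In the $\fix$ case, ``some approximant lands in $\up d$'' only yields $d \leq f^k(\bot)$; to iterate the $\fix$ rule you need a whole chain of finite elements $\bot = d_0, \dots, d_k = d$ with $d_{j+1} \leq f(d_j)$ and $d_j \leq f^j(\bot)$, produced by a downward induction that reuses continuity of $f$ at each step --- a single invariant ``true of the approximant reached'' is not enough. Incidentally, your auxiliary claim that the set of finite elements below a given finite element is finite is false in these function-space domains and is not needed; and an inconsistent refinement in the context is discharged by the $I = \emptyset$ instance of the disjunction-elimination context rule (after deriving $\psi \thesis \False$), not by rule $\ax{C}$. None of this is unfixable, but these are exactly the places where the proof lives.
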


\subsubsection{Well-Filteredness.}
\label{sec:wf}
Following~\cite{bk03ic},
completeness for types with infinitary formulae relies on the fact that
Scott domains are \emph{well-filtered} spaces.
The latter is stated in \cite[Corollary 7.1.11]{aj95chapter}
and \cite[Proposition 8.3.5]{goubault13book} as a consequence
of the Hofmann-Mislove (or Scott-open filter) Theorem.
It can also be obtained from \cite[Theorem 7.38]{gg24book}.
A subset $F$ of a poset $P$ is \emph{filtering} if $F$ is directed in $P^\op$.

\begin{proposition}[Well-Filteredness]
\label{prop:wf}
Let $X$ be an algebraic dcpo,%
\footnote{More generally, this result holds for any sober space $X$
(with $\SP$ open in $X$).}
and let $\Filt$ be a set of compact saturated subsets of $X$.
If $\Filt$ is filtering in $\Po(X)$
and $\bigcap \Filt \sle \SP$ for some Scott-open $\SP$,
then $Q \sle \SP$ for some $Q \in \Filt$.
\end{proposition}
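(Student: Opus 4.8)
The plan is to derive this from the Hofmann--Mislove theorem, which characterizes compact saturated subsets of a sober space (in particular an algebraic dcpo, which is sober for the Scott topology) as exactly the intersections $\bigcap \mathcal{U}$ of Scott-open filters $\mathcal{U}$ of open sets. Given $\Filt$ filtering in $(\Po(X),\sle)$, the key idea is that the complements of its members, or rather the open sets containing its members, form a Scott-open filter in the lattice of open sets; its intersection is $\bigcap\Filt$, which is assumed to lie inside $\SP$, so $\SP$ itself must already belong to that filter, forcing $\SP$ to contain some $Q \in \Filt$.

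More concretely, first I would set $\mathcal{U} \deq \{\,U \text{ Scott-open} \mid Q \sle U \text{ for some } Q \in \Filt\,\}$. I would check that $\mathcal{U}$ is a filter in the frame $\Omega(X)$ of open sets: it is upward-closed by construction, it is nonempty (each $Q$ is saturated hence an intersection of opens, in particular $X \in \mathcal{U}$), and it is closed under binary intersections because $\Filt$ is filtering --- given $Q_1 \sle U_1$ and $Q_2 \sle U_2$ with $Q_1,Q_2 \in \Filt$, pick $Q_3 \in \Filt$ below both $Q_1$ and $Q_2$ in $\Po(X)$, so $Q_3 \sle U_1 \cap U_2$. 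Next I would verify that $\mathcal{U}$ is \emph{Scott-open} as a subset of $\Omega(X)$: if $U = \bigcup_k U_k$ is a directed union of opens and $U \in \mathcal{U}$, then some $Q \in \Filt$ satisfies $Q \sle \bigcup_k U_k$; since $Q$ is \emph{compact}, finitely many $U_k$ cover $Q$, and by directedness a single $U_k$ already contains $Q$, so $U_k \in \mathcal{U}$. Thus $\mathcal{U}$ is a Scott-open filter of opens.

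Then I would compute $\bigcap\mathcal{U}$. On one hand $\bigcap\Filt \sle \bigcap\mathcal{U}$ is immediate. On the other hand, each $Q \in \Filt$ is compact \emph{saturated}, hence equals the intersection of all open sets containing it (saturation for the Scott topology on an algebraic, or more generally sober, dcpo), and all those opens lie in $\mathcal{U}$; therefore $\bigcap\mathcal{U} \sle Q$ for every $Q$, giving $\bigcap\mathcal{U} \sle \bigcap\Filt$. Hence $\bigcap\mathcal{U} = \bigcap\Filt \sle \SP$. Now invoke the Hofmann--Mislove theorem in the form: for a Scott-open filter $\mathcal{U}$ of opens with $\bigcap\mathcal{U} \sle \SP$ and $\SP$ open, one has $\SP \in \mathcal{U}$. (Equivalently: the compact saturated set $K \deq \bigcap\mathcal{U}$ satisfies $K \sle \SP$ with $\SP$ open, and for compact $K$ this forces $\SP$ into the filter of opens containing $K$, which here is $\mathcal{U}$ itself by the previous paragraph's equality.) By definition of $\mathcal{U}$, $\SP \in \mathcal{U}$ means $Q \sle \SP$ for some $Q \in \Filt$, as desired.

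The main obstacle is really just invoking the right packaging of Hofmann--Mislove: one must be careful that the filter $\mathcal{U}$ is a filter \emph{in the frame of opens}, not in $\Po(X)$, and that "Scott-open filter" there refers to the Scott topology on $\Omega(X)$ ordered by inclusion; the compactness of the members of $\Filt$ is exactly what converts the filtering condition on $\Filt$ into the Scott-openness of $\mathcal{U}$. Everything else (upward-closure, the two inclusions computing $\bigcap\mathcal{U}$) is routine. If one prefers to avoid citing Hofmann--Mislove directly, the statement is available off the shelf --- it is \cite[Corollary 7.1.11]{aj95chapter}, \cite[Proposition 8.3.5]{goubault13book}, or a consequence of \cite[Theorem 7.38]{gg24book} --- so the proof can simply be a one-line reference, but the argument above is the content behind it.
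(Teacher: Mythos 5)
Your proposal is correct and follows the same route the paper takes: the paper does not prove Proposition~\ref{prop:wf} itself but cites it as a standard consequence of the Hofmann--Mislove theorem (via \cite[Corollary 7.1.11]{aj95chapter}, \cite[Proposition 8.3.5]{goubault13book}, or \cite[Theorem 7.38]{gg24book}), and your argument simply writes out that standard derivation --- forming the Scott-open filter of opens containing some member of $\Filt$, identifying its intersection with $\bigcap\Filt$, and invoking Hofmann--Mislove to place $\SP$ in the filter. All the steps check out, including the use of compactness to get Scott-openness of the filter and of saturation to get the reverse inclusion of intersections.
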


Proposition~\ref{prop:wf} yields the soundness of the following deduction rule.
\[
\ax{WF}~
\dfrac{\text{for all $i\in I$, $\psi_i \in \Lang_\land(\PTbis)$}
  \qquad
  \varphi \in \Lang_\Open(\PT)}
  {
  \left( \bigwedge_{i \in I} \psi_i \right)
  \realto
  \varphi
  \,\thesis\,
  \bigvee_{\text{$J \sle I$, $J$ finite}}
  \left(
  \left( \bigwedge_{j \in J} \psi_j \right)
  \realto
  \varphi
  \right)
  }
\]

\begin{restatable}{lemma}{LemWf}
\label{lem:wf}%
The rule $\ax{WF}$ is sound.
\end{restatable}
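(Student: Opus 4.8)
The plan is to derive the soundness of $\ax{WF}$ directly from the Well-Filteredness statement (Proposition~\ref{prop:wf}) together with the semantic characterizations of the formula classes (Lemmas~\ref{lem:top:char:fin} and~\ref{lem:top:char}). Fix formulae $\psi_i \in \Lang_\land(\PTbis)$ for $i \in I$ and $\varphi \in \Lang_\Open(\PT)$. Writing $S_i \deq \I{\psi_i} \sle \I\PTbis$ and $U \deq \I\varphi \sle \I\PT$, Lemma~\ref{lem:top:char} gives that $U$ is Scott-open, and we must show
\[
  \textstyle\bigcap_{i \in I} S_i \realto U
  \quad\sle\quad
  \bigcup_{\text{$J \sle I$ finite}}
  \left( \bigcap_{j \in J} S_j \right) \realto U .
\]
So let $f \in \I{\PTbis \arrow \PT}$ with $f\!\left(\bigcap_{i\in I} S_i\right) \sle U$; the goal is to produce a finite $J \sle I$ with $f\!\left(\bigcap_{j \in J}S_j\right) \sle U$, i.e.\ $\bigcap_{j\in J}S_j \sle f^{-1}(U)$.

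The key step is to set up the filtered family to which Proposition~\ref{prop:wf} applies. For each finite $J \sle I$, consider the set $Q_J \deq \bigcap_{j \in J} S_j$. By Lemma~\ref{lem:top:char:fin}, each $S_j$ is either empty or of the form $\up d_j$ for a finite $d_j \in \I\PTbis$; if some $S_j = \emptyset$ then $Q_{\{j\}} = \emptyset \sle f^{-1}(U)$ and we are done with $J = \{j\}$. Otherwise, using coherence of Scott domains (recalled in \S\ref{sec:top}: $\up d \cap \up d'$ is compact, being $\up(d \vee d')$ when non-empty, and empty otherwise), each finite intersection $Q_J$ is a compact saturated subset of $\I\PTbis$. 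If some $Q_J$ is empty we again finish immediately; so assume all $Q_J$ are non-empty compact saturated sets. The family $\Filt \deq \{\, Q_J \mid J \sle I \text{ finite}\,\}$ is filtering in $\Po(\I\PTbis)$, since $Q_{J} \cap Q_{J'} = Q_{J \cup J'} \in \Filt$ and $Q_\emptyset = \I\PTbis \in \Filt$, so it is downward-directed under inclusion. Moreover $\bigcap \Filt = \bigcap_{i \in I} S_i \sle f^{-1}(U)$: indeed $f^{-1}(U)$ is Scott-open because $f$ is Scott-continuous and $U$ is Scott-open, and by hypothesis $f$ maps $\bigcap_i S_i$ into $U$. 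Proposition~\ref{prop:wf} (applied with $X = \I\PTbis$, which is an algebraic dcpo, $\SP = f^{-1}(U)$) now yields a finite $J \sle I$ with $Q_J \sle f^{-1}(U)$, i.e.\ $f \in \left(\bigcap_{j\in J}S_j\right)\realto U$, which is exactly what we need; hence $f$ lies in the right-hand union.

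I expect the only delicate point to be the bookkeeping around the edge cases where some $S_j$ or some finite intersection $Q_J$ is empty — these must be dispatched before invoking Proposition~\ref{prop:wf}, since that proposition is phrased for a set of (genuinely) compact saturated subsets and we want a clean filtering family of non-empty sets. Everything else — that each $Q_J$ is compact saturated (coherence), that $\Filt$ is filtering (closure under finite unions of index sets), and that $f^{-1}(U)$ is Scott-open (Scott-continuity of $f$) — is routine given the material recalled in \S\ref{sec:top} and Lemmas~\ref{lem:top:char:fin}--\ref{lem:top:char}. \qed
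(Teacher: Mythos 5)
Your proof is correct and follows essentially the same route as the paper's: apply Proposition~\ref{prop:wf} to the Scott domain $\I\PTbis$ with the filtering family of finite intersections $\I{\bigwedge_{j \in J}\psi_j}$ and the Scott-open $f^{-1}(\I\varphi)$. Your separate dispatching of the empty cases is harmless but not needed, since the paper obtains compactness of each member of the family directly from Lemma~\ref{lem:top:char:fin} applied to the finite conjunction $\bigwedge_{j \in J}\psi_j \in \Lang_\land$ (the empty set being compact and saturated anyway).
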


\subsubsection{Main Results.}
\label{sec:main}
Theorem~\ref{thm:main} below
gives sufficient conditions for the completeness of the system in \S\ref{sec:reft}
extended with eq.\ \eqref{eq:compl:cc}.
This relies on Well-Filteredness (Proposition~\ref{prop:wf}),
but does not use the rule $\ax{WF}$.
In \S\ref{sec:conc}, we discuss why it might be useful for future work
to avoid that rule.
Proofs of Lemma~\ref{lem:compl:nf} and Theorem~\ref{thm:main}
are given in Appendix \ref{sec:app:main}.

\begin{restatable}{lemma}{LemComplNf}
\label{lem:compl:nf}%
Given $\varphi,\psi \in \Norm(\PT)$,
if $\I\psi \sle \I\varphi$,
then $\psi \thesis_\PT \varphi$
is derivable in the extension of Figure~\ref{fig:log:ded} (\S\ref{sec:log})
with eq.\ \eqref{eq:compl:cc}.
\end{restatable}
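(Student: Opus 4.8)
The plan is to reduce the completeness claim for normal forms to the finite case (Proposition~\ref{prop:compl:fin:ded}(1)) by systematically pushing the distributivity law $\ax{D}$, together with the modal commutations of Example~\ref{ex:log:modalnf}, so as to bring both $\psi$ and $\varphi$ into a shape where the infinitary structure is exposed as a conjunction over disjunctions of $\Lang_\land$-formulae, and then handle each constituent disjunct/conjunct using finitary derivability. Concretely, write $\psi = \bigwedge_{i \in I}\bigvee_{j \in J_i}\psi_{i,j}$ and $\varphi = \bigwedge_{k \in K}\bigvee_{l \in L_k}\varphi_{k,l}$ with all $\psi_{i,j},\varphi_{k,l} \in \Lang_\land(\PT)$. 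Since $\thesis$ preserves conjunctions, it suffices to show, for each fixed $k \in K$, that $\psi \thesis \bigvee_{l \in L_k}\varphi_{k,l}$. Using $\ax{D}$ on $\psi$, we get $\psi \thesisiff \bigvee_{f \in \prod_i J_i}\bigwedge_{i \in I}\psi_{i,f(i)}$, so it further suffices to show for each choice function $f$ that $\bigwedge_{i \in I}\psi_{i,f(i)} \thesis \bigvee_{l \in L_k}\varphi_{k,l}$; call the left side $\psi_f$, a (possibly infinite) conjunction of $\Lang_\land$-formulae.

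The heart of the argument is then: given a conjunction $\psi_f = \bigwedge_{i} \chi_i$ of formulae $\chi_i \in \Lang_\land(\PT)$ with $\I{\psi_f} \sle \bigcup_{l \in L_k}\I{\varphi_{k,l}}$, derive $\psi_f \thesis \bigvee_l \varphi_{k,l}$. Here I would invoke Well-Filteredness (Proposition~\ref{prop:wf}). Each $\I{\chi_i}$ is either empty or equal to $\up d_i$ for a finite $d_i$ (Lemma~\ref{lem:top:char:fin}); if some $\I{\chi_i} = \emptyset$ then $\chi_i \thesis \False$ is derivable by Proposition~\ref{prop:compl:fin:ded}(2) and we are done immediately, so assume all are nonempty. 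The family of finite intersections $\{\,\bigcap_{i \in S}\up d_i \mid S \sle I \text{ finite}\,\}$ is a filtering family of compact saturated sets (coherence of Scott domains), with intersection $\I{\psi_f}$, which is contained in the Scott-open set $\bigcup_{l \in L_k}\I{\varphi_{k,l}}$ (this union is Scott-open by Lemma~\ref{lem:top:char}, as each $\varphi_{k,l} \in \Lang_\land$ and a union of opens is open — note $\Lang_\Open$ formulae have open interpretations). Well-Filteredness yields a finite $S \sle I$ with $\bigcap_{i \in S}\up d_i \sle \bigcup_{l}\I{\varphi_{k,l}}$, i.e.\ $\I{\bigwedge_{i \in S}\chi_i} \sle \I{\bigvee_l \varphi_{k,l}}$. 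Now $\bigwedge_{i \in S}\chi_i \in \Lang_\land(\PT)$ is finite, but $\bigvee_l \varphi_{k,l}$ need not be; however, its interpretation is open, and since $\bigwedge_{i\in S}\chi_i$ has interpretation $\up d$ for the finite element $d = \bigvee_{i \in S} d_i$, openness gives a single $l$ with $d \in \I{\varphi_{k,l}}$, hence $\I{\bigwedge_{i \in S}\chi_i} = \up d \sle \I{\varphi_{k,l}}$. By Proposition~\ref{prop:compl:fin:ded}(1), $\bigwedge_{i \in S}\chi_i \thesis \varphi_{k,l}$ is derivable; weakening the conjunction (the rule $\bigwedge_{i} \psi_i \thesis \varphi$ for $i \in I$, iterated / combined appropriately) gives $\psi_f \thesis \bigwedge_{i \in S}\chi_i$, and then $\bigvee$-introduction gives $\psi_f \thesis \bigvee_l \varphi_{k,l}$, as required.

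I expect the main obstacle to be bookkeeping rather than a deep difficulty: namely, checking carefully that the deduction system of Figure~\ref{fig:log:ded} extended with eq.~\eqref{eq:compl:cc} really does let us (i) commute the finitary conjunction $\bigwedge_{i \in S}\chi_i$ of arbitrary $\Lang_\land$-formulae into a single normal-form-respecting $\Lang_\land$ formula so that Proposition~\ref{prop:compl:fin:ded}(1) applies verbatim, and (ii) perform the weakening from $\psi_f$ down to a finite subconjunction — the subtlety is that the $\chi_i = \psi_{i,f(i)}$ are selected by a choice function, so one must be sure the "pick one disjunct per $i$" step of $\ax{D}$ is genuinely reversible in the required direction (it is, by Example~\ref{ex:log:distr}). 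A secondary point to get right is the reduction from $\varphi \in \Norm$ to the $\Lang_\land$ case for the right-hand side when applying openness: one needs $\I{\bigvee_{l}\varphi_{k,l}}$ to be Scott-open, which uses precisely that each $\varphi_{k,l} \in \Lang_\land$ has $\I{\varphi_{k,l}}$ either empty or of the form $\up d$, hence open, so the union is in $\Lang_\Open$ and Lemma~\ref{lem:top:char} applies. Once these two bookkeeping lemmas are in place, the proof is the Well-Filteredness argument sketched above, applied inside the double reduction "$\bigwedge_k$, then $\bigvee$ via $\ax{D}$".
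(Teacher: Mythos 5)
Your proposal follows essentially the same route as the paper's proof: reduce, via the $\bigwedge$-right and $\bigvee$-left rules together with the distributive law of Example~\ref{ex:log:distr}, to the case of a conjunction of $\Lang_\land$-formulae entailing a disjunction of $\Lang_\land$-formulae, apply Well-Filteredness (Proposition~\ref{prop:wf}) to the filtering family of finite sub-conjunctions, and then use the finite-element/openness argument to isolate a single disjunct and conclude by Proposition~\ref{prop:compl:fin:ded}. The only point to patch is that after Well-Filteredness the finite conjunction $\bigwedge_{i \in S}\chi_i$ may still have empty interpretation --- the sup $\bigvee_{i \in S} d_i$ need not exist even when each $\up d_i$ is nonempty, since bounded-completeness only gives sups of bounded sets --- in which case one concludes directly from $\bigwedge_{i \in S}\chi_i \thesis \False$, supplied by Proposition~\ref{prop:compl:fin:ded}, exactly as the paper does.
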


\begin{definition}
A type is \emph{normal} if it is pure
or $\reft{\PT \mid \varphi}$ with $\varphi \in \Norm(\PT)$.
A typing context $x_1:\RTbis_1,\dots,x_n : \RTbis_n$
is \emph{normal} if so are all $\RTbis_i$'s.

The \emph{first-order over normal forms} (\emph{fonf}) types
are generated by the grammar
\[
\begin{array}{r @{\ \ }c@{\ \ } l}
    \RT
&   \bnf
&   \RTbis
\gs \RT \times \RT
\gs \RTbis \arrow \RT
\end{array}
\]

\noindent
with $\RTbis$ normal.
A judgment $\Env \thesis M : \RT$ is \emph{normal}
if $\Env$ is normal and $\RT$ is fonf.
\end{definition}

We shall see that if $\Env \thesis M : \RT$ is sound and normal,
then it is derivable.
The idea is to reduce to the finite case (Theorem~\ref{thm:compl:fin})
by using Proposition~\ref{prop:wf},
but without using the rule $\ax{WF}$.
We first show that $\RT$ can be assumed to be normal.
To each normal judgment $\Env \thesis M : \RT$
we associate a set of normal judgments $\eta(\Env \thesis M : \RT)$.
We let
\(
  \eta\left(\Env \thesis M : \RT \right)
  \deq
  \left\{\Env \thesis M : \RT \right\}
\)
if $\RT$ is normal, and
\[
\begin{array}{r !{\quad\deq\quad} l}

  \eta\left(\Env \thesis M : \RT_1 \times \RT_2 \right)
& \eta\left(\Env \thesis \pi_1 M : \RT_1 \right)
  \cup
  \eta\left(\Env \thesis \pi_2 M : \RT_2 \right)
\\

  \eta\left(\Env \thesis M : \RTbis \arrow \RT \right)
& \eta\left( \Env, x: \RTbis \thesis M x : \RT \right)

\end{array}
\]

\noindent
Note that for each $(\Env' \thesis M' : \RT') \in \eta(\Env \thesis M : \RT)$,
the type $\RT'$ is normal.

\begin{restatable}{proposition}{PropMainEta}
\label{prop:main:eta}%
A normal judgment $\Env \thesis M : \RT$ is sound (resp.\ derivable)
if, and only if, so are all $(\Env' \thesis M' : \RT') \in \eta(\Env \thesis M : \RT)$.
\end{restatable}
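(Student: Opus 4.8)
\textbf{Proof plan for Proposition~\ref{prop:main:eta}.}
The plan is to argue by induction on the structure of the fonf type $\RT$, exploiting the fact that $\eta$ decomposes $\RT$ down to its normal constituents one product projection or one argument application at a time. The base case is when $\RT$ is already normal, in which case $\eta(\Env \thesis M : \RT) = \{\Env \thesis M : \RT\}$ and there is nothing to prove. For the inductive step I would treat the two clauses separately, handling soundness and derivability in each.

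First I would handle soundness. For the product case $\RT = \RT_1 \times \RT_2$: by Definition~\ref{def:sound:typing}, $\Env \thesis M : \RT_1 \times \RT_2$ is sound iff $\UPT\Env \thesis M : \UPT{\RT_1} \times \UPT{\RT_2}$ is derivable and $\I M(\vec u) \in \I{\RT_1} \times \I{\RT_2} = \I{\RT_1} \times \I{\RT_2}$ for all $\vec u$ in the interpretation of $\Env$. Since $\I{\pi_i M}(\vec u) = \pi_i(\I M(\vec u))$, membership in the product is equivalent to the conjunction of $\I{\pi_i M}(\vec u) \in \I{\RT_i}$ for $i=1,2$; and derivability of the underlying pure judgment transfers in both directions using the pure typing rules for pairing and projection (noting $M$ and $\pair{\pi_1 M, \pi_2 M}$ have the same pure type, and $\Omega$-inhabitation is never an issue). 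So $\Env \thesis M : \RT_1 \times \RT_2$ is sound iff both $\Env \thesis \pi_i M : \RT_i$ are sound, and then the induction hypothesis applied to each $\RT_i$ (which is again fonf, being a sub-expression of a fonf type) finishes the case. For the arrow case $\RT = \RTbis \arrow \RT'$: soundness of $\Env \thesis M : \RTbis \arrow \RT'$ means $\I M(\vec u) \in \I\RTbis \realto \I{\RT'}$, i.e.\ $\I M(\vec u)(v) \in \I{\RT'}$ for all $v \in \I\RTbis$; this is exactly soundness of $\Env, x : \RTbis \thesis Mx : \RT'$ since $\I{Mx}(\vec u, v) = \I M(\vec u)(v)$, and again the underlying pure judgments correspond via the abstraction/application rules. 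Applying the induction hypothesis to $\RT'$ (fonf, with $\RTbis$ normal) closes this case.

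Next I would handle derivability, which is the direction requiring a little more care. The ``only if'' direction is immediate: from $\Env \thesis M : \RT_1 \times \RT_2$ the projection rules of Figure~\ref{fig:reft:reftyping} give $\Env \thesis \pi_i M : \RT_i$, and from $\Env \thesis M : \RTbis \arrow \RT'$ a weakening to $\Env, x:\RTbis$ followed by the application rule gives $\Env, x:\RTbis \thesis Mx : \RT'$; then recurse. For the ``if'' direction in the product case, from derivations of $\Env \thesis \pi_i M : \RT_i$ the pairing rule yields $\Env \thesis \pair{\pi_1 M, \pi_2 M} : \RT_1 \times \RT_2$, and then subtyping together with the semantic equivalence of $\pair{\pi_1 M, \pi_2 M}$ and $M$ promotes this to $\Env \thesis M : \RT_1 \times \RT_2$ --- here I would use that the rules include a conversion/subsumption mechanism compatible with the evaluation rules in the Remark, or alternatively derive the product rule $\dfrac{\Env \thesis M : \reft{\PT_1 \times \PT_2 \mid \form{\pi_1}\varphi}\quad \Env \thesis M : \reft{\PT_1 \times \PT_2 \mid \form{\pi_2}\psi}}{\Env \thesis M : \reft{\PT_1 \times \PT_2 \mid \pair{\varphi,\psi}}}$ directly from the $\pi_i$-typing rules of Figure~\ref{fig:reft:reftyping} and the $\form{\pi_i}$-introduction rules, which is the cleaner route and avoids appealing to operational soundness. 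For the ``if'' direction in the arrow case, from $\Env, x:\RTbis \thesis Mx : \RT'$ the abstraction rule gives $\Env \thesis \lambda x. Mx : \RTbis \arrow \RT'$, and $\eta$-equivalence (again via subsumption, or by observing $M$ and $\lambda x.Mx$ are interchangeable) yields $\Env \thesis M : \RTbis \arrow \RT'$; then recurse on $\RT'$.

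\textbf{Main obstacle.} The only genuinely delicate point is going from a derivation about $\pair{\pi_1 M, \pi_2 M}$ (resp.\ $\lambda x. Mx$) back to one about $M$ itself, since the type system as presented is syntax-directed on the term and does not obviously contain an $\eta$-conversion rule. I expect to resolve this by deriving, once and for all, the admissible rules combining the two $\form{\pi_i}$ introductions into a $\pair{\varphi,\psi}$ introduction on a single term $M$ (and the symmetric fact that a $\lambda x.Mx$-derivation can be ``inlined''), using the refinement typing rules of Figure~\ref{fig:reft:reftyping} and the subtyping equalities $\reft{\PT\mid\varphi}\times\reft{\PTbis\mid\psi}\eqtype\reft{\PT\times\PTbis\mid\pair{\varphi,\psi}}$ and $\reft{\PTbis\mid\psi}\arrow\reft{\PT\mid\varphi}\eqtype\reft{\PTbis\arrow\PT\mid\psi\realto\varphi}$ from Figure~\ref{fig:reft:subtyping}. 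Once that bookkeeping is in place, the induction runs smoothly and both equivalences (sound $\iff$ all $\eta$-components sound, derivable $\iff$ all $\eta$-components derivable) follow in lockstep.
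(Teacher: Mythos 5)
Your treatment of soundness, and the easy (``only if'') direction of derivability, are fine and agree with the paper's: the paper likewise reduces to the two cases $\RT_1 \times \RT_2$ and $\RTbis \arrow \RT$ and handles each by a dedicated lemma. The problem is the ``if'' direction of derivability, which is where the real work lies, and there your proposal has a genuine gap. Your primary route --- derive $\Env \thesis \pair{\pi_1 M, \pi_2 M} : \RT_1 \times \RT_2$ (resp.\ $\Env \thesis \lambda x.\,Mx : \RTbis \arrow \RT$) and then pass to $M$ by ``conversion/subsumption compatible with the evaluation rules'' --- cannot work in this system: the typing rules are syntax-directed on the term, subtyping only relates refinements of the same pure type and never changes the subject, and the Remark about evaluation rules is a statement about the denotational semantics, not an additional typing rule. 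A derivation for $\pair{\pi_1 M,\pi_2 M}$ therefore yields nothing about $M$ itself.

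Your fallback --- deriving the admissible rule with premises $\Env \thesis M : \reft{\PT_1\times\PT_2 \mid \form{\pi_1}\varphi}$ and $\Env \thesis M : \reft{\PT_1\times\PT_2 \mid \form{\pi_2}\psi}$ --- is indeed derivable (it is just the $\bigwedge$-rule plus the definition of $\pair{\varphi,\psi}$), but it presupposes exactly what is missing: judgments about $M$ at $\form{\pi_i}$-refinements of the product type. The actual difficulty is to turn a given derivation of $\Env \thesis \pi_i M : \RT_i$ into a derivation of $\Env \thesis M : \reft{\UPT{\RT_1}\times\UPT{\RT_2} \mid \form{\pi_i}\varphi_i}$ (and, in the arrow case, a derivation of $\Env, x:\RTbis \thesis Mx : \RT$ into one of $\Env \thesis M : \RTbis \arrow \RT$). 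The paper does this by an induction on the typing derivation itself, inverting through the structural rules that can appear last (the $\bigwedge$-introduction on the type, the $\bigvee$-elimination on a context variable, and subtyping) until the projection rule (resp.\ the application rule) is reached, whose premise already types $M$; the pieces are then recombined with the conjunction rule (using $\bigwedge_i(\psi\realto\varphi_i) \thesis \psi\realto\bigwedge_i\varphi_i$ and $\bigwedge_i(\psi_i\realto\varphi) \thesis (\bigvee_i\psi_i)\realto\varphi$ in the arrow case). That derivation-inversion argument is the substance of the proof and is absent from your proposal.
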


\begin{restatable}[Main Result]{theorem}{ThmMain}
\label{thm:main}%
If $\Env \thesis M : \RT$ is sound and normal
then $\Env \thesis M : \RT$ is derivable in the system of \S\ref{sec:reft}
extended with eq.\ \eqref{eq:compl:cc}.
\end{restatable}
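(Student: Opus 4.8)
The plan is to reduce the general normal judgment to the finite case (Theorem~\ref{thm:compl:fin}), using Well-Filteredness (Proposition~\ref{prop:wf}) in place of the rule $\ax{WF}$. First I would invoke Proposition~\ref{prop:main:eta} to reduce to the case where $\RT$ is itself normal, i.e.\ $\RT = \reft{\PT \mid \varphi}$ with $\varphi \in \Norm(\PT)$ (the pure case being trivial). Since $\varphi = \bigwedge_{i \in I}\bigvee_{j \in J_i}\psi_{i,j}$ with $\psi_{i,j} \in \Lang_\land$, and since the refinement typing rules let us split a conjunctive goal (the rule deriving $M : \reft{\PT \mid \bigwedge_i \varphi_i}$ from $M : \reft{\PT \mid \varphi_i}$ for all $i$), it suffices to treat a single disjunctive goal $\varphi = \bigvee_{j \in J}\psi_j$ with each $\psi_j \in \Lang_\land$. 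Similarly, using the context rule for disjunctions together with Lemma~\ref{lem:reft}, each hypothesis $x_k : \RTbis_k$ in $\Env$ (with $\RTbis_k \eqtype \reft{\PTbis_k \mid \bigwedge_{i}\bigvee_{j}\chi^k_{i,j}}$ in normal form) can be decomposed: conjunctions in a context type are harmless (they just give a stronger hypothesis, handled by subtyping), while each disjunction in a context type is eliminated by the corresponding context rule, which performs a case split. So after finitely many such reductions I may assume that $\Env$ is a \emph{finite} context, i.e.\ every $\RTbis_k$ mentions only formulae in $\Lang_\land$.

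At this point the situation is: $\Env$ finite, and the goal $\reft{\PT \mid \bigvee_{j \in J}\psi_j}$ with $\psi_j \in \Lang_\land$ and $J$ possibly infinite. Soundness gives $\I M(\vec u) \in \bigcup_{j \in J}\I{\psi_j}$ for all tuples $\vec u \in \I\Env$; equivalently, writing $g = \I M \colon \I\Env \to \I\PT$, we have $\I\Env \sle g^{-1}\!\big(\bigcup_j \I{\psi_j}\big)$. Now the key move: $\I\Env = \prod_k \I{\RTbis_k}$, and since each $\I{\RTbis_k}$ is a \emph{compact saturated} subset of $\I{\PTbis_k}$ (by Lemma~\ref{lem:top:char:fin}, each such set is either empty or of the form $\up d_k$ with $d_k$ finite — and $\up d_k$ is compact saturated; the empty case is degenerate and handled separately), the product $\I\Env$ is compact saturated in $\I{\UPT\Env}$. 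Meanwhile $\bigcup_j \I{\psi_j}$ is Scott-open. I would then write $\I\Env$ as the intersection of the filtering family $\{\I\Env\}$ — more precisely, I need a filtering family of compact saturated sets whose intersection is contained in the open set $g^{-1}(\bigcup_j\I{\psi_j})$ so that Proposition~\ref{prop:wf} yields a single member contained in it. The natural family is the one-element family $\{\I\Env\}$ itself: Proposition~\ref{prop:wf} then directly gives $\I\Env \sle g^{-1}(\I{\psi_{j_0}})$ for a \emph{single} $j_0 \in J$ — here I use that $\I\Env$ is compact and the $\I{\psi_j}$ are saturated, exactly as in the proof of soundness of $\ax{F}$. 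Thus the sound judgment $\Env \thesis M : \reft{\PT \mid \psi_{j_0}}$ holds, and by subtyping (the rule $\reft{\PT \mid \psi_{j_0}} \subtype \reft{\PT \mid \bigvee_j \psi_j}$, valid since $\psi_{j_0} \thesis \bigvee_j \psi_j$) it suffices to derive this finite judgment. But $\Env$ is finite and $\reft{\PT \mid \psi_{j_0}}$ is finite, so Theorem~\ref{thm:compl:fin} applies and gives derivability.

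The main obstacle I anticipate is bookkeeping in the reduction steps rather than any deep topological point: one must check that the decomposition of $\Env$ via the disjunction-in-context rule and of the goal via the conjunction rule genuinely terminates and that soundness is preserved along each step (it is, since both rules are invertible in the semantic sense — each is an iff at the level of $\I{\cdot}$), and one must be careful that the intermediate judgments stay normal so that Proposition~\ref{prop:main:eta} and the finite-case theorem remain applicable. A subtle point worth spelling out is the degenerate case where some $\I{\RTbis_k} = \emptyset$: then $\I\Env = \emptyset$, the judgment is sound vacuously, and one derives it directly — from $\C$-reasoning one gets $\psi_k \thesis \False$ for the relevant $\Lang_\land$-formula (Proposition~\ref{prop:compl:fin:ded}), hence via the context rule and $\False \thesis \varphi$ the goal follows. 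Finally, one must double-check that the single application of Proposition~\ref{prop:wf} really avoids $\ax{WF}$: it does, because $\ax{WF}$ is only needed to handle an \emph{infinite conjunction in the antecedent of a $\realto$}, whereas here the compact saturated set $\I\Env$ is already in hand as a finite product of compact saturated sets, so no infinitary filtering family and no limit argument over nested antecedents is required.
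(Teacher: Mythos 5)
There is a genuine gap in your context-side reduction. A normal hypothesis type is $\reft{\PTbis_k \mid \bigwedge_{i \in I}\bigvee_{j \in J_i}\chi^k_{i,j}}$ where the outer conjunction may be \emph{infinite}. After distributing (rule $\ax{D}$) and case-splitting on the resulting outer disjunction with the $\bigvee$-left context rule, what remains in the context is an \emph{infinite} conjunction $\bigwedge_{i \in I}\chi^k_{i,f(i)}$ of $\Lang_\land$-formulae; this is not a formula of $\Lang_\land$, so your claim that ``after finitely many such reductions $\Env$ is a finite context'' is false. Your remark that conjunctions in a hypothesis are ``harmless, handled by subtyping'' does not repair this: subtyping lets you pass from a derivation whose hypothesis is a finite subconjunction $\bigwedge_{j \in L}\chi_j$ to one with the full conjunction, but only once you know \emph{which} finite $L$ suffices --- and determining that finite $L$ is exactly the content of the well-filteredness step you have skipped. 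As a result, Lemma~\ref{lem:top:char:fin} cannot be invoked to write each $\I{\RTbis_k}$ as $\up d_k$ with $d_k$ finite, and the final appeal to Theorem~\ref{thm:compl:fin} is illegitimate because its requirement that $\Env$ be finite is not met.

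Correspondingly, your application of Proposition~\ref{prop:wf} to the one-element family $\{\I\Env\}$ is vacuous (for a singleton family the conclusion is literally the hypothesis); the single-disjunct extraction you perform there is really the least-element-plus-saturation argument from the soundness proof of $\ax{F}$, and it presupposes $\I\Env = \up(e_1,\dots,e_n)$, which is again only available once the context has been made finite. The paper instead applies Proposition~\ref{prop:wf} to the genuinely filtering family $\Filt$ of all products $\I{\bigwedge_{\ell \in L_1}\psi_{1,\ell}} \times \dots \times \I{\bigwedge_{\ell \in L_n}\psi_{n,\ell}}$ with each $L_k$ a \emph{finite} subset of $J_k$; since $\bigcap\Filt = \I\Env \sle \I M^{-1}(\I\varphi)$ and the latter is open, well-filteredness produces finite $L_1,\dots,L_n$ whose product already lies inside it. Only after this step is the context finite, is each hypothesis interpreted as $\up e_k$ with $e_k$ a finite element, can a single disjunct $\varphi_k$ of the goal be extracted, and does Theorem~\ref{thm:compl:fin} apply. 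Your closing claim that ``no infinitary filtering family \dots is required'' is therefore exactly backwards: that family is the heart of the proof.
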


\begin{example}
\label{ex:main}
Using Examples \ref{ex:log:modalnf} and \ref{ex:log:distr},
the judgments for $\filter$, $\diag$
and $\bft$ in Table~\ref{tab:reft} (Example~\ref{ex:reft:fun})
can be assumed to be normal whenever so is $\varphi$.
Hence our Main Theorem~\ref{thm:main} applies and these
judgments are derivable 
in the system of \S\ref{sec:reft} extended with eq.\ \eqref{eq:compl:cc},
but without the rule $\ax{WF}$.
This improves on \cite{jr21esop},
which does not handle $\filter$,
and which handles $\bft$ only when $\triangle$ is $\forall\Box$.

As for $\map$,
one has to assume that $\psi \in \Lang_\land$ (in addition to $\varphi \in \Norm$).
\end{example}

\subsubsection{The General Case.}
\label{sec:compl:general}
Using $\ax{WF}$ and Example~\ref{ex:log:modalnf},
any formula is \emph{provably} equivalent to a $\psi \in \Norm$.
This yields the completeness result of Bonsangue \& Kok \cite{bk03ic}.

\begin{restatable}{lemma}{LemComplNfWf}
\label{lem:compl:nf:wf}
For each $\varphi \in \Lang(\PT)$, there is a $\psi \in \Norm(\PT)$
such that $\varphi \thesisiff \psi$
in the extension of Figure~\ref{fig:log:ded} (\S\ref{sec:log})
with eq.\ \eqref{eq:compl:cc} and $\ax{WF}$.
\end{restatable}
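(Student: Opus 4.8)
The plan is to proceed by induction on the structure of $\varphi \in \Lang(\PT)$, showing at each stage that $\varphi$ is provably equivalent (using the basic rules of Figure~\ref{fig:log:ded}, eq.~\eqref{eq:compl:cc}, and $\ax{WF}$) to some $\psi \in \Norm(\PT)$, i.e.\ some $\bigwedge_{i \in I}\bigvee_{j \in J_i}\psi_{i,j}$ with each $\psi_{i,j} \in \Lang_\land(\PT)$. The base case is $\varphi = \form a$ for $a \in \BT$, which is already in $\Lang_\land(\BT) \subseteq \Norm(\BT)$. The conjunction and disjunction cases are immediate from the distributivity law $\ax{D}$ and its converse (Example~\ref{ex:log:distr}): a conjunction of normal forms is, after applying $\ax{D}$ inside each disjunct and redistributing, again a normal form, and a disjunction of normal forms is handled dually; the bookkeeping here is routine.

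The interesting cases are the three modalities. For $\form\triangle\varphi$ with $\triangle \in \{\pi_1,\pi_2,\fold\}$, I first apply the induction hypothesis to get $\varphi \thesisiff \bigwedge_i \bigvee_j \psi_{i,j}$, then use the monotonicity rule $\dfrac{\psi \thesis \varphi}{\form\triangle\psi \thesis \form\triangle\varphi}$ in both directions together with the commutation rules $\bigwedge_i \form\triangle\varphi_i \thesisiff \form\triangle\bigwedge_i\varphi_i$ and $\form\triangle\bigvee_i\varphi_i \thesisiff \bigvee_i\form\triangle\varphi_i$ (the content of Example~\ref{ex:log:modalnf}) to push $\form\triangle$ all the way inside, leaving each $\form\triangle\psi_{i,j}$ with $\psi_{i,j} \in \Lang_\land$; such a formula is again in $\Lang_\land$ by definition of $\Lang_\land$, so the result is normal. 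For the arrow case $\psi \realto \varphi$, apply the induction hypothesis to $\psi$ to obtain $\psi \thesisiff \bigwedge_{i \in I}\theta_i$ with $\theta_i \in \Lang_\land$ (we only need the antecedent in $\Lang_\land$ form for $\ax{WF}$, which is why we take the conjunctive-normal view of $\psi$ rather than a full normal form); simultaneously apply the induction hypothesis to $\varphi$. Using the rule $\bigwedge_i(\psi_i \realto \varphi) \thesis (\bigvee_i \psi_i)\realto\varphi$ together with $\ax{WF}$, rewrite $(\bigwedge_{i \in I}\theta_i)\realto\varphi$ as $\bigvee_{J \sle I\text{ finite}}\big((\bigwedge_{j \in J}\theta_j)\realto\varphi\big)$; for the antecedent of $\ax{WF}$ to apply we need $\varphi \in \Lang_\Open(\PT)$, so beforehand we rewrite $\varphi$ into the form $\bigvee_k \chi_k$ with $\chi_k \in \Lang_\land$ (available since by the induction hypothesis $\varphi$ is a normal form, hence in particular provably equivalent to an element of $\Lang_\Open$ via $\ax{D}$). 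Now each finite conjunction $\bigwedge_{j \in J}\theta_j$ with $\theta_j \in \Lang_\land$ is itself in $\Lang_\land$, and each $(\bigwedge_{j\in J}\theta_j)\realto\chi_k$ is handled by the rule $\ax{F}$ (moving the outer disjunction over $k$ outside) plus the monotonicity rule for $\realto$, reducing everything to finitely-indexed disjunctions and conjunctions of formulae of the shape $\theta \realto \chi$ with $\theta,\chi \in \Lang_\land$, hence themselves in $\Lang_\land$; a final application of $\ax{D}$ puts the whole thing into $\Norm(\PTbis \arrow \PT)$.

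The main obstacle I anticipate is the arrow case, and specifically the interplay of quantifier scopes: $\ax{WF}$ requires the consequent to lie in $\Lang_\Open$, i.e.\ a disjunction-of-conjunctions shape with the $\Lang_\land$ pieces at the leaves, whereas a generic normal form of $\varphi$ has the opposite (conjunction-of-disjunctions) outer shape, so one must be careful to first redistribute $\varphi$ via $\ax{D}$ and then verify that after applying $\ax{WF}$, $\ax{F}$, and the two structural rules for $\realto$ the result can be redistributed back into $\Norm$ without an infinite regress. The key point making this terminate is that $\ax{WF}$ only introduces \emph{finite} index sets on the antecedent side, so the formulae $\bigwedge_{j \in J}\theta_j$ stay in $\Lang_\land$; and $\Lang_\land$ is closed under finite conjunction and contains $\False$, so the leaves never leave $\Lang_\land$. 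One should also double-check the degenerate subcase $I = \emptyset$ — then $\bigwedge_{i\in I}\theta_i = \True$ and $\True \realto \varphi \thesisiff \varphi$ already by the structural rules — and the subcase where some $\bigwedge_{j\in J}\theta_j \thesis \False$, which is absorbed harmlessly since $\False \realto \chi \thesisiff \True$.
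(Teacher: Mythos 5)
Your overall strategy coincides with the paper's (induction on $\varphi$, Example~\ref{ex:log:distr} for $\bigwedge/\bigvee$, Example~\ref{ex:log:modalnf} for the modalities, then $\ax{WF}$ and $\ax{F}$ in the arrow case), but the arrow case as you have written it has two genuine gaps.

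First, you propose to rewrite the consequent $\varphi$ of $\psi \realto \varphi$ "into the form $\bigvee_k \chi_k$ with $\chi_k \in \Lang_\land$ \dots\ via $\ax{D}$". This is not available: a normal form $\bigwedge_{i \in I}\bigvee_{j \in J_i}\varphi_{i,j}$ with $I$ infinite (e.g.\ $\Box\form\hd\form a$) is \emph{not} provably (nor even semantically) equivalent to a formula of $\Lang_\Open$ --- applying $\ax{D}$ produces $\bigvee_f \bigwedge_{i \in I}\varphi_{i,f(i)}$ whose inner conjunctions are infinite, hence not in $\Lang_\land$, and semantically an infinite intersection of opens need not be open. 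The correct move, which the paper makes, is to keep the outer conjunction of $\varphi$ \emph{outside} the arrow, using $\bigwedge_{i}(\psi \realto \varphi_i) \thesisiff \psi \realto \bigwedge_i \varphi_i$ to reduce to consequents that are a single disjunction of $\Lang_\land$ formulae. Relatedly, your claim that the induction hypothesis yields $\psi \thesisiff \bigwedge_{i}\theta_i$ with $\theta_i \in \Lang_\land$ is false as stated (a normal form is a $\bigwedge\bigvee$, not a $\bigwedge$); one must first pass to $\bigvee\bigwedge$-form via Example~\ref{ex:log:distr} and then use $\bigwedge_i(\psi_i \realto \varphi) \thesisiff (\bigvee_i \psi_i) \realto \varphi$ to peel off the outer disjunction --- you cite this rule, but deploy it for the $\ax{WF}$ step rather than for this reduction, and note that the resulting $\bigwedge_{i\in I}\theta_i$ may still have $I$ infinite (which is fine: that is exactly the shape $\ax{WF}$ expects).

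Second, you never treat the case where the consequent is the empty disjunction $\False$ (i.e.\ $K = \emptyset$ after the reduction). There $\ax{F}$ is inapplicable, since it explicitly requires $I \neq \emptyset$, and this is precisely where eq.\ \eqref{eq:compl:cc} enters: one uses the dichotomy of Proposition~\ref{prop:compl:fin:ded} --- either $\C(\psi)$, and rule $\ax{C}$ gives $(\psi \realto \False) \thesisiff \False$, or $\psi \thesis \False$, and Remark~\ref{rem:log:realto} gives $(\psi \realto \False) \thesisiff \True$. Your proposal never invokes $\ax{C}$ or the consistency predicate, even though the statement of the lemma includes eq.\ \eqref{eq:compl:cc} precisely for this case. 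A minor further point: the rewriting via $\ax{WF}$ must be an equivalence, so you should also check the converse entailment (it follows from the left-rule for $\bigvee$, contravariance of $\realto$, and $\bigwedge_{i \in I}\theta_i \thesis \bigwedge_{j \in J}\theta_j$ for $J \sle I$).
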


\begin{corollary}[Bonsangue \& Kok \cite{bk03ic}]
If $\Env \thesis M : \RT$ is sound 
then $\Env \thesis M : \RT$ is derivable in the system of \S\ref{sec:reft}
extended with eq.\ \eqref{eq:compl:cc} and $\ax{WF}$.
\end{corollary}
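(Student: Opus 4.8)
The plan is to reduce to the normal case, Theorem~\ref{thm:main}, by putting every formula occurring in $\Env$ and $\RT$ into normal form; the rule $\ax{WF}$ enters exactly at this normalization step, via Lemma~\ref{lem:compl:nf:wf}.

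So, let $\Env \thesis M : \RT$ be sound, with $\Env = x_1:\RTbis_1,\dots,x_n:\RTbis_n$. By Lemma~\ref{lem:reft} choose, for each $i$, a formula $\chi_i \in \Lang(\UPT{\RTbis_i})$ with $\RTbis_i \eqtype \reft{\UPT{\RTbis_i} \mid \chi_i}$, and a formula $\chi \in \Lang(\UPT\RT)$ with $\RT \eqtype \reft{\UPT\RT \mid \chi}$. By Lemma~\ref{lem:compl:nf:wf}, there are normal forms $\psi_i \in \Norm(\UPT{\RTbis_i})$ and $\varphi \in \Norm(\UPT\RT)$ with $\chi_i \thesisiff \psi_i$ and $\chi \thesisiff \varphi$ in the extension of Figure~\ref{fig:log:ded} with eq.~\eqref{eq:compl:cc} and $\ax{WF}$. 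Applying in both directions the subtyping rule deriving $\reft{\PT \mid \psi} \subtype \reft{\PT \mid \varphi}$ from $\psi \thesis_\PT \varphi$, we get $\RTbis_i \eqtype \RTbis_i'$ for the normal type $\RTbis_i' \deq \reft{\UPT{\RTbis_i} \mid \psi_i}$, and $\RT \eqtype \RT'$ for the normal type $\RT' \deq \reft{\UPT\RT \mid \varphi}$. Set $\Env' \deq x_1:\RTbis_1',\dots,x_n:\RTbis_n'$; then $\Env \subtype \Env'$ and $\Env' \subtype \Env$.

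The judgment $\Env' \thesis M : \RT'$ is normal ($\Env'$ is normal and a normal type is fonf) and still sound. Indeed, $\UPT{\Env'} = \UPT\Env$ and $\UPT{\RT'} = \UPT\RT$, so the pure typing $\UPT\Env \thesis M : \UPT\RT$ carries over verbatim; and $\RTbis_i \eqtype \RTbis_i'$, $\RT \eqtype \RT'$ force $\I{\RTbis_i'} = \I{\RTbis_i}$ and $\I{\RT'} = \I\RT$ (soundness of subtyping), so that $\I M(u_1,\dots,u_n) \in \I{\RT'}$ whenever $u_i \in \I{\RTbis_i'}$. Hence Theorem~\ref{thm:main} applies: $\Env' \thesis M : \RT'$ is derivable in the system of \S\ref{sec:reft} extended with eq.~\eqref{eq:compl:cc}, and a fortiori in that system further extended with $\ax{WF}$.

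Finally, feed this derivation into one instance of the subsumption rule of Figure~\ref{fig:reft:reftyping}, with premises $\Env \subtype \Env'$, $\RT' \subtype \RT$ and $\Env' \thesis M : \RT'$, obtaining $\Env \thesis M : \RT$. All steps except the normalization go through already in the system of \S\ref{sec:reft} extended only with eq.~\eqref{eq:compl:cc}; the single point that genuinely requires $\ax{WF}$ is Lemma~\ref{lem:compl:nf:wf}, i.e.\ the equivalences $\chi_i \thesisiff \psi_i$ and $\chi \thesisiff \varphi$ that justify the subtyping steps, which is precisely why the conclusion is stated for the system extended with both eq.~\eqref{eq:compl:cc} and $\ax{WF}$. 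The remaining bookkeeping with the rules of Figures~\ref{fig:reft:subtyping} and~\ref{fig:reft:reftyping} is routine, and I do not expect any real obstacle beyond keeping track of which deduction system underlies each subtyping step.
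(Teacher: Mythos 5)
Your proof is correct and takes exactly the route the paper intends: collapse $\Env$ and $\RT$ to single-formula refinements via Lemma~\ref{lem:reft}, normalize those formulae provably via Lemma~\ref{lem:compl:nf:wf} (the only place $\ax{WF}$ is needed), apply Theorem~\ref{thm:main} to the resulting sound normal judgment, and transfer back by subsumption. The bookkeeping you flag (soundness of the extended deduction system justifying that the normalized judgment is still sound) is indeed routine and covered by Lemmas~\ref{lem:wf} and the soundness half of Proposition~\ref{prop:compl:fin:ded}.
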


\section{Future Work}
\label{sec:conc}

We think of the present infinitary system as an intermediary
between denotational semantics and finitary type systems
in the style of \cite{jr21esop}.
In the latter, the logic uses fixpoints in the spirit of the modal $\mu$-calculus
(cf.\ Example~\ref{ex:sem:modalmu}).
When fixpoints are \emph{alternation-free}%
\footnote{This corresponds to
``alternation depth $1$'' in~\cite[\S 2.2 \& \S 4.1]{bw18chapter}.
See also~\cite[\S 7]{bs07chapter} and~\cite{sv10apal}.}
(which includes $\LTL$ on $\Stream\BT$ and $\CTL$ on $\Tree \BT$),
their semantics can computed by iteration up to $\omega$.
In order to reason syntactically over (finite) unfoldings of alternation-free fixpoints,
the system of \cite{jr21esop} uses a term language over natural
numbers (with quantifications over these).

We target a similar finitarization of our system,
in which alternation-free fixpoints $\mu p.\varphi(p)$
and $\nu p.\varphi(p)$ would be seen as
$(\exists k)\varphi^k(\False)$ and $(\forall k)\varphi^k(\True)$.
Rules $\ax{WF}$ and $\ax{D}$ may turn out to be problematic.
Our Main Theorem~\ref{thm:main} shows that $\ax{WF}$
is not needed in an interesting range of cases.
On the other hand, in view of Example~\ref{ex:log:distr}
we think rule $\ax{D}$ could be handled (under appropriate assumptions)
using enough fresh Skolem symbols, as in
\[
\tag{$f$ fresh function symbol}
\begin{array}{c}
\dfrac{(\forall k)\psi(k,f(k)) ~\thesis~ \varphi}
  {(\forall k)(\exists \ell)\psi(k,\ell) ~\thesis~ \varphi}

\qquad

\dfrac{\psi ~\thesis~ (\exists k)\varphi(k,f(k))}
  {\psi ~\thesis~ (\exists k)(\forall \ell)\varphi(k,\ell)}

\end{array}
\]

Further, we expect to handle alternation-free modal $\mu$-properties
on (finitary) polynomial types, thus targeting a system which as a whole
would be based on $\FPC$.
But polynomial types involve sums,
and sums are not universal in $\Scott$.%
\footnote{See e.g. \cite[Exercise 6.1.10]{ac98book}.}
We think of working with Call-By-Push-Value (CBPV)~\cite{levy03book,levy22siglog}
for the usual adjunction between dcpos and cpos with strict functions.
In the long run, it would be nice if this basis could extend to
enriched models of CBPV,
so as to handle further computational effects.
Print and global store are particularly relevant,
as an important trend in proving temporal properties
considers programs generating streams of events.
Major works in this line include
\cite{ssv08jfp,hc14lics,hl17lics,nukt18lics,kt14lics,ust17popl,nukt18lics,%
su23popl}.
In contrast with ours, these approaches are based on trace semantics
of syntactic expressions rather than denotational domains.%
\footnote{See e.g.~\cite[Theorem 4.1 (and Figure 6)]{nukt18lics}
or~\cite[Theorem 1 (and Definition 20 from the full version)]{su23popl}.}

In a different direction, we think the approach of this paper
could extend to linear types~\cite{hjk00mscs,nw03concur,winskel04llcs},
possibly relying on the categorical study of~\cite{bf06book}.

\subsubsection{Acknowledgments.}
We thank the anonymous referees for constructive comments which have hopefully helped
to improve the presentation of the paper.
Research supported by the ANR project
\href{https://anr.fr/Projet-ANR-20-CE48-0005}{\texttt{QuaReMe}} (ANR-20-CE48-0005).

\appendix
\section{Additional Material}
\label{sec:app}

\begin{figure}[t!]
\[
\begin{array}{c}
\dfrac{(x:\PT) \in \Env}
  {\Env \thesis x:\PT}

\qquad\quad

\dfrac{\Env,x:\PTbis \thesis M : \PT}
  {\Env \thesis \lambda x.M : \PTbis \arrow \PT}

\qquad\quad

\dfrac{\Env \thesis M : \PTbis \arrow \PT
  \qquad
  \Env \thesis N : \PTbis}
  {\Env \thesis M N : \PT}

\\\\

\dfrac{\Env \thesis M : \PT
  \qquad
  \Env \thesis N : \PTbis}
  {\Env \thesis \pair{M,N} : \PT \times \PTbis}

\qquad\quad

\dfrac{\Env \thesis M : \PT \times \PTbis}
  {\Env \thesis \pi_1(M) : \PT}

\qquad\quad

\dfrac{\Env \thesis M : \PT \times \PTbis}
  {\Env \thesis \pi_2(M) : \PTbis}

\\\\

\dfrac{\Env,x:\PT \thesis M : \PT}
  {\Env \thesis \fix x.M : \PT}

\qquad\quad

\dfrac{\Env \thesis M : \PT[\rec\TV.\PT/\TV]}
  {\Env \thesis \fold(M) : \rec\TV.\PT}

\qquad\quad

\dfrac{\Env \thesis M : \rec\TV.\PT}
  {\Env \thesis \unfold(M) : \PT[\rec\TV.\PT/\TV]}

\\\\

\dfrac{}
  {\Env \thesis a : \BT}
~\text{($\BT \in \Base$ and $a \in \BT$)}

\qquad\quad

\dfrac{ \Env \thesis M : \BT
  \qquad\text{for each $a \in \BT$,\quad} \Env \thesis N_a : \PT}
  {\Env \thesis \cse\ M\ \copair{a \mapsto N_a \mid a \in \BT} : \PT}

\end{array}
\]
\caption{Typing Rules of the Pure Calculus.%
\label{fig:app:puretyping}}
\end{figure}

\noindent
Figure~\ref{fig:app:puretyping} gathers all rules of the pure calculus.

\subsection{\nameref{sec:wf}}
\label{sec:app:wf}

A give a proof of Lemma~\ref{lem:wf} 
in order to illustrate Proposition~\ref{prop:wf} in a simple case.

\LemWf*

\begin{proof}
\begin{full}
Recall that the rule $\ax{WF}$ is
\[
\dfrac{\text{for all $i\in I$, $\psi_i \in \Lang_\land(\PTbis)$}
  \qquad
  \varphi \in \Lang_\Open(\PT)}
  {
  \left( \bigwedge_{i \in I} \psi_i \right)
  \realto
  \varphi
  \,\thesis\,
  \bigvee_{\text{$J \sle I$, $J$ finite}}
  \left(
  \left( \bigwedge_{j \in J} \psi_j \right)
  \realto
  \varphi
  \right)
  }
\]
\end{full}

We shall apply Proposition~\ref{prop:wf} to the Scott domain $\SI\PTbis$.
Let $f \in \SI{\PTbis \arrow \PT}$
such that $\SI{\bigwedge_{i \in I} \psi_i} \sle f^{-1}(\SI\varphi)$.
Note that $f^{-1}(\SI\varphi)$ is Scott-open since $f$ is Scott-continuous
while $\SI\varphi$ is Scott-open by Lemma~\ref{lem:top:char}.

Let $\Filt$ be the set of all
$\SI{\bigwedge_{j \in J} \psi_j}$, where $J$ ranges over all finite subsets of $I$.
We check the assumptions of Proposition~\ref{prop:wf}.
\begin{itemize}
\item
First, $\Filt$ is filtering since it is non-empty
(as $\emptyset$ is a finite subset of $I$)
and since given 
$\SI{\bigwedge_{j \in J} \psi_j}$
and
$\SI{\bigwedge_{k \in K} \psi_k}$
in $\Filt$,
we have
$\SI{\bigwedge_{\ell \in J \cup K} \psi_\ell} \in \Filt$
with
\(
  \SI{\bigwedge_{\ell \in J \cup K} \psi_\ell}
  \sle
  \SI{\bigwedge_{j \in J} \psi_j}
  ,
  \SI{\bigwedge_{k \in K} \psi_k}
\).

\item
Second, it follows from 
Lemmas~\ref{lem:top:char:fin} and \ref{lem:top:char}
that $\Filt$ consists of compacts saturated subsets of $\SI\PTbis$.

\item
Third, we have
\[
\begin{array}{l l l}
  \I{\bigwedge_{i \in I} \psi_i}
& =
& \bigcap_{i \in I} \I{\psi_i}
\\

& =
& \bigcap_{J \sle_\fin I} \bigcap_{j \in J} \I{\psi_j}
\\

& =
& \bigcap_{J \sle_\fin I} \I{\bigwedge_{j \in J} \psi_j}
\\

& =
& \bigcap \Filt
\end{array}
\]
\end{itemize}

\noindent
Now we are done since by Proposition~\ref{prop:wf} there is some
$\SI{\bigwedge_{j \in J} \psi_j} \in \Filt$
such that 
$\SI{\bigwedge_{j \in J} \psi_j} \sle f^{-1}(\I\varphi)$.
\qed
\end{proof}

\subsection{\nameref{sec:main}}
\label{sec:app:main}

We prove Lemma~\ref{lem:compl:nf} and our Main Theorem~\ref{thm:main}.

\LemComplNf*

\begin{proof}
The general strategy is to reduce to the finite case
(Proposition~\ref{prop:compl:fin:ded}),
by using Proposition~\ref{prop:wf}, but without using the rule $\ax{WF}$.

Let $\varphi,\psi \in \Norm(\PT)$ such that $\I\psi \sle \I\varphi$.
Since $\varphi \in \Norm(\PT)$, we have
$\varphi = \bigwedge_{k \in K} \varphi_k$
with $\varphi_k \in \Lang_\Open(\PT)$.
Hence, for each $k \in K$ we have
$\I\psi \sle \I{\varphi_k}$.
Thanks to the right-rule for $\bigwedge$ in Figure~\ref{fig:log:ded},
we can therefore reduce to the case of $\varphi \in \Lang_{\Open}(\PT)$.

We now assume $\varphi \in \Lang_\Open(\PT)$,
with $\varphi = \bigvee_{k \in K} \varphi_k$
and $\varphi_k \in \Lang_\land(\PT)$.
Since $\psi \in \Norm(\PT)$, using Example~\ref{ex:log:distr}
we can actually put $\psi$ in $\bigvee\bigwedge$-form:
we have
$\psi \thesisiff \bigvee_{i \in I} \bigwedge_{j \in J_i} \psi_{i,j}$
with $\psi_{i,j} \in \Lang_\land(\PT)$.
If $\I\psi \sle \I\varphi$,
then for all $i \in I$ we have
$\I{\psi_i} \sle \I\varphi$.
Thanks to the left-rule for $\bigvee$ in Figure~\ref{fig:log:ded},
we can therefore reduce to the case where $\psi$ is
of the form $\bigwedge_{i \in I} \psi_i$ with $\psi_i \in \Lang_\land(\PT)$.

Assume $\SI{\bigwedge_{i \in I} \psi_i} \sle \I\varphi$
with $\psi_i \in \Lang_\land(\PT)$,
and with $\varphi \in \Lang_\Open(\PT)$ as above.
We use Proposition~\ref{prop:wf}.
Similarly as in the proof of Lemma~\ref{lem:wf},
let $\Filt$ be the set of all
$\SI{\bigwedge_{j \in J} \psi_j}$, where $J$ ranges over all finite subsets of $I$.
The assumptions of Proposition~\ref{prop:wf} are checked similarly as in the proof of
Lemma~\ref{lem:wf}.
Again similarly as in the proof of Lemma~\ref{lem:wf},
there is some finite $J \sle I$ such that
$\SI{\bigwedge_{j \in J} \psi_j} \sle \I\varphi$.

Since $\bigwedge_{i \in I}\psi_i \thesis \bigwedge_{j \in J}\psi_j$,
we are done if we show that 
$\bigwedge_{j \in J}\psi_i \thesis \varphi$
is derivable.
Note that $\bigwedge_{j \in J}\psi_j \in \Lang_\land(\PT)$
since $J$ is finite.

Assume that $\SI{\bigwedge_{j \in J}\psi_j} = \emptyset$.
By Proposition~\ref{prop:compl:fin:ded}
we have $\bigwedge_{j \in J}\psi_j \thesis \False$,
from which we get
$\bigwedge_{j \in J}\psi_i \thesis \varphi$.

Otherwise, by Lemma~\ref{lem:top:char:fin}
there is some finite $d \in \I\PT$ such that
$\up d = \SI{\bigwedge_{j \in J}\psi_j}$.
Hence $d \in \I\varphi$,
and there is some $k \in K$ such that $d \in \I{\varphi_k}$.
But this implies
$\SI{\bigwedge_{j \in J}\psi_j} \sle \I{\varphi_k}$,
and
$\bigwedge_{j \in J}\psi_j \thesis \varphi_k$
is derivable by
Proposition~\ref{prop:compl:fin:ded}.
We then obtain
$\bigwedge_{j \in J}\psi_i \thesis \varphi$
using the right-rule for $\bigvee$.
\qed
\end{proof}

We now turn to our main result (Theorem~\ref{thm:main}).
Given a typing context $\Env = x_1:\RTbis_1,\dots,x_n:\RTbis_n$,
we write $\I\Env$ for $\I{\RTbis_1} \times \dots \times \I{\RTbis_n}$.

\ThmMain*

\begin{proof}
Thanks to Proposition~\ref{prop:main:eta},
we only have to consider the case of a normal judgment $\Env \thesis M : \RT$
in which the type $\RT$ is normal.
The general idea of the proof is somehow similar to that of
Lemma~\ref{lem:compl:nf}:
we reduce to the finite case (Theorem~\ref{thm:compl:fin}),
by using Proposition~\ref{prop:wf}, but without using the rule $\ax{WF}$.

Since $\RT$ is normal, it is of the form $\reft{\PT \mid \varphi}$
with $\varphi \in \Norm(\PT)$.
Similarly as in Lemma~\ref{lem:compl:nf}
(but using the right-rule for $\bigwedge$ in Figure~\ref{fig:reft:reftyping}),
we can reduce to the case of $\varphi \in \Lang_\Open(\PT)$,
with $\varphi$ of the form $\bigvee_{k \in K} \varphi_k$,
where $\varphi_k \in \Lang_\land(\PT)$.

Assume $\Env = \Env', x : \RTbis$.
Since $\RTbis$ is normal, it is of the form
$\reft{\PTbis \mid \psi}$, with $\psi \in \Norm(\PTbis)$.
Again similarly as in Lemma~\ref{lem:compl:nf},
using Example~\ref{ex:log:distr}
we can actually put $\psi$ in $\bigvee\bigwedge$-form:
we have
$\psi \thesisiff \bigvee_{i \in I} \bigwedge_{j \in J_i} \psi_{i,j}$
with $\psi_{i,j} \in \Lang_\land(\PTbis)$.
For each $i \in I$,
the judgment
$\Env', x : \reft{\PTbis \mid \psi_i} \thesis M : \reft{\PT \mid \varphi}$
is sound (since so is $\Env \thesis M : \RT$).
Hence, using the left-rule for $\bigvee$
in Figure~\ref{fig:reft:reftyping},
we can reduce to the case where $\psi$ is a $\bigwedge$
of formulae in $\Lang_\land(\PTbis)$.

Repeating the above for each declaration $(x:\RTbis) \in \Env$,
we can assume that $\Env$ is of the form
$x_1:\RTbis_1,\dots,x_n:\RTbis_n$,
where $\RTbis_i = \reft{\PTbis_i \mid \psi_i}$
with $\psi_i = \bigwedge_{j \in J_i} \psi_{i,j}$
and $\psi_{i,j} \in \Lang_\land(\PTbis_i)$.

We shall now apply Proposition~\ref{prop:wf}
to the Scott domain $\I{\UPT\Env}$.
Note that $\I M$ is a Scott-continuous function $\I{\UPT\Env} \to \I\PT$,
so that $\SP \deq \I M^{-1}(\I\varphi)$
is open in $\I{\UPT\Env}$ by Lemma~\ref{lem:top:char}.
Let $\Filt$ consist of all the
\[
\begin{array}{l l l}
  \I{\bigwedge_{\ell \in L_1} \psi_{1,\ell}}
& \times \dots \times
& \I{\bigwedge_{\ell \in L_n} \psi_{n,\ell}}
\end{array}
\]

\noindent
where $L_1,\dots,L_n$ range over all finite subsets
of $J_1,\dots,J_n$, respectively.
It is easy to see
that $\Filt$ and $\SP$ meet the assumptions of Proposition~\ref{prop:wf},
namely that $\Filt$ is a filtering family of compact saturated
subsets of $\I{\UPT\Env}$ such that $\bigcap \Filt \sle \SP$.
\begin{full}
\begin{description} 
\item[$\Filt$ is filtering.]
Indeed, $\Filt$ is non-empty.
Moreover, given
\[
\begin{array}{l !{\quad\text{and}\quad} l}
  \I{\bigwedge_{\ell \in L_1} \psi_{1,\ell}}
  \times \dots \times
  \I{\bigwedge_{\ell \in L_n} \psi_{n,\ell}}

& \I{\bigwedge_{\ell \in L'_1} \psi_{1,\ell}}
  \times \dots \times
  \I{\bigwedge_{\ell \in L'_n} \psi_{n,\ell}}
\end{array}
\]

\noindent
in $\Filt$,
we have
\[
\begin{array}{l l l}
  \I{\bigwedge_{\ell \in L_1 \cup L'_1} \psi_{1,\ell}}
  \times \dots \times
  \I{\bigwedge_{\ell \in L_n \cup L'_n} \psi_{n,\ell}}
& \in
& \Filt
\end{array}
\]

\noindent
with
\[
\begin{array}{l l l}
  \I{\bigwedge_{\ell \in L_i \cup L'_i} \psi_{i,\ell}}
& \sle
& \I{\bigwedge_{\ell \in L_i} \psi_{i,\ell}}
  \,,\, 
  \I{\bigwedge_{\ell \in L'_i} \psi_{i,\ell}}
\end{array}
\]

\noindent
for all $i = 1,\dots,n$.

\item[$\Filt$ consists of compact saturated subsets.]

First, it is clear that $\Filt$ consists of saturated sets
since each $\SI{\bigwedge_{\ell \in L_i} \psi_{i,\ell}}$
is saturated by Lemma~\ref{lem:top:char},
while $\I{\UPT\Env}$ is equipped with the pointwise order.

Moreover, $\Filt$ consists of compact sets since
by Lemma~\ref{lem:top:char:fin}
each element of $\Filt$ is a (finite) product of
compact sets.

\item[We have $\bigcap \Filt \sle \SP$.]
Indeed, by assumption we have
$\prod_i \I{\psi_i} \sle \SP$,
while
\[
\begin{array}{l l l}
  \prod_i \I{\psi_i}
& =
& \bigcap_{j_1 \in J_1} \dots \bigcap_{j_n \in J_n}
  \prod_i \I{\psi_{i,j_i}}
\\

& =
& \bigcap_{L_1 \sle_\fin J_1} \dots \bigcap_{L_n \sle_\fin J_n}
  \prod_i \I{\bigwedge_{j \in L_i}\psi_{i,j}}
\\

& =
& \bigcap \Filt
\end{array}
\]
\end{description}
\end{full}

Hence Proposition~\ref{prop:wf} applies,
and there are 
finite
$L_1 \sle J_1,\dots,L_n \sle J_n$ s.t.\
\[
\begin{array}{l l l}
  \I{\bigwedge_{j \in L_1}\psi_{1,j}}
  \times \dots \times
  \I{\bigwedge_{j \in L_n}\psi_{n,j}}
& \sle
& \SP
\end{array}
\]

Using subtyping, we can therefore reduce to the
sound judgment
\[
\begin{array}{c}
  x_1 : \reft{\PTbis_1 ~\middle|~ \bigwedge_{j \in L_1}\psi_{1,j}}
  ,\dots,
  x_n : \reft{\PTbis_n ~\middle|~ \bigwedge_{j \in L_n}\psi_{n,j}}
  \thesis
  M : \reft{\PT \mid \varphi}
\end{array}
\]

Assume that for some $i$ we have
$\SI{\bigwedge_{j \in L_i}\psi_{i,j}} = \emptyset$.
Then Proposition~\ref{prop:compl:fin:ded}
yields
$\bigwedge_{j \in L_i}\psi_{i,j} \thesis \False$
and we can conclude using the left-rule for $\bigvee$
in Figure~\ref{fig:reft:reftyping}.

Otherwise, by Lemma~\ref{lem:top:char:fin}
for each $i$ there is some finite $e_i \in \I{\PTbis_i}$
such that
$\up e_i = \SI{\bigwedge_{j \in L_i}\psi_{i,j}}$.
Recall that $\varphi = \bigvee_{k \in K}\varphi_k$
with $\varphi_k \in \Lang_\land(\PT)$.
We have
\[
\begin{array}{*{5}{l}}
  \up(e_1,\dots,e_n)
& =
& \I{\bigwedge_{j \in L_1}\psi_{1,j}}
  \times \dots \times
  \I{\bigwedge_{j \in L_n}\psi_{n,j}}
& \sle
& \bigcup_{k \in K} \I M^{-1}(\I{\varphi_k})
\end{array}
\]

\noindent
Hence, for some $k \in K$ the judgment
\[
\begin{array}{c}
  x_1 : \reft{\PTbis_1 ~\middle|~ \bigwedge_{j \in L_1}\psi_{1,j}}
  ,\dots,
  x_n : \reft{\PTbis_n ~\middle|~ \bigwedge_{j \in L_n}\psi_{n,j}}
  \thesis
  M : \reft{\PT \mid \varphi_k}
\end{array}
\]

\noindent
is sound.
We can now conclude by Theorem~\ref{thm:compl:fin} and subtyping.
\qed
\end{proof}

%
\bibliographystyle{splncs04}
\bibliography{bibliographie}

\opt{draft,full}{\newpage}

\opt{full}{
\section{Proofs of \S\ref{sec:system} (\nameref{sec:system})}

\subsection{Proofs of \S\ref{sec:log} (\nameref{sec:log})}
\label{sec:proof:log}

\begin{figure}[t!]
\[
\begin{array}{c}

\dfrac{\text{for each $i \in I$, $\psi_i \,\thesis\, \varphi_i$}}
  {\bigwedge_{i \in I} \psi_i \,\thesis\, \bigwedge_{i \in I}\varphi_i}

\qquad\qquad

\dfrac{\text{for each $i \in I$, $\psi_i \thesis \varphi_i$}}
  {\bigvee_{i \in I} \psi_i \thesis \bigvee_{i \in I}\varphi_i}

\\\\

  \form\triangle \bigwedge_{i \in I} \varphi_i
  \,\thesisiff\,
  \bigwedge_{i \in I} \form\triangle \varphi_i

\qquad\qquad

  \bigvee_{i \in I} \form\triangle\varphi_i
  \,\thesisiff\,
  \form\triangle \bigvee_{i \in I}\varphi_i

\\\\

  \bigwedge_{i \in I} \bigvee_{j \in J_i} \varphi_{i,j}
  \,\thesisiff\,
  \bigvee_{f \in \prod_{i \in I}J_i}\bigwedge_{i \in I} \varphi_{i,f(i)}

\\\\

  \bigwedge_{f \in \prod_{i \in I}J_i}\bigvee_{i \in I} \varphi_{i,f(i)}
  \,\thesisiff\,
  \bigvee_{i \in I}\bigwedge_{j \in J_i}\varphi_{i,j}
\end{array}
\]
\caption{Some derivable rules and sequents,
where $\triangle$ is either $\pi_1$, $\pi_2$ or $\fold$.%
\label{fig:proof:log:derivable}}
\end{figure}

\noindent
In this Appendix~\ref{sec:proof:log},
we give details on Figure~\ref{fig:proof:log:derivable},
which gathers some derivable rule and sequents
(including those of Examples~\ref{ex:log:modalnf} and~\ref{ex:log:distr}).

\begin{lemma}
\label{lem:proof:log:functprop}
The following rules are derivable
\[
\begin{array}{c}

\dfrac{\text{for each $i \in I$, $\psi_i \thesis \varphi_i$}}
  {\bigwedge_{i \in I} \psi_i \thesis \bigwedge_{i \in I}\varphi_i}

\qquad\qquad

\dfrac{\text{for each $i \in I$, $\psi_i \thesis \varphi_i$}}
  {\bigvee_{i \in I} \psi_i \thesis \bigvee_{i \in I}\varphi_i}

\end{array}
\]
\end{lemma}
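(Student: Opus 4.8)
The plan is to derive both rules directly from the structural rules for set-indexed conjunctions and disjunctions in Figure~\ref{fig:log:ded}, using only cut and reflexivity; the modalities, the rules $\ax{D}$ and $\ax{F}$, and the $\realto$-rules play no role. This is exactly the claim anticipated in the remark following the definition of basic deduction.

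First I would treat the conjunction rule. Fix $i \in I$. Applying the left-rule for $\bigwedge$ (the projection rule) to the reflexivity instance $\psi_i \thesis \psi_i$ gives $\bigwedge_{i' \in I} \psi_{i'} \thesis \psi_i$. Cutting this against the hypothesis $\psi_i \thesis \varphi_i$ yields $\bigwedge_{i' \in I} \psi_{i'} \thesis \varphi_i$. Since this derivation works for every $i \in I$, the right-rule for $\bigwedge$ (cotupling) produces $\bigwedge_{i' \in I} \psi_{i'} \thesis \bigwedge_{i \in I} \varphi_i$, which is the desired conclusion.

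Dually, for the disjunction rule I would fix $i \in I$, apply the right-rule for $\bigvee$ (the injection rule) to the reflexivity instance $\varphi_i \thesis \varphi_i$ to get $\varphi_i \thesis \bigvee_{i' \in I} \varphi_{i'}$, and then cut against the hypothesis $\psi_i \thesis \varphi_i$ to obtain $\psi_i \thesis \bigvee_{i' \in I} \varphi_{i'}$. As this holds for each $i \in I$, the left-rule for $\bigvee$ (copairing) gives $\bigvee_{i \in I} \psi_i \thesis \bigvee_{i' \in I} \varphi_{i'}$.

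I do not expect any real obstacle: the argument is a routine unwinding of the (co)universal property of the indexed connectives. The only point requiring mild care is that a single index set $I$ must be used consistently for the $\psi_i$'s and the $\varphi_i$'s, both in the projection/injection step and in the final (co)tupling step; the degenerate case $I = \emptyset$ is automatically covered, since the two conclusions then reduce to $\True \thesis \True$ and $\False \thesis \False$, each an instance of reflexivity.
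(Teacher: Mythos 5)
Your derivation is correct and matches the paper's own proof in essence: both reduce the claim to one application of the left-rule for $\bigwedge$ (resp.\ right-rule for $\bigvee$) per index followed by the right-rule for $\bigwedge$ (resp.\ left-rule for $\bigvee$). The only cosmetic difference is that the paper applies the left-rule for $\bigwedge$ directly to the hypothesis $\psi_i \thesis \varphi_i$ to obtain $\bigwedge_{i' \in I}\psi_{i'} \thesis \varphi_i$ in one step, whereas you route through reflexivity and a cut; this is harmless.
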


\begin{proof}
The premise of the first rule
yields
$\bigwedge_{i \in I} \psi_i \thesis \varphi_i$
for all $i \in I$, from which we obtain
$\bigwedge_{i \in I} \psi_i \thesis \bigwedge_{i \in I} \varphi_i$.
The second rule is handled similarly.
\end{proof}

\begin{lemma}
The following sequents are derivable,
where $\triangle$ is either $\pi_1$, $\pi_2$ or $\fold$:
\[
\begin{array}{c}

  \form\triangle \bigwedge_{i \in I} \varphi_i
  \,\thesis\,
  \bigwedge_{i \in I} \form\triangle \varphi_i

\qquad\text{and}\qquad

  \bigvee_{i \in I} \form\triangle\varphi_i
  \,\thesis\,
  \form\triangle \bigvee_{i \in I}\varphi_i

\end{array}
\]
\end{lemma}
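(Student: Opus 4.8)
The plan is to obtain both sequents directly from the monotonicity rule for $\form\triangle$ (the rule of Figure~\ref{fig:log:ded} with premise $\psi \thesis \varphi$ and conclusion $\form\triangle\psi \thesis \form\triangle\varphi$) together with the introduction and elimination rules for $\bigwedge$ and $\bigvee$. No topological or semantic input is needed here; note moreover that the converse sequents $\bigwedge_{i\in I} \form\triangle\varphi_i \thesis \form\triangle\bigwedge_{i\in I}\varphi_i$ and $\form\triangle\bigvee_{i\in I}\varphi_i \thesis \bigvee_{i\in I}\form\triangle\varphi_i$ are already primitive rules of Figure~\ref{fig:log:ded}, so once the two stated sequents are derived we immediately get the equivalences $\form\triangle\bigwedge_{i\in I}\varphi_i \thesisiff \bigwedge_{i\in I}\form\triangle\varphi_i$ and $\form\triangle\bigvee_{i\in I}\varphi_i \thesisiff \bigvee_{i\in I}\form\triangle\varphi_i$ of Figure~\ref{fig:proof:log:derivable}.

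For the first sequent $\form\triangle\bigwedge_{i\in I}\varphi_i \thesis \bigwedge_{i\in I}\form\triangle\varphi_i$, I would first apply the right-introduction rule for $\bigwedge$, which reduces the goal to proving, for each fixed $k \in I$, the sequent $\form\triangle\bigwedge_{i\in I}\varphi_i \thesis \form\triangle\varphi_k$. For this, the left-elimination rule for $\bigwedge$ instantiated with the reflexivity axiom $\varphi_k \thesis \varphi_k$ yields $\bigwedge_{i\in I}\varphi_i \thesis \varphi_k$, and applying monotonicity of $\form\triangle$ to this sequent gives exactly $\form\triangle\bigwedge_{i\in I}\varphi_i \thesis \form\triangle\varphi_k$, as required.

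The second sequent $\bigvee_{i\in I}\form\triangle\varphi_i \thesis \form\triangle\bigvee_{i\in I}\varphi_i$ is handled dually: I would apply the left-elimination rule for $\bigvee$, reducing the goal to $\form\triangle\varphi_k \thesis \form\triangle\bigvee_{i\in I}\varphi_i$ for each $k \in I$; the right-introduction rule for $\bigvee$ applied to the axiom $\varphi_k \thesis \varphi_k$ gives $\varphi_k \thesis \bigvee_{i\in I}\varphi_i$, and monotonicity of $\form\triangle$ then concludes. There is no real obstacle in this argument — it is a purely lattice-theoretic manipulation of the rules, and the same derivation works uniformly for $\triangle$ equal to $\pi_1$, $\pi_2$ or $\fold$. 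The only point worth flagging is the remark above, namely that these derivations, combined with the two converse rules already present in Figure~\ref{fig:log:ded}, give the full equivalences and hence underlie the normal-form claim of Example~\ref{ex:log:modalnf}.
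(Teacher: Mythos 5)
Your derivation is correct and is essentially identical to the paper's own proof: for the first sequent the paper likewise combines reflexivity, the left-rule for $\bigwedge$, the monotonicity rule for $\form\triangle$, and finally the right-rule for $\bigwedge$, and handles the second sequent dually. Your closing remark about the converse sequents being primitive rules, yielding the full equivalences of Example~\ref{ex:log:modalnf}, is also accurate.
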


\begin{proof}
For each $i \in I$ we have
\begin{center}
\AXC{}
\UIC{$\varphi_i \,\thesis\, \varphi_i$}
\UIC{$\bigwedge_{i \in I} \varphi_i \,\thesis\, \varphi_i$}
\UIC{$\form\triangle \bigwedge_{i \in I} \varphi_i \,\thesis\, \form\triangle \varphi_i$}
\DisplayProof
\end{center}

\noindent
from which we obtain the first sequent.
The other one is derived similarly.
\qed
\end{proof}

\begin{lemma}
The following sequents are derivable
\[
\begin{array}{r !{~}l!{~} l}
  \bigvee_{f \in \prod_{i \in I}J_i}\bigwedge_{i \in I} \varphi_{i,f(i)}
& \thesis
& \bigwedge_{i \in I} \bigvee_{j \in J_i} \varphi_{i,j}
\\

  \bigvee_{i \in I}\bigwedge_{j \in J_i}\varphi_{i,j}
& \thesis
& \bigwedge_{f \in \prod_{i \in I}J_i} \bigvee_{i \in I} \varphi_{i,f(i)}
\end{array}
\]
\end{lemma}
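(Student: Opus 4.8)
The plan is to derive both sequents directly from the introduction and elimination rules for $\bigwedge$ and $\bigvee$ in Figure~\ref{fig:log:ded}, together with the transitivity rule; these are the ``easy'' halves of the distributivity laws, so no appeal to $\ax{D}$ is required.

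For the first sequent, I would begin by applying the right-rule for $\bigwedge$ on the conclusion: it suffices to show, for each $i \in I$, that $\bigvee_{f \in \prod_{i' \in I} J_{i'}} \bigwedge_{i' \in I}\varphi_{i',f(i')} \thesis \bigvee_{j \in J_i}\varphi_{i,j}$. Next I would apply the left-rule for $\bigvee$: it then suffices to show, for each $f \in \prod_{i' \in I}J_{i'}$, that $\bigwedge_{i' \in I}\varphi_{i',f(i')} \thesis \bigvee_{j \in J_i}\varphi_{i,j}$. Now the left-elimination rule for $\bigwedge$, instantiated at $i \in I$, gives $\bigwedge_{i' \in I}\varphi_{i',f(i')} \thesis \varphi_{i,f(i)}$, and the right-introduction rule for $\bigvee$, instantiated at $f(i) \in J_i$, gives $\varphi_{i,f(i)} \thesis \bigvee_{j \in J_i}\varphi_{i,j}$; composing the two via transitivity closes the case. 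The degenerate cases are absorbed automatically: if some $J_i$ is empty then $\prod_{i'}J_{i'}$ is empty, the left-hand side is $\False$, and $\False \thesis \varphi$ applies; if $I$ is empty the conclusion is $\True$ and $\varphi \thesis \True$ applies.

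The second sequent is handled symmetrically. I would first apply the right-rule for $\bigwedge$ on the conclusion: it suffices to show, for each $f \in \prod_{i \in I}J_i$, that $\bigvee_{i \in I}\bigwedge_{j \in J_i}\varphi_{i,j} \thesis \bigvee_{i \in I}\varphi_{i,f(i)}$. Then the left-rule for $\bigvee$ reduces this to showing, for each $i \in I$, that $\bigwedge_{j \in J_i}\varphi_{i,j} \thesis \bigvee_{i' \in I}\varphi_{i',f(i')}$. Since $f \in \prod_{i'}J_{i'}$ we have $f(i) \in J_i$, so the left-elimination rule for $\bigwedge$ at $f(i) \in J_i$ gives $\bigwedge_{j \in J_i}\varphi_{i,j} \thesis \varphi_{i,f(i)}$, and the right-introduction rule for $\bigvee$ at $i \in I$ gives $\varphi_{i,f(i)} \thesis \bigvee_{i' \in I}\varphi_{i',f(i')}$; transitivity finishes the case.

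There is no real obstacle here: the only point requiring attention is matching the side conditions of the $\bigwedge$-elimination and $\bigvee$-introduction rules, which each demand a witness index. In every place a witness is available — $i \in I$ in the first derivation and $f(i) \in J_i$ in the second — precisely because these rules are only ever invoked inside a branch where the relevant index has already been fixed, and the empty-index corner cases never reach those rules.
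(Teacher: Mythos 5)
Your proof is correct and follows essentially the same route as the paper's: both derive these ``easy'' halves of the distributivity laws directly from the introduction/elimination rules for $\bigwedge$ and $\bigvee$ (plus identity and transitivity), without invoking $\ax{D}$, differing only in the order in which the outer $\bigwedge$-right and $\bigvee$-left rules are applied.
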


\begin{proof}
We only discuss the first, as the other one can be dealt-with similarly.
Let $f \in \prod_{i \in I}J_i$.
For each $i \in I$, derive
\begin{center}
\AXC{}
\UIC{$\varphi_{i,f(i)} \,\thesis\, \varphi_{i,f(i)}$}
\UIC{$\varphi_{i,f(i)} \,\thesis\, \bigvee_{j \in J_i} \varphi_{i,j}$}
\DisplayProof
\end{center}

\noindent
Hence Lemma~\ref{lem:proof:log:functprop}
gives
\[
\begin{array}{l l l}
  \bigwedge_{i \in I} \varphi_{i,f(i)}
& \thesis
& \bigwedge_{i \in I} \bigvee_{j \in J_i} \varphi_{i,j}
\end{array}
\]

\noindent
Then we are done since this holds for each $f \in \prod_{i \in I}J_i$.
\qed
\end{proof}

\begin{lemma}
\label{lem:proof:log:distr}
The following sequent is derivable
\[
\dfrac{}
  {\bigwedge_{f \in \prod_{i \in I}J_i}\bigvee_{i \in I} \varphi_{i,f(i)}
  \thesis
  \bigvee_{i \in I}\bigwedge_{j \in J_i}\varphi_{i,j}}
\]
\end{lemma}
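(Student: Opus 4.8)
The plan is to derive this dual distributive law from the rule $\ax{D}$ of Figure~\ref{fig:log:ded} together with the basic rules for conjunctions and disjunctions, plus one combinatorial observation. Write $K \deq \prod_{i \in I} J_i$ for the index set of the outer conjunction on the left. The first step is to instantiate $\ax{D}$ by taking its outer conjunction over $K$ and, for each $f \in K$, its inner disjunction over $I$ carrying the formula $\varphi_{i,f(i)}$; the choice functions appearing in its conclusion then range over $I^K$ (the set of maps $K \to I$), and $\ax{D}$ gives
\[
\bigwedge_{f \in K}\bigvee_{i \in I} \varphi_{i,f(i)}
\;\thesis\;
\bigvee_{g \in I^K}\;\bigwedge_{f \in K} \varphi_{g(f),\, f(g(f))} .
\]
By transitivity (the first rule of Figure~\ref{fig:log:ded}) it now suffices to show that the right-hand side entails $\bigvee_{i \in I}\bigwedge_{j \in J_i}\varphi_{i,j}$, and by the left-rule for $\bigvee$ this reduces to deriving, for each fixed $g \colon K \to I$, the sequent $\bigwedge_{f \in K} \varphi_{g(f),\, f(g(f))} \thesis \bigvee_{i \in I}\bigwedge_{j \in J_i}\varphi_{i,j}$.

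The heart of the argument, and the step I expect to be the main obstacle, is the purely set-theoretic claim: for every $g \colon K \to I$ there is an index $i_0 \in I$ such that for each $j \in J_{i_0}$ there exists $f \in K$ with $g(f) = i_0$ and $f(i_0) = j$. If some $J_i$ is empty this is vacuous: then $K = \emptyset$, the left-hand side of the lemma is $\True$, and $\bigwedge_{j \in \emptyset}\varphi_{i,j} = \True$ occurs as a disjunct of the right-hand side, so the lemma holds trivially; hence I may assume every $J_i$ is non-empty. I would prove the claim by contradiction: if it fails, choose for each $i \in I$ some $j_i \in J_i$ with the property that $g(f) \neq i$ for every $f \in K$ satisfying $f(i) = j_i$; form the element $f^\ast \in K$ defined by $f^\ast(i) \deq j_i$, and put $i^\ast \deq g(f^\ast)$. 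Since $f^\ast(i^\ast) = j_{i^\ast}$, the choice of $j_{i^\ast}$ forces $g(f^\ast) \neq i^\ast$, contradicting $i^\ast = g(f^\ast)$.

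Granting the claim, fix $g$ and the corresponding $i_0$. For each $j \in J_{i_0}$, pick $f_j \in K$ with $g(f_j) = i_0$ and $f_j(i_0) = j$, so that $\varphi_{g(f_j),\, f_j(g(f_j))} = \varphi_{i_0,j}$; the $\bigwedge$-left rule instantiated at $f = f_j$ then yields $\bigwedge_{f \in K}\varphi_{g(f),\, f(g(f))} \thesis \varphi_{i_0,j}$. Collecting these over $j \in J_{i_0}$ by the $\bigwedge$-right rule gives $\bigwedge_{f \in K}\varphi_{g(f),\, f(g(f))} \thesis \bigwedge_{j \in J_{i_0}}\varphi_{i_0,j}$, and the $\bigvee$-right rule (instantiated at $i = i_0$) upgrades this to the required sequent. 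Since $g$ was arbitrary, the left-rule for $\bigvee$ completes the reduction, and composing with $\ax{D}$ by transitivity proves the lemma. Apart from the combinatorial claim, the only care needed is in bookkeeping the nested index sets $K$ and $I^K$ when instantiating $\ax{D}$.
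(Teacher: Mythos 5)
Your proposal is correct and follows essentially the same route as the paper's proof: both first apply $\ax{D}$ to obtain $\bigvee_{g \in I^K}\bigwedge_{f \in K}\varphi_{g(f),f(g(f))}$ and then handle each disjunct by the same diagonal argument (using the Axiom of Choice to assemble the putative counterexamples $j_i$ into an element $f^\ast$ of $K$ and evaluating $g$ at it). The only difference is presentational: the paper reasons by contradiction directly on derivability of the sequent, whereas you isolate the diagonal step as an explicit combinatorial claim about the index sets and then give the derivation directly, which amounts to the same thing.
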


\begin{proof}
This sequent amounts a well-known fact on completely distributive
complete lattices,
see e.g.~\cite[Lemma VII.1.10]{johnstone82book}.
We nevertheless offer a detailed proof.
Using the distributive law $\ax{D}$, we have
\begin{equation*}
\begin{array}{l !{~}l!{~} l}
  \bigwedge_{f \in \prod_{i \in I}J_i}\bigvee_{i \in I} \varphi_{i,f(i)}
& \thesis
& \bigvee_{F\colon (\prod_{i \in I}J_i) \to I}
  \bigwedge_{f \in \prod_{i \in I}J_i}
  \varphi_{F(f),f(F(f))}
\end{array}
\end{equation*}

\noindent
Hence we are done if we show
\begin{equation*}
\begin{array}{l !{~}l!{~} l}
  \bigvee_{F\colon (\prod_{i \in I}J_i) \to I}
  \bigwedge_{f \in \prod_{i \in I}J_i}
  \varphi_{F(f),f(F(f))}
& \thesis
& \bigvee_{i \in I}\bigwedge_{j \in J_i}\varphi_{i,j}
\end{array}
\end{equation*}

\noindent
So let $F \colon \left(\prod_{i \in I}J_i\right)  \to I$
and assume toward a contradiction that 
\begin{equation*}
\begin{array}{l !{~}l!{~} l}
  \bigwedge_{f \in \prod_{i \in I}J_i}
  \varphi_{F(f),f(F(f))}
& \not\thesis
& \bigvee_{i \in I}\bigwedge_{j \in J_i}\varphi_{i,j}
\end{array}
\end{equation*}

\noindent
It follows that for each $i \in I$, there is some $j \in J_i$ such that
\begin{equation*}
\begin{array}{l !{~}l!{~} l}
  \bigwedge_{f \in \prod_{i \in I}J_i}
  \varphi_{F(f),f(F(f))}
& \not\thesis
& \varphi_{i,j}
\end{array}
\end{equation*}

\noindent
Using the Axiom of Choice, we get a function $g \in \prod_{i \in I}J_i$
such that for all $i \in I$,
\begin{equation*}
\begin{array}{l !{~}l!{~} l}
  \bigwedge_{f \in \prod_{i \in I}J_i}
  \varphi_{F(f),f(F(f))}
& \not\thesis
& \varphi_{i,g(i)}
\end{array}
\end{equation*}

\noindent
In particular,
\begin{equation*}
\begin{array}{l !{~}l!{~} l}
  \bigwedge_{f \in \prod_{i \in I}J_i}
  \varphi_{F(f),f(F(f))}
& \not\thesis
& \varphi_{F(g),g(F(g))}
\end{array}
\end{equation*}

\noindent
a contradiction.
\qed
\end{proof}

\subsection{Proofs of \S\ref{sec:reft} (\nameref{sec:reft})}
\label{sec:proof:reft}

Lemma~\ref{lem:reft}
will be useful for completeness (\S\ref{sec:compl} and \S\ref{sec:proof:compl}).

\LemReft*

\begin{proof}
The proof is by induction on $\RT$.
The base case of $\reft{\PT \mid \varphi}$ is trivial.
In the base case of $\PT$, one can take $\varphi = \True$.
In the cases of $\RT \times \RTbis$ and $\RTbis \arrow \RT$,
by induction hypotheses we get $\varphi \in \Lang(\UPT\RT)$
and $\psi \in \Lang(\UPT\RTbis)$ such that
$\RT \eqtype \reft{\UPT\RT \mid \varphi}$
and
$\RTbis \eqtype \reft{\UPT\RTbis \mid \varphi}$.
We then conclude with
\[
\begin{array}{r c l}
  \RT \times \RTbis
& \eqtype
& \reft{\UPT\RT \times \UPT\RTbis \mid \pair{\varphi,\psi}}
\\

  \RTbis \arrow \RT
& \eqtype
& \reft{\UPT\RTbis \arrow \UPT\RT \mid \psi \realto \varphi}
\end{array}
\]
\qed
\end{proof}

}
\opt{full}{
\section{Proofs of \S\ref{sec:sem} (\nameref{sec:sem})}
\label{sec:proof:sem}

\subsection{\nameref{sec:sem:pure}}
\label{sec:proof:sem:pure}

\subsubsection{Solutions of Recursive Domain Equations.}
We review the usual solution of recursive domain equations.
We refer to~\cite{ac98book,aj95chapter,streicher06book}.

\paragraph{Categories of Domains.}
In the following, $\DCPO$ is the category of those
posets with all directed suprema, and with Scott-continuous
functions as morphisms.
$\CPO$ is the full subcategory of $\DCPO$
on posets with a least element.
Note that $\Scott$ is a full subcategory of $\CPO$.

Recall that $\DCPO$, $\CPO$ and $\Scott$
have finite products
(equipped with the component-wise order).
See \cite[Theorem 3.3.3, Theorem 3.3.5 and Corollary 4.1.6]{aj95chapter}.
Hence for each $n \in \NN$, the categories
$\Scott^n$, $\CPO^n$ and $\DCPO^n$
are (not full) subcategories of $\Scott$, $\CPO$ and $\DCPO$
respectively.

\begin{lemma}
\label{lem:proof:scott:enrich}
If $n \in \NN$ then
$\DCPO^n$,
$\CPO^n$, $\Scott^n$ are enriched in $\DCPO$.
\end{lemma}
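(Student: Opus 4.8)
The plan is to check, for each $\mathcal C \in \{\DCPO, \CPO, \Scott\}$, the three ingredients of $\DCPO$-enrichment (hom-objects, composition, identities, plus coherence), and then to transport them to $\mathcal C^n$. First I would recall the basic fact that for any dcpo $Y$ and any poset $X$, the set of Scott-continuous maps $X \to Y$, ordered pointwise, is a dcpo in which directed suprema are computed argumentwise: if $(f_d)_{d \in D}$ is directed, then $x \mapsto \bigvee_{d \in D} f_d(x)$ is well-defined (each $(f_d(x))_{d}$ is directed in the dcpo $Y$), is monotone, and is Scott-continuous, the last point because directed suprema commute, so for directed $E \sle X$ one has $\bigvee_d f_d\!\left(\bigvee E\right) = \bigvee_d \bigvee_{e \in E} f_d(e) = \bigvee_{e \in E} \bigvee_d f_d(e)$. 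Since $\DCPO(X,Y)$, $\CPO(X,Y)$ and $\Scott(X,Y)$ are the same set of functions (only the ambient objects are restricted) and the codomain is in each case a dcpo, each of these hom-sets is a dcpo; this supplies the hom-objects. (For $\CPO$ and $\Scott$ the hom-object is in fact a cpo, resp.\ a Scott domain, but only the dcpo structure is needed here.)

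Next I would check that composition $\mathcal C(B,C) \times \mathcal C(A,B) \to \mathcal C(A,C)$, $(g,f) \mapsto g \circ f$, is a morphism of $\DCPO$. It is clearly monotone, and on a product of dcpos a monotone map is Scott-continuous as soon as it is continuous in each argument separately, so it suffices to fix one argument. With $f$ fixed, continuity in $g$ is immediate since suprema are pointwise and $\left(\bigvee_d g_d\right) \circ f$ and $\bigvee_d (g_d \circ f)$ agree at every $x$. With $g$ fixed, $\left(g \circ \bigvee_d f_d\right)(x) = g\!\left(\bigvee_d f_d(x)\right) = \bigvee_d g(f_d(x)) = \left(\bigvee_d g \circ f_d\right)(x)$, using Scott-continuity of $g$. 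The identity-assigning maps $\one \to \mathcal C(A,A)$ are trivially morphisms of $\DCPO$, and the associativity and unit coherences hold because they already hold in the underlying ordinary categories $\DCPO$, $\CPO$, $\Scott$. This settles the case $n = 1$.

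Finally I would lift to $\mathcal C^n$. A morphism $(X_1,\dots,X_n) \to (Y_1,\dots,Y_n)$ in $\mathcal C^n$ is a tuple $(f_1,\dots,f_n)$ with $f_i \in \mathcal C(X_i,Y_i)$, so the hom-set is the finite product $\prod_{i=1}^n \mathcal C(X_i,Y_i)$, which is a dcpo since finite products of dcpos are dcpos, with order and directed suprema computed componentwise. Composition and identities in $\mathcal C^n$ are computed componentwise, hence are morphisms of $\DCPO$ by the case $n = 1$ together with the fact that a tuple of Scott-continuous maps is Scott-continuous into the product dcpo; and the associativity and unit equations are inherited componentwise. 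I expect the only genuinely delicate point to be the joint (as opposed to merely separate) Scott-continuity of composition, which is handled by the standard reduction to separate continuity recalled in the first step; everything else is bookkeeping.
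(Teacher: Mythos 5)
Your proposal is correct and uses the same two-stage decomposition as the paper: treat $n=1$ first, then transport the enrichment to $\mathcal{C}^n$ componentwise using the fact that finite products in $\DCPO$, $\CPO$ and $\Scott$ are Cartesian products of sets with the componentwise order. The difference lies in how the case $n=1$ is justified. The paper simply cites the Cartesian-closure of the three categories (\cite[Theorem 3.3.3, Theorem 3.3.5 and Corollary 4.1.6]{aj95chapter}), from which the dcpo structure on hom-sets and the continuity of composition are extracted, whereas you verify these facts directly: pointwise directed suprema of Scott-continuous maps are again Scott-continuous by interchange of directed suprema, and composition is jointly Scott-continuous because it is monotone and separately continuous in each argument. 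Your route is more elementary and self-contained, and it makes explicit the reduction of joint to separate continuity on products of dcpos, which is indeed the only non-bookkeeping step; the paper's route is shorter but invokes the strictly stronger fact that the categories are Cartesian closed. One cosmetic remark: you state the hom-dcpo fact for an arbitrary poset $X$, but since Scott-continuity of the pointwise supremum is tested against directed suprema $\bigvee E$ taken in $X$, it is cleaner to assume $X$ is a dcpo, as is the case for every object that actually occurs here.
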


\begin{proof}
The result for $n=1$ follows from the
Cartesian-closure of $\DCPO$, $\CPO$ and $\Scott$
(\cite[Theorem 3.3.3, Theorem 3.3.5 and Corollary 4.1.6]{aj95chapter}).
In the cases of $n \neq 1$, the result follows from the fact that
in $\DCPO, \CPO, \Scott$, finite products are Cartesian products
of sets equipped with the component-wise order.
\qed
\end{proof}

Let $\cat C$ be a category enriched over $\DCPO$.
Given objects $X,Y \in \cat C$,
an \emph{embedding-projection} pair $X \to Y$
is a pair of morphisms $\ladj f: X\rightleftarrows Y:\radj f$
where $\radj f \comp \ladj f = \id_X$ and $\ladj f \comp \radj f \leq \id_Y$.
The morphism $\ladj f$ is an \emph{embedding}
(it reflects (as well as preserves) the order),
while $\radj f$ is a \emph{projection}.
Note that if $X$ (resp.\ $Y$) has a least element,
then so does $Y$ (resp.\ $X$) and $\ladj f$ (resp $\radj f$)
is strict.
It is well-known that $\ladj f$ completely determines $\radj f$
and reciprocally, see~\cite[\S 7.1]{ac98book}
(cf.\ also~\cite[\S 3.1.4]{aj95chapter} and~\cite[\S 9]{streicher06book}).
Given an embedding $e$ (resp.\ a projection $p$),
we write $\radj e$ (resp.\ $\ladj p$)
for the corresponding projection (resp.\ embedding).

We write $\cat C^\ep$ for the category with the same objects as $\cat C$,
and with embedding-projection pairs as morphisms.
Note that we have faithful functors
$\ladj{(\pl)} \colon \cat C^\ep \to \cat C$
and
$\radj{(\pl)} \colon \cat C^\ep \to \cat C^\op$
(taking $(\ladj f,\radj f)$ to $\ladj f$ and to $\radj f$,
respectively).
Given a functor $H$ of codomain $\cat C^\ep$,
we write $\radj H$ for $\radj{(\pl)} \comp H$,
and similarly for $\ladj H$.

\paragraph{The Limit-Colimit Coincidence.}
The following (crucial and) well-known fact
is \cite[Theorem 7.1.10]{ac98book}
(see also \cite[Theorem 3.3.7]{aj95chapter}).

\begin{theorem}
\label{thm:proof:scott:limcolim}
Let $K \colon \omega \to \cat C^\ep$ be a functor
where $\cat C$ is enriched over $\DCPO$.
Each limiting cone $\varpi \colon \Lim \radj K \to \radj K$
for $\radj K \colon \omega^\op \to \cat C$ consists of projections,
and the $(\ladj{(\varpi_n)},\varpi_n)_n$
form a colimiting cocone
$K \to \Colim K$ in $\cat C^\ep$.
\end{theorem}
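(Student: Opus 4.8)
The plan is to run the classical limit--colimit coincidence argument; I only indicate the structure, leaving the routine diagram chases implicit. Write $D_\infty \deq \Lim \radj K$ with limiting cone $(\varpi_n \colon D_\infty \to K_n)_n$, and for $m \leq n$ let $e_{m,n} \colon K_m \to K_n$ and $p_{m,n} \colon K_n \to K_m$ be the embedding and projection parts of $K(m \leq n)$, so that $p_{m,n} \comp e_{m,n} = \id$, $e_{m,n} \comp p_{m,n} \leq \id$, and composites of these along $\omega$ are coherent.

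First I would produce, for each $n$, an embedding $e_n \colon K_n \to D_\infty$; this will be the $\ladj{(\varpi_n)}$ of the statement. The standard recipe: the family of morphisms out of $K_n$ whose $m$-th component is $p_{m,n}$ when $m \leq n$ and $e_{n,m}$ when $m \geq n$ forms a cone over $\radj K$ --- a direct check from $p \comp e = \id$ and the coherence of composites --- so the universal property of $D_\infty$ yields a unique $e_n$ with $\varpi_m \comp e_n$ equal to that $m$-th component. In particular $\varpi_n \comp e_n = \id_{K_n}$, so each $\varpi_n$ is a split epi, and reading off components gives $e_{n+1} \comp e_{n,n+1} = e_n$ together with the cone equation $p_{n,n+1} \comp \varpi_{n+1} = \varpi_n$; these are exactly the cocone equations making $(e_n,\varpi_n)_n$ a cocone $K \to D_\infty$ in $\cat C^\ep$, provided each $(e_n,\varpi_n)$ is an embedding--projection pair.

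The step I expect to be the main obstacle --- and where the $\DCPO$-enrichment is essential --- is the identity $\bigvee_n (e_n \comp \varpi_n) = \id_{D_\infty}$ in the hom-dcpo $\cat C(D_\infty,D_\infty)$. I would first check that $(e_n \comp \varpi_n)_n$ is an increasing chain (from $e_n = e_{n+1} \comp e_{n,n+1}$, $\varpi_n = p_{n,n+1} \comp \varpi_{n+1}$ and $e_{n,n+1} \comp p_{n,n+1} \leq \id$), so that its supremum exists; then post-compose with $\varpi_k$ and use Scott-continuity of composition together with $\varpi_k \comp e_n = p_{k,n}$ and $p_{k,n} \comp \varpi_n = \varpi_k$ for $n \geq k$: the supremum collapses to $\varpi_k = \varpi_k \comp \id_{D_\infty}$, and by uniqueness of mediating maps into the limit this forces $\bigvee_n (e_n \comp \varpi_n) = \id_{D_\infty}$. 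As a by-product $e_n \comp \varpi_n \leq \id_{D_\infty}$, so each $(e_n,\varpi_n)$ is indeed an embedding--projection pair.

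It remains to verify colimitality in $\cat C^\ep$. Given any cocone $(\phi_n,\psi_n) \colon K \to E$ there, the projections $(\psi_n \colon E \to K_n)_n$ form a cone over $\radj K$ (the cocone equations, read on projection components, are exactly the cone equations), inducing a unique $\beta \colon E \to D_\infty$ with $\varpi_n \comp \beta = \psi_n$; and $\theta \deq \bigvee_n \phi_n \comp \varpi_n \colon D_\infty \to E$ is well-defined by the same increasing-chain argument. Using Scott-continuity of composition, the identities $\psi_m \comp \phi_n = p_{m,n}$ for $n \geq m$, $\bigvee_m \phi_m \comp \varpi_m \comp e_n = \phi_n$, and $\bigvee_n e_n \comp \varpi_n = \id_{D_\infty}$, one obtains $\beta \comp \theta = \id_{D_\infty}$, $\theta \comp \beta = \bigvee_n \phi_n \comp \psi_n \leq \id_E$ and $\theta \comp e_n = \phi_n$, so $(\theta,\beta)$ is an embedding--projection pair mediating the cocones. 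Uniqueness is immediate: any mediating pair $(\alpha,\alpha')$ has $\alpha = \alpha \comp \bigvee_n e_n \comp \varpi_n = \bigvee_n \phi_n \comp \varpi_n = \theta$, and $\alpha'$ is then determined by $\alpha$. Hence $(\ladj{(\varpi_n)},\varpi_n)_n = (e_n,\varpi_n)_n$ is a colimiting cocone $K \to \Colim K = D_\infty$ in $\cat C^\ep$.
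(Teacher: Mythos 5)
Your proposal is correct and follows essentially the same route as the paper's proof: construct the candidate embeddings $e_n$ into the limit from the cones with components $h_{n,m}$, establish $\bigvee_n(e_n \comp \varpi_n) = \id$ via post-composition with the $\varpi_k$ and the universal property of the limit, and then verify colimitality by building the mediating embedding as $\bigvee_n \phi_n \comp \varpi_n$ against the mediating projection supplied by the limit. The only cosmetic difference is in the uniqueness step, where you recover the embedding component directly from $\alpha \comp \bigvee_n e_n \comp \varpi_n$ rather than, as the paper does, first pinning down the projection component by the universal property of the limit and then invoking the fact that an embedding is determined by its projection; both arguments are fine.
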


\begin{proof}
Let
$K \colon \omega \to \cat C^\ep$
and consider a limiting cone
\begin{equation}
\label{diag:proof:scott:lim}
\begin{array}{c}
\begin{tikzcd}[column sep=2em] 
&
& \Lim \radj K
  \arrow{dll}[above]{\varpi_0}
  \arrow{dl}{\varpi_1}
  \arrow{d}{\varpi_2}
  \arrow{dr}[below]{\varpi_n}
  \arrow{drr}{\varpi_{n+1}}

\\

  \radj K(0)
& \radj K(1)
  \arrow{l}[below]{\radj{(k_0)}}
& \radj K(2)
  \arrow{l}[below]{\radj{(k_1)}}
& \radj K(n)
  \arrow[dashed]{l}
& \radj K(n+1)
  \arrow{l}[below]{\radj{(k_n)}}
& \phantom{F}
  \arrow[dashed]{l}
\end{tikzcd}
\end{array}
\end{equation}

\noindent
in $\cat C$.
The components of the colimiting cocone
\begin{equation}
\label{diag:proof:scott:colim}
\begin{array}{c}
\begin{tikzcd}[column sep=2em]
&
& \Colim K

\\

  K(0)
  \arrow{r}[below]{k_0}
  \arrow{urr}[above]{\gamma_0}
& K(1)
  \arrow{r}[below]{k_1}
  \arrow{ur}[below]{\gamma_1}
& K(2)
  \arrow[dashed]{r}
  \arrow{u}{\gamma_2}
& K(n)
  \arrow{r}[below]{k_n}
  \arrow{ul}[below]{\gamma_n}
& K(n+1)
  \arrow[dashed]{r}
  \arrow{ull}[above]{\gamma_{n+1}}
& \phantom{F}
\end{tikzcd}
\end{array}
\end{equation}

\noindent
in $\cat C^\ep$ are given by $\radj{(\gamma_n)} = \varpi_n$
for projections.

Concerning embeddings,
for each $n \in \NN$ we build a cone with vertex $K(n) = \radj K(n)$
as follows.
Given $m \in \NN$, we have a morphism $h_{n,m} \colon K(n) \to K(m)$
obtained by composing $\radj{(k_i)}$'s or $\ladj{(k_i)}$'s
according to whether $m \leq n$ or $n \leq m$.
The $h_{n,m}$'s can be made so that $h_{n,m} = \radj{(k_m)} \comp h_{n,m+1}$.
The universal property of limits in $\cat C$ then yields
a unique morphism $c_n$ from $K(n) = \radj K(n)$ to $\Lim \radj K(n)$
such that $\varpi_m \comp c_n = h_{n,m}$ for all $m \in \NN$.

We are going to show that $c_n = \ladj{(\varpi_n)}$.
Note that $\varpi_n \comp c_n$ is the identity by definition of $c_n$.
It remains to show that $c_n \comp \varpi_n \leq \id_{\Lim \radj K}$.
We first show that
$(c_n \comp \varpi_n)_n$ forms an increasing sequence
in $\cat C(\Lim \radj K, \Lim \radj K)$.
To this end, note that $\varpi_n = \radj{(k_n)} \comp \varpi_{n+1}$ 
(since $\varpi$ is a cone).
We moreover have
$c_n = c_{n+1} \comp \ladj{(k_n)}$
since
\(
  \varpi_m \comp c_{n+1} \comp \ladj{(k_n)}
  =
  h_{n+1,m} \comp \ladj{(k_n)}
  =
  h_{n,m}
\)
for all $m \in \NN$.
We compute
\[
\begin{array}{*{5}{l}}
  c_n \comp \varpi_n
& =
& c_{n+1} \comp \ladj{(k_n)} \comp \radj{(k_n)} \comp \varpi_{n+1}
& \leq
& c_{n+1} \comp \varpi_{n+1}
\end{array}
\]

Let $\ell = \bigvee_{n}(c_n \comp \varpi_n)$.
We now claim that $\ell$ is the identity.
This will yield that $c_n \comp \varpi_n \leq \id_{\Lim \radj K}$.
In order to show that $\ell = \id_{\Lim \radj K}$,
we show that $\varpi_m \comp \ell = \varpi_m$ for all $m \in \NN$,
and use the universal property of limits in $\cat C$.
We have
\[
\begin{array}{l l l}
  \varpi_m \comp \ell
& =
& \varpi_m \comp \bigvee_{n}(c_n \comp \varpi_n)
\\

& =
& \bigvee_n \left(
  \varpi_m \comp c_n \comp \varpi_n
  \right)
\\

& =
& \bigvee_n \left(
  h_{n,m} \comp \varpi_n
  \right)
\\

& =
& \bigvee_{n\geq m} \left(
  h_{n,m} \comp \varpi_n
  \right)
\end{array}
\]

\noindent
But by definition of $h_{n,m}$, we have
$h_{n,m} \comp \varpi_n = \varpi_m$ when $m \leq n$.

We can thus set $\gamma_n = (c_n, \varpi_n)$.
Moreover, $\gamma = (\gamma_n)_n$ is indeed a cocone since
$c_n = c_{n+1} \comp \ladj{(k_n)}$ (see above).

We now claim that $\gamma \colon K \to \Lim \radj K$ is colimiting.
To this end, consider a cocone $\tau \colon K \to C$.
We thus get a cone $\radj\tau \colon \radj K \to \radj C$ in $\cat C$,
and the universal property of limits yields a unique
$p \colon \radj C \to \Lim \radj K$ such that
$\varpi_n \comp p = \radj{(\tau_n)}$ for all $n \in \NN$.
We show that $p$ is a projection.
We define a morphism $e \colon \Lim \radj K \to C$
as $e = \bigvee_{n}(\ladj{(\tau_n)} \comp \varpi_n)$.
We have
\[
\begin{array}{l l l}
  e \comp p
& =
& \left( \bigvee_{n} \ladj{(\tau_n)} \comp \varpi_n \right) \comp p
\\

& =
& \bigvee_n \ladj{(\tau_n)} \comp \radj{(\tau_n)}
\\

& \leq
& \id_{C}
\end{array}
\]

\noindent
On the other hand, given $m \in \NN$ we have
\[
\begin{array}{l l l}
  \varpi_m \comp p \comp e
& =
& \bigvee_{n} \radj{(\tau_m)} \comp \ladj{(\tau_n)} \comp \varpi_n
\\

& =
& \bigvee_{n \geq m} \radj{(\tau_m)} \comp \ladj{(\tau_n)} \comp \varpi_n
\\

& =
& \bigvee_{n \geq m} h_{n,m} \comp \radj{(\tau_n)} \comp \ladj{(\tau_n)} \comp \varpi_n
\\

& =
& \bigvee_{n \geq m} h_{n,m} \comp \varpi_n
\\

& =
& \bigvee_{n \geq m} \varpi_m
\\

& =
& \varpi_m
\end{array}
\]

\noindent
so that $p \comp e = \id_{\Lim \radj K}$
by the universal property of limits in $\cat C$.

Moreover, for all $n \in \NN$ we have
\[
\begin{array}{l l l}
  e \comp c_n
& =
& \bigvee_m \ladj{(\tau_m)} \comp \varpi_m \comp c_n
\\

& =
& \bigvee_m \ladj{(\tau_m)} \comp h_{n,m}
\\

& =
& \bigvee_{m\geq n} \ladj{(\tau_m)} \comp h_{n,m}
\\

& =
& \bigvee_{m\geq n} \ladj{(\tau_n)}
\\

& =
& \ladj{(\tau_n)}
\end{array}
\]

Consider now a morphism $\ell \colon \Lim \radj K \to C$
in $\cat C^\ep$
such that 
$\varpi_n \comp \radj\ell = \radj{(\tau_n)}$
and
$\ladj\ell \comp c_n = \ladj{(\tau_n)}$
for all $n \in \NN$.
The universal property of limits in $\cat C$
yields $\radj\ell = p$, so that $\ladj\ell = e$
since $e$ is uniquely determined from $p$.
\qed
\end{proof}

\paragraph{Solutions of Domain Equations.}
We shall use Theorem~\ref{thm:proof:scott:limcolim}
in the following situation.
Consider a functor
\[
\begin{array}{*{5}{l}}
  G
& :
& \cat D^\ep \times \cat C^\ep
& \longto
& \cat C^\ep
\end{array}
\]

\noindent
where $\cat C$ and $\cat D$ are enriched over $\DCPO$.
We moreover assume that $\cat C$ has a terminal object $\one$
which is initial in $\cat C^\ep$.
We are going to define a functor
\[
\begin{array}{*{5}{l}}
  K
& \colon
& \cat D^\ep \times \omega
& \longto
& \cat C^\ep
\end{array}
\]

\noindent
Given an object $B$ of $\cat D^\ep$,
$K(B,\pl)$ is the $\omega$-chain in $\cat C^\ep$
obtained by iterating $G_B = G(B,\pl)$ from the initial object $\one$ of $\cat C^\ep$:
\begin{equation}
\label{diag:proof:scott:chain}
\begin{tikzcd} 
  \one
  \arrow{r}[above]{\one}
& G_B(\one)
  \arrow{r}[above]{G_B(\one)}
& G^2_B(\one)
  \arrow[dashed]{r}
& G^n_B(\one)
  \arrow{r}[above]{G_B^n(\one)}
& G^{n+1}_B(\one)
  \arrow[dashed]{r}
& \phantom{F}
\end{tikzcd}
\end{equation}

Given a morphism $f \colon B \to B'$ in $\cat D^\ep$,
$K(f,\pl)$ is obtained by commutativity of the following.
\begin{equation}
\label{diag:proof:scott:natdiag}
\begin{array}{c}
\begin{tikzcd} 

  \one
  \arrow{r}[above]{\one}
  \arrow{d}{\one}
& G_B(\one)
  \arrow{r}[above]{G_B(\one)}
  \arrow{d}{G_f(\one)}
& G^2_B(\one)
  \arrow{d}{G_f^2(\one)}
  \arrow[dashed]{r}
& G^n_B(\one)
  \arrow{r}[above]{G_B^n(\one)}
  \arrow{d}{G_f^n(\one)}
& G^{n+1}_B(\one)
  \arrow{d}{G_f^{n+1}(\one)}
  \arrow[dashed]{r}
& \phantom{F}

\\

  \one
  \arrow{r}[below]{\one}
& G_{B'}(\one)
  \arrow{r}[below]{G_{B'}(\one)}
& G^2_{B'}(\one)
  \arrow[dashed]{r}
& G^n_{B'}(\one)
  \arrow{r}[below]{G_{B'}^n(\one)}
& G^{n+1}_{B'}(\one)
  \arrow[dashed]{r}
& \phantom{F}
\end{tikzcd}
\end{array}
\end{equation}

Assume now that $\cat C$ has limits of $\omega^\op$-chains
of projections.
Then Theorem~\ref{thm:proof:scott:limcolim}
yields that each $K(B,\pl)$ has a colimit in $\cat C^\ep$.
Since $K$ is a functor $\cat D^\ep \times \omega \to \cat C^\ep$,
it follows from \cite[Theorem V.3.1]{maclane98book}
that these colimits assemble into a functor
\[
\begin{array}{l l r c l}
  \Fix G
& :
& \cat D^\ep
& \longto
& \cat C^\ep
\\

&
& B
& \longmapsto
& \Colim_{n \in \omega} K(B,n)
\end{array}
\]

If $G(B,\pl)$ preserves colimits of $\omega$-chains,
then the universal property of colimits gives an isomorphism
\(
  \fold^\ep
  :
  G(B,\Fix G(B))
  \rightleftarrows
  \Fix G(B)
  :
  \unfold^\ep
\)
in $\cat C^\ep$.

We are going to prove the following.

\begin{proposition}
\label{prop:proof:scott:contfunct}
If $G \colon \cat D^\ep \times \cat C^\ep \to \cat C^\ep$
preserves colimits of $\omega$-chains,
then so does $\Fix G \colon \cat D^\ep \to \cat C^\ep$.
\end{proposition}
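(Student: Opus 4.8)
The plan is to display $\Fix G$ applied to an $\omega$-chain as an iterated colimit and to conclude by the commutation of colimits with colimits. Let $(B_m)_{m \in \NN}$ be an $\omega$-chain in $\cat D^\ep$ with a colimiting cocone $(\beta_m \colon B_m \to B_\infty)_m$, and consider the bifunctor
\[
  D \colon \NN \times \NN \longto \cat C^\ep,
  \qquad
  D(m,n) \deq K(B_m,n) = G_{B_m}^n(\one),
\]
functorial in $m$ through $K(\pl,n) \colon \cat D^\ep \to \cat C^\ep$ (its action on morphisms given by \eqref{diag:proof:scott:natdiag}) and in $n$ through the chain \eqref{diag:proof:scott:chain}. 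Recall that $\Fix G(B)$ is $\Colim_n D(\pl,n)$ evaluated at $B$, and that $\Fix G(\beta_m) \colon \Fix G(B_m) \to \Fix G(B_\infty)$ is the colimit comparison of $K(\beta_m,\pl) \colon K(B_m,\pl) \to K(B_\infty,\pl)$. All the $\omega$-colimits used below exist in $\cat C^\ep$ by Theorem~\ref{thm:proof:scott:limcolim}, since $\cat C$ has limits of $\omega^\op$-chains of projections.

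First I would show, by induction on $n \in \NN$, that the cocone $\bigl(K(\beta_m,n) \colon D(m,n) \to K(B_\infty,n)\bigr)_m$ is colimiting. For $n = 0$ the chain $D(\pl,0)$ is constant at $\one$ (as $\one$ is initial in $\cat C^\ep$), with colimit $\one = K(B_\infty,0)$. For $n+1$, note that $D(m,n+1) = G\bigl(B_m, D(m,n)\bigr)$; by the induction hypothesis and componentwise computation of colimits in a product, the $\omega$-chain $\bigl(B_m, D(m,n)\bigr)_m$ in $\cat D^\ep \times \cat C^\ep$ has a colimiting cocone with vertex $\bigl(B_\infty, K(B_\infty,n)\bigr)$, and since $G$ preserves colimits of $\omega$-chains, applying $G$ produces a colimiting cocone with vertex $G\bigl(B_\infty, K(B_\infty,n)\bigr) = K(B_\infty,n+1)$; unwinding \eqref{diag:proof:scott:natdiag} identifies this cocone with $\bigl(K(\beta_m,n+1)\bigr)_m$.

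Next I would invoke the Fubini property of colimits. The diagonal $\Delta \colon \NN \to \NN \times \NN$ is cofinal, so the colimit of $D$ over $\NN \times \NN$ exists (it is the $\omega$-colimit $\Colim_k D(k,k)$ in $\cat C^\ep$) and is computed equally by either order of iteration, whence
\[
  \Fix G(B_\infty)
  \;=\; \Colim_n K(B_\infty,n)
  \;=\; \Colim_n \Colim_m D(m,n)
  \;=\; \Colim_m \Colim_n D(m,n)
  \;=\; \Colim_m \Fix G(B_m),
\]
using the previous induction for the second equality. It then remains to check that, under these identifications, the comparison morphism $\Colim_m \Fix G(B_m) \to \Fix G(B_\infty)$ induced by the cocone $\bigl(\Fix G(\beta_m)\bigr)_m = \bigl(\Colim_n K(\beta_m,n)\bigr)_m$ is exactly the isomorphism just obtained; this follows from the uniqueness of mediating morphisms, given that all the cocones in play are the canonical ones.

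The existence of the colimits and the Fubini step are routine. The part requiring care --- and the main obstacle --- is the bookkeeping of cocones in $\cat C^\ep$: one must verify not merely that $\Fix G(B_\infty)$ and $\Colim_m \Fix G(B_m)$ are isomorphic, but that the canonical comparison induced by the $\Fix G(\beta_m)$ is an isomorphism. This amounts to a diagram chase through the definition \eqref{diag:proof:scott:natdiag} of $K$ on morphisms, the description of $\Fix G$ on morphisms as a colimit comparison, and the coherence of the colimiting cocones transported along the induction on $n$.
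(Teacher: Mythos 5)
Your argument is correct and follows essentially the same route as the paper's: establish by induction on $n$ that each cocone $\bigl(K(\beta_m,n)\bigr)_m$ is colimiting (using preservation of $\omega$-colimits by $G$ together with componentwise colimits in the product category), then interchange the two $\omega$-colimits. The paper packages the pointwise-to-functor-category step and the identification of the canonical comparison morphism into Lemma~\ref{lem:proof:scott:colimiting}, which carries out exactly the diagram chase you defer at the end.
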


The proof of Proposition~\ref{prop:proof:scott:contfunct}
is split into the following lemmas.
Fix a functor $G \colon \cat D^\ep \times \cat C^\ep \to \cat C^\ep$
which preserves colimits of $\omega$-chains.

\begin{lemma}
\label{lem:proof:scott:contdiag}
The diagonal functor $\Delta \colon \cat D^\ep \to \cat D^\ep \times \cat D^\ep$
preserves colimits of $\omega$-chains.
\end{lemma}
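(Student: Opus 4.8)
The plan is to reduce the statement to the standard fact that colimits in a product category are computed componentwise, whence the diagonal functor automatically preserves them; no feature of the embedding--projection structure or of the $\DCPO$-enrichment of $\cat D$ will be used.

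First I would record the componentwise description of colimits in the product category $\cat D^\ep \times \cat D^\ep$: a cocone $(\sigma^1_n,\sigma^2_n)_n$ over a diagram $(D^1,D^2)$ with apex $(X_1,X_2)$ is colimiting as soon as each $(\sigma^i_n)_n$ is colimiting over $D^i$ with apex $X_i$. Indeed, given a competing cocone $(\tau^1_n,\tau^2_n)_n$ with apex $(Y_1,Y_2)$, the universal properties of $X_1$ and $X_2$ yield unique mediating morphisms $u_i \colon X_i \to Y_i$ in $\cat D^\ep$ with $u_i \comp \sigma^i_n = \tau^i_n$, and $(u_1,u_2)$ is then easily checked to be the unique mediating morphism in $\cat D^\ep \times \cat D^\ep$. (Equivalently, $\Delta$ is left adjoint to the binary product functor whenever the latter is available, hence preserves all colimits; but the direct argument above avoids having to discuss the existence of such products.)

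Then I would apply this to an arbitrary $\omega$-chain $K \colon \omega \to \cat D^\ep$ admitting a colimiting cocone $\gamma \colon K \Rightarrow \Delta_{B}$ with apex $B$ (such colimits exist by Theorem~\ref{thm:proof:scott:limcolim}, but only the one at hand is needed). Since $\Delta K$ is the $\omega$-chain $n \mapsto (K(n),K(n))$ and $\Delta$ carries $\gamma$ to the pair cocone $(\gamma_n,\gamma_n)_n$ with apex $(B,B) = \Delta B$, taking $D^1 = D^2 = K$ and $\sigma^1 = \sigma^2 = \gamma$ in the previous paragraph shows at once that $(\gamma_n,\gamma_n)_n$ is colimiting for $\Delta K$. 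Hence $\Delta$ carries the colimit of $K$ to the colimit of $\Delta K$, i.e.\ $\Delta$ preserves colimits of $\omega$-chains. I expect no genuine obstacle here: the lemma is a formal triviality, recorded only as one of the ingredients in the proof of Proposition~\ref{prop:proof:scott:contfunct}, and the single point deserving a line of care is the componentwise description above, which holds for any product of categories.
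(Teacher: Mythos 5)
Your proof is correct and follows essentially the same route as the paper: both arguments rest on the fact that colimits in the product category $\cat D^\ep \times \cat D^\ep$ are computed componentwise, so the image of a colimiting cocone under $\Delta$ is again colimiting. The only difference is that the paper cites this componentwise description from Mac Lane (viewing the product as a functor category over a discrete index), whereas you verify the universal property directly, which is a perfectly adequate and self-contained substitute.
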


\begin{proof}
Since colimits are pointwise in functor categories
(\cite[Corollary V.3]{maclane98book}).%
\footnote{Note that \cite[Corollary V.3]{maclane98book} only gives the result for limits.
But recall that the opposite of a functor category $[\cat C,\cat D]$
is the functor category $[\cat C^\op, \cat D^\op]$.}
\end{proof}

Lemma~\ref{lem:proof:scott:contdiag}
entails in particular that each functor
$G^n_{(\pl)}(\one) \colon \cat D^\ep \to \cat C^\ep$
preserves colimits of $\omega$-chains
($G^{n+1}_{(\pl)}(\one)$ is
$G(\pl,G^{n}_{(\pl)}(\one)) \comp \Delta$).

Proposition~\ref{prop:proof:scott:contfunct}
relies on the fact that the functor
$K \colon \cat D^\ep \to \funct{\omega,\cat C^\ep}$
preserves colimits of $\omega$-chains.
This involves some notation.

Let $W \colon \omega \to \cat D^\ep$ be an $\omega$-chain,
with colimiting cocone $\gamma \colon W \to \Colim W$.
In the following, we write $w_m \colon W(m) \to W(m+1)$
for the connecting morphisms of $W$.
The cocone $K \gamma \colon K(W) \to K(\Colim W)$
has component at $m \in \NN$
the commutative diagram in \eqref{diag:proof:scott:natdiag}
where on takes $\gamma_m \colon W(m) \to \Colim W$
for $f \colon B \to B'$.

\begin{lemma}
\label{lem:proof:scott:colimiting}
The cocone $K\gamma \colon K(W) \to K(\Colim W)$
is colimiting.
\end{lemma}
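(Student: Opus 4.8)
The plan is to reduce the statement to a pointwise claim, exploiting the fact that colimits of $\omega$-chains in a functor category $\funct{\omega,\cat C^\ep}$ are computed pointwise.

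First I would record that $\cat C^\ep$ has colimits of $\omega$-chains at all: given any functor $L \colon \omega \to \cat C^\ep$, Theorem~\ref{thm:proof:scott:limcolim} applied to $L$ — which is legitimate since $\cat C$ has limits of $\omega^\op$-chains of projections — produces a colimiting cocone for $L$ in $\cat C^\ep$. Then \cite[Corollary~V.3]{maclane98book}, applied to opposite categories exactly as in the proof of Lemma~\ref{lem:proof:scott:contdiag}, gives that $\funct{\omega,\cat C^\ep}$ has colimits of $\omega$-chains and that these are computed pointwise. Consequently, showing that the cocone $K\gamma \colon K(W) \to K(\Colim W)$ is colimiting amounts to showing, for each $n \in \NN$, that the cocone obtained from $K\gamma$ by evaluating at $n \in \omega$ is colimiting in $\cat C^\ep$.

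Next I would fix $n$ and unwind the definitions. Evaluation at $n$ is a functor $\funct{\omega,\cat C^\ep} \to \cat C^\ep$, and by the construction of $K$ via diagrams~\eqref{diag:proof:scott:chain} and~\eqref{diag:proof:scott:natdiag}, precomposing it with $K \colon \cat D^\ep \to \funct{\omega,\cat C^\ep}$ yields precisely the functor $G^n_{(\pl)}(\one) \colon \cat D^\ep \to \cat C^\ep$; moreover the cocone $K\gamma$ evaluated at $n$ is exactly the cocone obtained by applying $G^n_{(\pl)}(\one)$ to $\gamma \colon W \to \Colim W$, namely the cocone with component $G^n_{\gamma_m}(\one)$ at $m \in \NN$. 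Now I would invoke the remark following Lemma~\ref{lem:proof:scott:contdiag}: since $G^{n+1}_{(\pl)}(\one)$ equals $G(\pl,\,G^n_{(\pl)}(\one)) \comp \Delta$, and both $\Delta$ (Lemma~\ref{lem:proof:scott:contdiag}) and $G$ (by hypothesis) preserve colimits of $\omega$-chains, each $G^n_{(\pl)}(\one)$ preserves colimits of $\omega$-chains. Applying $G^n_{(\pl)}(\one)$ to the colimiting cocone $\gamma$ therefore produces a colimiting cocone in $\cat C^\ep$, which by the identification above is $K\gamma$ evaluated at $n$. This closes the argument.

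The main obstacle is not conceptual but a matter of careful bookkeeping: one must make sure the pointwise-colimit principle is genuinely applicable — which rests on $\cat C^\ep$ having $\omega$-colimits, itself obtained from Theorem~\ref{thm:proof:scott:limcolim} rather than from ordinary directed suprema in $\cat C$, since colimits in $\cat C^\ep$ are not colimits in $\cat C$ — and one must check precisely that the abstract operation ``evaluate the cocone $K\gamma$ at $n$'' coincides with the concrete cocone $G^n_{(\pl)}(\one)$ applied to $\gamma$, as read off from diagram~\eqref{diag:proof:scott:natdiag}, so that the induction establishing that $G^n_{(\pl)}(\one)$ preserves $\omega$-colimits hooks up correctly with the pointwise reduction.
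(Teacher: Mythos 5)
Your proof is correct and follows essentially the same route as the paper: both arguments reduce the claim to the pointwise statement that each $G^n_{\gamma}(\one)$ is colimiting, which follows from the observation (after Lemma~\ref{lem:proof:scott:contdiag}) that each $G^n_{(\pl)}(\one)$ preserves colimits of $\omega$-chains. The only difference is cosmetic: where you invoke the general pointwise-colimit principle for functor categories (Mac Lane, Corollary~V.3) to assemble the pointwise colimits, the paper verifies that step by hand, explicitly constructing the mediating morphisms $\ell_n$ from the universal properties of the $G^n_\gamma(\one)$ and checking their naturality and uniqueness.
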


\begin{proof}
First, it follows from the above that each
$G_\gamma^n(\one) \colon G_W^n(\one) \to G_{\Colim W}^n(\one)$
is colimiting.

Consider now a cocone
$\tau \colon K(W) \to H$ in $\funct{\omega,\cat C^\ep}$.
For each $m \in \NN$, we have
$\tau_m = \tau_{m+1} \comp K(w_m)$,
that is
\begin{equation}
\label{diag:proof:scott:taucocone}
\begin{array}{c}
\begin{tikzcd}[column sep=2.14em, row sep=large]

  \one
  \arrow{r}[above]{\one}
  \arrow{d}{\one}
& G_{B}(\one)
  \arrow{r}[above]{G_{B}(\one)}
  \arrow{d}{G_{w_m}(\one)}
& G^2_{B}(\one)
  \arrow{d}{G_{w_m}^2(\one)}
  \arrow[dashed]{r}
& G^n_{B}(\one)
  \arrow{r}[above]{G_{B}^n(\one)}
  \arrow{d}{G_{w_m}^n(\one)}
& G^{n+1}_{B}(\one)
  \arrow{d}{G_{w_m}^{n+1}(\one)}
  \arrow[dashed]{r}
& \phantom{F}

\\

  \one
  \arrow{r}[above]{\one}
  \arrow{d}{(\tau_{m+1})_0}
& G_{B'}(\one)
  \arrow{r}[above]{G_{B'}(\one)}
  \arrow{d}{(\tau_{m+1})_1}
& G^2_{B'}(\one)
  \arrow{d}{(\tau_{m+1})_2}
  \arrow[dashed]{r}
& G^n_{B'}(\one)
  \arrow{r}[above]{G_{B'}^n(\one)}
  \arrow{d}{(\tau_{m+1})_n}
& G^{n+1}_{B'}(\one)
  \arrow{d}{(\tau_{m+1})_{n+1}}
  \arrow[dashed]{r}
& \phantom{F}

\\

  H(0)
  \arrow{r}[below]{h(0)}
& H(1)
  \arrow{r}[below]{h(1)}
& H(2)
  \arrow[dashed]{r}
& H(n)
  \arrow{r}[below]{h(n)}
& H(n+1)
  \arrow[dashed]{r}
& \phantom{F}
\end{tikzcd}
\end{array}
\end{equation}

\noindent
where $B$ is $W(m)$, $B'$ is $W(m+1)$
and the $h(n) \colon H(n) \to H(n+1)$ are the connective morphisms of $H$.
In particular, for each $m \in \NN$ and each $n \in \NN$,
we have $(\tau_m)_n = (\tau_{m+1})_n \comp G_{w_m}^n(\one)$.
Hence, for each $n \in \NN$ we have a cocone
$((\tau_m)_n)_m \colon G_W^n(\one) \to H(n)$,
and the universal property of $G_\gamma^n(\one)$
gives a unique morphism $\ell_n \colon G_{\Colim W}^n(\one) \to H(n)$
such that $(\tau_m)_n = \ell_n \comp G_{\gamma_m}^n(\one)$ for all $m \in \NN$.

We show that the $\ell_n$'s assemble into a morphism
$\ell \colon K(\Colim W) \to H$ in $\funct{\omega,\cat C^\ep}$.
We thus have to show that the following commutes
\begin{equation*}
\begin{array}{c}
\begin{tikzcd} 

  \one
  \arrow{r}[above]{\one}
  \arrow{d}{\ell_0}
& G_{B'}(\one)
  \arrow{r}[above]{G_{B'}(\one)}
  \arrow{d}{\ell_1}
& G^2_{B'}(\one)
  \arrow{d}{\ell_2}
  \arrow[dashed]{r}
& G^n_{B'}(\one)
  \arrow{r}[above]{G_{B'}^n(\one)}
  \arrow{d}{\ell_n}
& G^{n+1}_{B'}(\one)
  \arrow{d}{\ell_{n+1}}
  \arrow[dashed]{r}
& \phantom{F}

\\

  H(0)
  \arrow{r}[below]{h(0)}
& H(1)
  \arrow{r}[below]{h(1)}
& H(2)
  \arrow[dashed]{r}
& H(n)
  \arrow{r}[below]{h(n)}
& H(n+1)
  \arrow[dashed]{r}
& \phantom{F}

\end{tikzcd}
\end{array}
\end{equation*}

\noindent
where $B'$ is $\Colim W$.
We show that $\ell_{n+1} \comp G_{B'}^n(\one) = h(n) \comp \ell_n$
for all $n \in \NN$.
For each $m \in \NN$, 
by commutativity of \eqref{diag:proof:scott:natdiag}
and \eqref{diag:proof:scott:taucocone}
we have
\[
\begin{array}{l l l}
  \ell_{n+1} \comp G_{B'}^{n}(\one) \comp G_{\gamma_m}^{n}(\one)
& =
& \ell_{n+1} \comp G_{\gamma_m}^{n+1}(\one) \comp G_B^{n}(\one)
\\

& =
& (\tau_m)_{n+1} \comp G_B^{n}(\one)
\\

& =
& h(n) \comp (\tau_m)_n
\\

& =
& h(n) \comp \ell_n \comp G_{\gamma_m}^n(\one)
\end{array}
\]

\noindent
where $B$ is $W(m)$.
Then we are done by the universal property of $G_{\gamma_m}^n(\one)$.

Consider finally a morphism $f \colon K(\Colim W) \to H$ in
$\funct{\omega,\cat C^\ep}$
such that $f \comp K(\gamma) = \tau$.
Then for all $m \in \NN$ we have
$f \comp K(\gamma_m) = \tau_m$,
and thus
$f_n \comp G_{\gamma_m}^n(\one) = (\tau_m)_n$
for all $n \in \NN$.
It follows that $f_n = \ell_n$, so that $f = \ell$.
\qed
\end{proof}

We can now prove Proposition~\ref{prop:proof:scott:contfunct}.

\begin{proof}[of Proposition~\ref{prop:proof:scott:contfunct}]
Let $W \colon \omega \to \cat D^\ep$ be an $\omega$-chain.
By Lemma~\ref{lem:proof:scott:colimiting},
and since colimits always commute over colimits,
we have
\[
\begin{array}{l l l}
  \Fix G(\Colim W)
& =
& \Colim_{n \in \omega} K(\Colim W,n)
\\

& \cong
& \Colim_{n \in \omega} \Colim_{m \in \omega} K(W(m),n)
\\

& \cong
& \Colim_{m \in \omega} \Colim_{n \in \omega} K(W(m),n)
\\

& \cong
& \Colim_{m \in \omega} \Fix G(W(m))
\end{array}
\]
\qed
\end{proof}

\paragraph{Local Continuity.}
Functors $G \colon \cat D^\ep \times \cat C^\ep \to \cat E^\ep$
will be obtained from ``mixed-variance'' functors
\[
\begin{array}{*{5}{l}}
  F
& :
& \cat D^\op \times \cat C
& \longto
& \cat E
\end{array}
\]

\noindent
where
$\cat D,\cat C, \cat E$ are enriched over $\DCPO$.

\begin{definition}
\label{def:proof:scott:loc}
We say that $F$
is \emph{locally} \emph{monotone} (resp.\ \emph{continuous})
if each hom-function
\[
\begin{array}{r c l}
  \cat D(B',B) \times \cat C(A,A')
& \longto
& \cat C(F(B,A), F(B',A'))
\\

  (g,f)
& \longmapsto
& F(g,f)
\end{array}
\]

\noindent
is monotone (resp.\ Scott-continuous).
\end{definition}

\noindent
We refer to \cite[Definition 5.2.5]{aj95chapter},
\cite[Definition 7.1.15]{ac98book}
and \cite[Definition 9.1]{streicher06book}.
The following is a straightforward adaptation of \cite[Proposition 7.1.19]{ac98book}
(see also \cite[Proposition 5.2.6]{aj95chapter}).

\begin{lemma}
\label{lem:proof:scott:lift}
Let
$F \colon \cat D^\op \times \cat C \to \cat E$
be locally monotone.
Then $F$ lifts to a covariant functor
\[
\begin{array}{*{5}{l}}
  F^\ep
& :
& \cat D^\ep \times \cat C^\ep
& \longto
& \cat E^\ep
\end{array}
\]

\noindent
with $F^\ep(B,A) = F(B,A)$ on objects and
$F^\ep(g,f) = (F(\radj g,\ladj f) \,,\, F(\ladj g,\radj f))$
on morphisms.

If moreover $F$ is locally continuous, then $F^\ep$
preserves colimits of $\omega$-chains.
\end{lemma}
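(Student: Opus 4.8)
The plan is to follow the standard lifting argument for embedding-projection pairs (as in \cite[Proposition 7.1.19]{ac98book}), in three stages: first verify that $F^\ep(g,f) = (F(\radj g, \ladj f),\, F(\ladj g, \radj f))$ really lands in $\cat E^\ep$, i.e.\ is an embedding-projection pair; second verify functoriality of $F^\ep$; and third, under the local-continuity hypothesis, verify preservation of $\omega$-colimits. For the first stage I would fix $g\colon B\to B'$ in $\cat D^\ep$ and $f\colon A\to A'$ in $\cat C^\ep$ and compute the two composites in $\cat E$. Using (mixed-variance) functoriality of $F$ together with the embedding-projection identities $\radj g\comp\ladj g = \id_B$ and $\radj f\comp\ladj f = \id_A$, one gets $F(\ladj g,\radj f)\comp F(\radj g,\ladj f) = F(\radj g\comp\ladj g,\, \radj f\comp\ladj f) = F(\id_B,\id_A) = \id_{F(B,A)}$. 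For the other composite, functoriality gives $F(\radj g,\ladj f)\comp F(\ladj g,\radj f) = F(\ladj g\comp\radj g,\, \ladj f\comp\radj f)$, and since $\ladj g\comp\radj g\le\id_{B'}$ and $\ladj f\comp\radj f\le\id_{A'}$, local monotonicity makes this $\le F(\id_{B'},\id_{A'}) = \id_{F(B',A')}$; hence $F^\ep(g,f)$ is an embedding-projection pair $F(B,A)\to F(B',A')$. The only delicate point here is the variance: composition in $\cat D^\op$ reverses that of $\cat D$, so I would write each composite out as a single application of $F$ before simplifying.

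For functoriality I would use that $\ladj{(\pl)}\colon\cat E^\ep\to\cat E$ and $\radj{(\pl)}\colon\cat E^\ep\to\cat E^\op$ are faithful and that a morphism of $\cat E^\ep$ is determined by the pair of its two components, so that it suffices to check that $(g,f)\mapsto F(\radj g,\ladj f)$ and $(g,f)\mapsto F(\ladj g,\radj f)$ are functorial into $\cat E$ and $\cat E^\op$ respectively. The first is the composite functor $F\comp(\radj{(\pl)}\times\ladj{(\pl)})$, with $\radj{(\pl)}\colon\cat D^\ep\to\cat D^\op$ and $\ladj{(\pl)}\colon\cat C^\ep\to\cat C$; the second is $F^\op\comp(\ladj{(\pl)}\times\radj{(\pl)})$, with $\ladj{(\pl)}\colon\cat D^\ep\to\cat D$ and $\radj{(\pl)}\colon\cat C^\ep\to\cat C^\op$. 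Both are composites of functors, hence functorial; preservation of identities and of composites by $F^\ep$ then follows, the projection components of composable embedding-projection pairs composing in the $\cat E^\op$-order exactly as this description demands.

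For the continuity part, assuming $F$ locally continuous, the tool I would invoke is the criterion — read off from (the proof of) Theorem~\ref{thm:proof:scott:limcolim} — that a cocone $\mu\colon H\to Z$ over an $\omega$-chain $H\colon\omega\to\cat E^\ep$ is colimiting iff $(\ladj{\mu_n}\comp\radj{\mu_n})_n$ is an increasing sequence in $\cat E(Z,Z)$ with $\bigvee_n(\ladj{\mu_n}\comp\radj{\mu_n}) = \id_Z$; the forward direction is established inside the proof of Theorem~\ref{thm:proof:scott:limcolim} for the canonical colimit, and for the converse the canonical comparison morphism $\phi$ into such a cocone satisfies $\ladj\phi\comp\radj\phi = \ladj\phi\comp\big(\bigvee_n(\ladj{\gamma_n}\comp\radj{\gamma_n})\big)\comp\radj\phi = \bigvee_n(\ladj{\mu_n}\comp\radj{\mu_n}) = \id$, so $\phi$ is an isomorphism. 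Now given an $\omega$-chain in $\cat D^\ep\times\cat C^\ep$ with components $(W(n),V(n))$ and colimiting cocone with components $(\gamma_n,\delta_n)$ to $(\Colim W,\Colim V)$, the criterion (applied in $\cat D^\ep$ and in $\cat C^\ep$) gives increasing sequences with $\bigvee_n(\ladj{\gamma_n}\comp\radj{\gamma_n}) = \id$ and $\bigvee_n(\ladj{\delta_n}\comp\radj{\delta_n}) = \id$. By the first-stage computations, the $n$-th term $\ladj{F^\ep(\gamma_n,\delta_n)}\comp\radj{F^\ep(\gamma_n,\delta_n)}$ equals $F(\ladj{\gamma_n}\comp\radj{\gamma_n},\, \ladj{\delta_n}\comp\radj{\delta_n})$. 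These pairs form an increasing (hence directed) sequence in the hom-poset $\cat D(\Colim W,\Colim W)\times\cat C(\Colim V,\Colim V)$ — which is exactly the hom-poset on which local continuity of $F$ is asserted, using that $\cat D$ and $\cat D^\op$ share hom-posets as $\DCPO$-enriched categories — with supremum $(\id,\id)$, so local continuity yields $\bigvee_n F(\ladj{\gamma_n}\comp\radj{\gamma_n},\, \ladj{\delta_n}\comp\radj{\delta_n}) = F(\id,\id) = \id_{F(\Colim W,\Colim V)}$. By the criterion, $F^\ep$ applied to the colimiting cocone is again colimiting, i.e.\ $F^\ep$ preserves the colimit.

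The bookkeeping in the first two stages is routine once the $\cat D^\op$-variance is handled with care. I expect the real obstacle to be the third stage, and within it not any single computation but cleanly formulating and justifying the "supremum-of-projections-is-the-identity" characterization of $\omega$-colimits of embedding-projection chains from Theorem~\ref{thm:proof:scott:limcolim}, together with the bookkeeping needed to ensure that the increasing sequence in question and its supremum genuinely lie in the hom-poset on which local continuity of $F$ is stated.
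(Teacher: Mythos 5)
The paper offers no proof of this lemma: it is stated as ``a straightforward adaptation of \cite[Proposition 7.1.19]{ac98book}'', and your argument is exactly the standard one — verifying the embedding-projection identities via $F(\radj g\comp\ladj g,\radj f\comp\ladj f)=F(\id,\id)=\id$ and local monotonicity, checking functoriality componentwise, and deducing preservation of $\omega$-colimits from the characterization of colimiting cocones by $\bigvee_n(\ladj{(\mu_n)}\comp\radj{(\mu_n)})=\id$ read off from Theorem~\ref{thm:proof:scott:limcolim} together with local continuity. Your proof is correct, and the one genuinely delicate point — that the increasing sequence $\bigl(\ladj{(\gamma_n)}\comp\radj{(\gamma_n)},\ladj{(\delta_n)}\comp\radj{(\delta_n)}\bigr)_n$ and its supremum $(\id,\id)$ lie in the hom-poset on which Definition~\ref{def:proof:scott:loc} asserts Scott-continuity — is one you explicitly address.
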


\subsubsection{Interpretation of Pure Types.}
A \emph{pure type expression} is a possibly open production of the
grammar of pure types (\S\ref{sec:pure}), namely
\[
\begin{array}{r @{\ \ }c@{\ \ } l}
     \PT
&    \bnf
&    \BT
\gss \PT \times \PT
\gss \PT \arrow \PT
\gss \TV
\gss \rec \TV.\PT 
\end{array}
\]

\noindent
where $\BT \in \Base$,
where $\TV$ is a type variable,
and where $\rec\TV.\PT$ binds $\TV$ in $\PT$.

Consider a pure type expression $\PT$ with free
type variables $\vec \TV = \TV_1,\dots,\TV_n$.
We are going to interpret $\PT$ as a functor
\[
\begin{array}{*{5}{l}}
  \I\PT
& :
& \left( \Scott^\ep \right)^{n}
& \longto
& \Scott^\ep
\end{array}
\]

\noindent
which preserves colimits of $\omega$-chains.

\paragraph{Preliminaries.}
Recall that the category $\Scott$ is Cartesian-closed
(products and homsets are equipped with pointwise orders),
see \cite[Corollary 4.1.6]{aj95chapter} or \cite[\S 1.4]{ac98book}.
This yields functors
$\Scott(\pl,\pl) \colon \Scott^\op \times \Scott \to \Scott$
and
$(\pl) \times (\pl) \colon \Scott \times \Scott \to \Scott$.
These functors are locally continuous
(\cite[Example 7.1.16]{ac98book}).
By combining Lemma~\ref{lem:proof:scott:enrich},
Lemma~\ref{lem:proof:scott:lift} and Lemma~\ref{lem:proof:scott:contdiag},
we obtain functors
\[
\begin{array}{*{5}{l}}
  \left(\Scott(\pl,\pl) \right)^\ep
  ,\,
  \left( (\pl) \times (\pl) \right)^\ep
& :
& \Scott^\ep \times \Scott^\ep
& \longto
& \Scott^\ep
\end{array}
\]

\noindent
which preserve colimits of $\omega$-chains.

Moreover, $\Scott$ has limits of $\omega^\op$-chains
of projections (in the embedding-projection sense),
see \cite[Theorem 3.3.7, Theorem 3.3.11 and Proposition 4.1.3]{aj95chapter}.
More precisely, the full inclusion $\Scott \emb \DCPO$
creates limits for $\omega^\op$-chains of projections.%
\footnote{The notion of creation of limits has to be understood in the usual
sense of \cite[Definition V.1]{maclane98book}.}
In particular, $\Scott$
is closed in $\DCPO$ under limits of $\omega^\op$-chains of projections.
Note that the category $\DCPO$ has all limits,
and that they are created by the forgetful functor to the category
of posets (and monotone functions),
see \cite[Theorem 3.3.1]{aj95chapter}.
It follows that given $K \colon \omega \to \cat \Scott^\ep$,
the limit of $\radj K \colon \omega^\op \to \Scott$
is
\[
  \left\{
  (x_i)_i \in \prod_{i \in \NN} K(i)
  \mathrel{\Big|}
  \radj K(i \leq j)(x_j) = x_i
  \right\}
\]

\noindent
equipped with the pointwise order.
Moreover, the limiting
cone $\Lim \radj K \to \radj K$ 
consists in set-theoretic projections.%
\footnote{These are also projections in the embedding-projection sense
by Theorem~\ref{thm:proof:scott:limcolim}.}
In view of Theorem~\ref{thm:proof:scott:limcolim},
we also get that $\Scott$ is closed in $\DCPO$
under colimits of $\omega$-chains of embeddings.

The terminal object $\one$ of $\Scott$ is initial in $\Scott^\ep$
(\cite[Proposition 7.1.9]{ac98book}).

\paragraph{Definition of the Interpretation.}
Let $\PT$ be a (pure) type expression with free
type variables $\vec \TV = \TV_1,\dots,\TV_n$.
The interpretation $\I\PT \colon \left(\Scott^\ep\right)^n \to \Scott^\ep$
is defined by induction on $\PT$.
\begin{itemize}
\item
In the case of $\PT = \TV_i$,
we let $\I\PT$ take $\vec X = X_1,\dots,X_n$ to $X_i$.

\item
In the case of $\BT \in \Base$,
we let $\I\PT(\vec X)$ be the flat domain $\BT_\bot$,
where $\BT_\bot$ is $\BT + \{\bot\}$ with $\BT$ discrete.

\item
In the cases of $\PT \times \PTbis$
and $\PTbis \arrow \PT$,
the induction hypotheses give us
functors
\[
\begin{array}{*{5}{l}}
  \I\PT, \I\PTbis
& :
& \left( \Scott^\ep \right)^{n}
& \longto
& \Scott^\ep
\end{array}
\]

\noindent
which preserve colimits of $\omega$-chains.
We can thus set
\[
\begin{array}{r c l}
  \I{\PTbis \arrow \PT}(\vec X)
& =
& \left( \Scott \left( \I{\PTbis}(\vec X) ,\, \I{\PT}(\vec X) \right) \right)^\ep
\\

  \I{\PT \times \PTbis}(\vec X)
& =
& \left(
  \I\PT(\vec X) \times \I\PTbis(\vec X)
  \right)^\ep
\end{array}
\]

\item
In the case of $\rec \TV.\PT$,
the induction hypothesis gives a functor
\[
\begin{array}{*{5}{l}}
  \I\PT
& :
& \left( \Scott^\ep \right)^{n} \times \Scott^\ep
& \longto
& \Scott^\ep
\end{array}
\]

\noindent
which preserves colimits of $\omega$-chains.
Theorem~\ref{thm:proof:scott:limcolim}
gives a functor
\[
\begin{array}{l l r c l}
  \I{\rec \TV.\PT}
& :
& \left( \Scott^\ep \right)^{n} 
& \longto
& \Scott^\ep
\\

&
& \vec X
& \longmapsto
& \Fix (\I{\PT}(\vec X))
\end{array}
\]

\noindent
This functor preserves colimits of $\omega$-chains by
Proposition~\ref{prop:proof:scott:contfunct}.
Moreover, since $\I\PT$ preserves
colimits of $\omega$-chains,
we obtain canonical isomorphisms
\(
  \I\fold
  :
  \I{\PT[\rec \TV.\PT/\TV]}(\vec X)
  \rightleftarrows
  \I{\rec \TV.\PT}(\vec X)
  :
  \I\unfold
\)
by taking
$\I\fold = \ladj{(\fold^\ep)}$
and
$\I\unfold = \ladj{(\unfold^\ep)}$.
\end{itemize}

\begin{figure}[t!]
\[
\begin{array}{c}

\dfrac{}
  {\bot \in \Fin(\I\PT(\vec X))}  

\qquad\qquad

\dfrac{\text{$\BT \in \Base$ and $a \in \BT$}}
  {a \in \Fin(\I\BT(\vec X))}

\qquad\qquad

\dfrac{\text{$d$ finite in $X_i$}}
  {d \in \Fin(\I{\TV_i}(\vec X))}

\\\\

\dfrac{d \in \Fin(\I\PT(\vec X))
  \qquad
  e \in \Fin(\I\PTbis(\vec X))}
  {(d,e) \in \Fin(\I{\PT \times \PTbis}(\vec X))}

\qquad\qquad

\dfrac{d \in \Fin(\I{\PT[\rec\TV.\PT/\TV]}(\vec X))}
  {\I\fold(d) \in \Fin(\I{\rec\TV.\PT}(\vec X))}

\\\\

\dfrac{\begin{array}{l}
  \text{for all $i \in I$,~
  $d_i \in \Fin(\I{\PT})(\vec X)$
  ~and~
  $e_i \in \Fin(\I{\PTbis})(\vec X)$ \@;}
  \\
  \text{for all $J \sle I$,~
  $\bigvee_{j \in J} d_j$ defined in $\I{\PT}(\vec X)$
  ~$\imp$~
  $\bigvee_{j \in J} e_j$ defined in $\I{\PTbis}(\vec X)$}
  \end{array}}
  {\bigvee_{i \in I}(d_i \step e_i) \in \Fin(\I{\PT \arrow \PTbis}(\vec X))}
~(\text{$I$ finite})

\end{array}
\]
\caption{Inductive description of the finite elements of $\I\PT(\vec X)$.%
\label{fig:proof:sem:finelt}}
\end{figure}

\paragraph{Description of the Finite Elements.}
For each (pure) type expression $\PT$
with free variables $\vec\PT = \PT_1,\dots,\PT_n$,
we define a set $\Fin(\I{\PT}(\vec X))$.
The definition is by induction on derivations with
the rules in Figure~\ref{fig:proof:sem:finelt}.
The set $\Fin(\I{\PT}(\vec X))$
describes the finite elements of $\I{\PT}(\vec X)$.
This relies on the following.

Given $\BT \in \Base$,
the finite elements of the flat domain $\I\BT$
are exactly the elements of $\BT$.

Let $X,Y \in \Scott$.
The finite elements in the product $X \times Y$
are exactly the pairs of finite elements.
The finite elements of $\Scott(X,Y)$ are exactly the finite sups of step functions.
Given finite $d \in X$ and $e \in Y$,
the \emph{step function} $(d \step e) \colon X \to Y$
is defined as $(d \step e)(x) = e$ if $x \geq d$ and
$(d \step e)(x)= \bot$ otherwise.
Recall that the sup $\bigvee_{i \in I}(d_i \step e_i)$ of 
a finite family of step functions exists
if, and only if,
for every $J \sle I$, the set $\{e_j \mid j \in J\}$ has an upper bound
whenever so does $\{d_j \mid j \in J\}$.
See \cite[Theorem 1.4.12]{ac98book}.

Concerning recursive types, let 
\[
\begin{array}{*{5}{l}}
  G
& :
& \Scott^\ep
& \longto
& \Scott^\ep
\end{array}
\]

\noindent
be a functor which preserves colimits of $\omega$-chains.
Recall that $\Fix G$ is the colimit in \eqref{diag:proof:scott:colim}
where $K \colon \omega \to \Scott^\ep$
takes $n$ to $G^n(\one)$
(similarly as in \eqref{diag:proof:scott:chain}).
We have seen that $\Scott$ is closed in $\DCPO$
under colimits of $\omega$-chains of embeddings.
Hence it follows from \cite[Theorem 3.3.11]{aj95chapter}
that the finite elements of $\Fix G$
are the images of the finite elements of the $G^n(\one)$'s
under the components of the colimiting cocone
$\gamma \colon K \to \Fix G$.

We thus have the following.

\begin{proposition}
\label{prop:proof:scott:fin}
$\Fin(\I\PT(\vec X))$ is the set of finite elements of $\I\PT(\vec X)$.
\end{proposition}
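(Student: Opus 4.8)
\emph{Overall strategy.} The plan is to prove, by induction on the pure type expression $\PT$ (with free type variables among $\vec\TV = \TV_1,\dots,\TV_n$), the statement that for \emph{every} tuple $\vec X$ of Scott domains, $\Fin(\I\PT(\vec X))$ coincides with the set of finite elements of $\I\PT(\vec X)$; I would treat the two inclusions separately. The inclusion of $\Fin(\I\PT(\vec X))$ into the finite elements is the easy one, proved uniformly by induction on the derivation of $d \in \Fin(\I\PT(\vec X))$ in Figure~\ref{fig:proof:sem:finelt}, using that: $\bot$ is always finite; every element of a flat domain $\BT_\bot$ is finite; $\I\fold$ is an isomorphism in $\Scott$ (from the construction of recursive types) and therefore preserves finite elements; the finite elements of a product are exactly the pairs of finite elements; and, by \cite[Theorem 1.4.12]{ac98book}, a join $\bigvee_{i\in I}(d_i \step e_i)$ of finitely many step functions is defined — and then finite — precisely under the boundedness condition of the rule. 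In each case the hypothesis of the derivation induction supplies finiteness of the immediate subcomponents.

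\emph{The converse inclusion.} For the inclusion of the finite elements into $\Fin(\I\PT(\vec X))$ I would induct on $\PT$. The cases $\PT = \TV_i$, $\BT$, $\PT_1 \times \PT_2$ and $\PTbis \arrow \PT$ are immediate from the same classical descriptions of finite elements (again \cite[Theorem 1.4.12]{ac98book} for products and function spaces) together with the induction hypothesis on the structurally smaller components. The delicate case is $\PT = \rec\TV.\PT'$, where $\I{\rec\TV.\PT'}(\vec X) = \Fix G$ for $G = \I{\PT'}(\vec X, \pl) \colon \Scott^\ep \to \Scott^\ep$: the fold-rule of Figure~\ref{fig:proof:sem:finelt} phrases membership in $\Fin(\I{\rec\TV.\PT'}(\vec X))$ through $\Fin(\I{\PT'[\rec\TV.\PT'/\TV]}(\vec X))$, and $\PT'[\rec\TV.\PT'/\TV]$ is \emph{not} a structural subexpression of $\rec\TV.\PT'$, so the outer induction hypothesis does not apply to it directly.

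\emph{Handling the recursive case.} I would resolve this by a secondary well-founded induction on a rank. By Theorem~\ref{thm:proof:scott:limcolim} and \cite[Theorem 3.3.11]{aj95chapter}, $\Fix G$ is the colimit in $\Scott^\ep$ of the $\omega$-chain $n \mapsto G^n(\one)$, with colimiting cocone $(\gamma_n)_n$, and every finite element $z$ of $\Fix G$ equals $\gamma_n(e)$ for some $n$ and some finite $e \in G^n(\one)$; let $\mathrm{rk}(z)$ be the least such $n$. The claim that every finite $z$ lies in $\Fin(\I{\rec\TV.\PT'}(\vec X))$ is then proved by induction on $\mathrm{rk}(z)$. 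If $\mathrm{rk}(z) = 0$ then $z = \bot$, since the embedding $\gamma_0 \colon \one \to \Fix G$ is strict, so the $\bot$-rule applies. If $\mathrm{rk}(z) = n+1$, set $d \deq \I\unfold(z)$; using the coherence $\I\fold \comp \widehat{\gamma_n} = \gamma_{n+1}$ that comes out of the colimit construction (where $\widehat{\gamma_n}$ is the embedding part of $G(\gamma_n)$, and $G^{n+1}(\one)$ is identified with $\I{\PT'}(\vec X, G^n(\one))$), together with the compositionality identification $\I{\PT'[\rec\TV.\PT'/\TV]}(\vec X) \cong \I{\PT'}(\vec X, \Fix G)$ already used to define $\I\fold$, the element $d$ is a finite element of $\I{\PT'}(\vec X, \Fix G)$. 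The outer induction hypothesis applied to $\PT'$ at the parameter tuple $(\vec X, \Fix G)$ then yields a derivation of $d \in \Fin(\I{\PT'}(\vec X, \Fix G))$. Every finite element $w$ of $\Fix G$ occurring at a $\TV$-leaf of this derivation arises, through $\widehat{\gamma_n}$, inside the image of the embedding $\gamma_n$ — for instance on the domain side of a step function one has $(w' \step b) \comp \radj{(\gamma_n)} = (\gamma_n(w') \step b)$ because embeddings are lower adjoints — so $\mathrm{rk}(w) \le n$, and the secondary induction hypothesis places each such $w$ in $\Fin(\I{\rec\TV.\PT'}(\vec X))$. Splicing these subderivations in — sound because the rules of Figure~\ref{fig:proof:sem:finelt} are syntax-directed and stable under the substitution $\PT'[\rec\TV.\PT'/\TV]$, the compositionality isomorphism preserving the joins that occur in the side conditions — produces a derivation of $d \in \Fin(\I{\PT'[\rec\TV.\PT'/\TV]}(\vec X))$, and the fold-rule concludes $z = \I\fold(d) \in \Fin(\I{\rec\TV.\PT'}(\vec X))$.

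\emph{Main obstacle.} The crux is precisely this recursive case, and within it the bookkeeping around $\widehat{\gamma_n}$: checking that the finite elements appearing at $\TV$-leaves of the derivation of $d$ have rank at most $n$ requires a careful — if routine — analysis of how the functorial action of $\I{\PT'}(\vec X, \pl)$ on the embedding $\gamma_n$ distributes over products, function spaces (where $\TV$ may occur contravariantly, so that $\radj{(\gamma_n)}$ appears) and nested recursive types, i.e.\ a substitution/compositionality lemma for $\I{\pl}$ in tandem with the mixed-variance lifting of Lemma~\ref{lem:proof:scott:lift}. Everything outside this case reduces to the standard characterizations of finite elements in flat domains, binary products and function spaces already cited in the surrounding text.
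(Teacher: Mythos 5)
The paper itself records no proof of this proposition: it lists the standard characterizations of finite elements in flat domains, binary products, function spaces (\cite[Theorem 1.4.12]{ac98book}) and in bilimits of $\omega$-chains of embeddings (\cite[Theorem 3.3.11]{aj95chapter}), and then simply asserts the statement. Your proposal therefore does strictly more work than the text, and its shape is right: the inclusion of $\Fin(\I\PT(\vec X))$ into the finite elements by induction on derivations is unproblematic, and you correctly isolate the one non-routine point of the converse, namely that the fold-rule of Figure~\ref{fig:proof:sem:finelt} passes through $\PT'[\rec\TV.\PT'/\TV]$, which is not a structural subexpression of $\rec\TV.\PT'$, so that a secondary well-founded induction on the rank $\mathrm{rk}(z)$ is needed.

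One step of that secondary induction is, however, false as literally written. Having applied the outer induction hypothesis to $\PT'$ at the parameters $(\vec X,\Fix G)$, you assert that every finite element of $\Fix G$ sitting at a $\TV$-leaf of \emph{the resulting derivation} lies in the image of $\ladj{(\gamma_n)}$. The induction hypothesis only gives you \emph{some} derivation, and derivations are not unique: a finite element of a function space admits many decompositions into joins of step functions, and a redundant component $(w_2 \step b_2)$ with $w_2 \geq w_1$ and $b_2 \leq b_1$ can be adjoined to $(w_1 \step b_1)$ without changing the element derived, while introducing a leaf $w_2$ of arbitrarily large rank. The repair is the transport you sketch in your last paragraph, performed in the opposite order: write $d = \I\unfold(z) = \ladj{(G(\gamma_n))}(e)$ with $e$ finite in $G^{n+1}(\one) = \I{\PT'}(\vec X, G^n(\one))$, apply the outer induction hypothesis to $e$ at the parameters $(\vec X, G^n(\one))$, and then show, by a further induction on $\PT'$, that the embedding part of $\I{\PT'}(\vec X, \gamma_n)$ carries a derivation of $e \in \Fin(\I{\PT'}(\vec X, G^n(\one)))$ to a derivation of $\ladj{(G(\gamma_n))}(e) \in \Fin(\I{\PT'}(\vec X,\Fix G))$ whose $\TV$-leaves are exactly the $\ladj{(\gamma_n)}$-images of those of $e$ (your identity $(w' \step b) \comp \radj{(\gamma_n)} = (\ladj{(\gamma_n)}(w') \step b)$ is the function-space case of this). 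All leaves then have rank at most $n$, your secondary induction hypothesis applies, and the splicing argument goes through.
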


\paragraph{Example.}
We now provide some details on Example~\ref{ex:scott:stream-tree}
on $\I{\Stream\PTbis}$ and $\I{\Tree\PTbis}$,
where $\PTbis$ is a pure type.
We handle streams and binary trees
uniformly by considering the covariant functor
\[
\begin{array}{l l r c l}
  F
& :
& \Scott
& \longto
& \Scott
\\

&
& X
& \longmapsto
& \I\PTbis \times X^\Dir
\end{array}
\]

\noindent
where $\Dir$ is a finite set.
In view of Theorem~\ref{thm:proof:scott:limcolim},
$\Fix F$ is the limit of the $\omega^\op$-chain
\[
\begin{tikzcd} 
  \one
& F(\one)
  \arrow{l}[above]{\one}
& F^2(\one)
  \arrow{l}[above]{F(\one)}
& F^n(\one)
  \arrow[dashed]{l}
& F^{n+1}(\one)
  \arrow{l}[above]{F^n(\one)}
& \phantom{F}
  \arrow[dashed]{l}
\end{tikzcd}
\]

\noindent
where $\one$ is the terminal Scott domain $\{\bot\}$.
Hence, $\Fix F$ is
\[
\begin{array}{c}
  \left\{
  x \in \prod_{n \in \NN} F^n(\one)
  \mathrel{\Big|}
  x(n) = F^n(\one)(x(n+1))
  \right\}
\end{array}
\]

\noindent
equipped with the pointwise order.
We show that $\Fix F$ is isomorphic to $\I\PTbis^{\Dir^*}$.
Define for each $n \in \NN$ an
isomorphism $\iota_n \colon \I\PTbis^{\Dir^n} \to F^n(\one)$
as $\iota_0 = \one \colon \one \to \one$
and
\[
\begin{array}{l l r c l}
  \iota_{n+1}
& :
& \I\PTbis^{\Dir^{n+1}}
& \longto
& F^{n+1}(\one) = \I\PTbis \times \left( F^n(\one) \right)^\Dir
\\

&
& T
& \longmapsto
& \left( T(\es), (\iota_n(u \mapsto T(d \cdot u)))_{d \in \Dir}  \right)
\end{array}
\]

\noindent
and to observe that the following commutes
\[
\begin{tikzcd}
  \I\PTbis^{\Dir^{n+1}}
  \arrow{d}[left]{T \mapsto T\restr \Dir^n}
  \arrow{r}{\iota_{n+1}}
& F^{n+1}(\one)
  \arrow{d}{F^n(\one)}
\\
  \I\PTbis^{\Dir^{n}}
  \arrow{r}[below]{\iota_{n}}
& F^n(\one)
\end{tikzcd}
\]

The characterization of the finite elements then follows from
Proposition~\ref{prop:proof:scott:fin}.

\subsection{\nameref{sec:sem:log}}
\label{sec:proof:sem:log}

First note that if $\triangle$ is either $\pi_1$, $\pi_2$ or $\fold$,
then
since $\I{\form\triangle}$ acts by inverse image
(of resp.\ $\pi_1$, $\pi_2$ and $\I\unfold$), we directly
have that $\I{\form\triangle}$
is monotone (w.r.t.\ inclusion) and preserves all unions and all intersections.

We now consider Lemma~\ref{lem:top:char:fin}.

\LemTopCharFin*

The proof of Lemma~\ref{lem:top:char:fin} is split into the next
two lemmas.

\begin{lemma}
\label{lem:proof:top:fin:compact-open}
Given $\varphi \in \Lang_\land(\PT)$, if $\I\varphi \neq \emptyset$ then
$\I\varphi = \up d$ for some finite $d \in \I\PT$.
\end{lemma}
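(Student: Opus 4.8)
The plan is to prove the statement by induction on the structure of $\varphi \in \Lang_\land(\PT)$, using at each step the explicit description of the finite elements of the domains $\I\PT$ recalled in Figure~\ref{fig:proof:sem:finelt} and Proposition~\ref{prop:proof:scott:fin}. The guiding observation is that whenever $\varphi$ has an outermost modality applied to a subformula $\psi$, the hypothesis $\I\varphi \neq \emptyset$ forces $\I\psi \neq \emptyset$, so the induction hypothesis is available for $\psi$.

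The leaf and unary-modality cases are immediate. For $\varphi = \False$ there is nothing to prove. For $\varphi = \form a$ we have $\I{\form a} = \{a\} = \up a$, using that $a$ is finite and maximal in the flat domain $\I\BT$. For $\varphi = \form{\pi_1}\psi$ (and symmetrically $\form{\pi_2}$), the induction hypothesis gives $\I\psi = \up d$ with $d$ finite, and then $\I{\form{\pi_1}\psi} = \up (d,\bot)$, where $(d,\bot)$ is finite since the finite elements of a product are exactly the pairs of finite elements. For $\varphi = \form\fold\psi$, with $\I\psi = \up d$ from the induction hypothesis, the isomorphism $\I\fold$ (which, like $\I\unfold$, preserves and reflects the order, and maps finite elements to finite elements) yields $\I{\form\fold\psi} = \{x \mid \I\unfold(x) \geq d\} = \up \I\fold(d)$.

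The case $\varphi = \psi \realto \varphi'$ is where the step-function description of the finite elements of exponentials in $\Scott$ (cf.\ \cite[Theorem 1.4.12]{ac98book}) enters, and it is the one I expect to require the most care. If $\I\psi = \emptyset$, then $\I{\psi \realto \varphi'}$ is all of $\I{\PTbis \arrow \PT} = \up \bot$. Otherwise $\I\psi = \up d$ with $d$ finite by the induction hypothesis; and then $\I\varphi'$ is non-empty as well (if $\I\varphi' = \emptyset$ then no $f$ satisfies $f(d) \in \I\varphi'$, contradicting $\I\varphi \neq \emptyset$), so $\I{\varphi'} = \up e$ with $e$ finite, again by the induction hypothesis. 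The key computation will be that $\I{\psi \realto \varphi'} = \up (d \step e)$: the step function $(d \step e)$ is finite in $\Scott(\I\PTbis,\I\PT)$, and $f \geq (d \step e)$ holds exactly when $f(x) \geq e$ for all $x \geq d$, i.e.\ exactly when $f \in \I\psi \realto \I{\varphi'}$.

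Finally, for a finite conjunction $\varphi = \bigwedge_{i \in I}\varphi_i$ (the empty case giving $\I\True = \I\PT = \up \bot$), the hypothesis $\bigcap_{i} \I{\varphi_i} \neq \emptyset$ forces each $\I{\varphi_i} \neq \emptyset$, so by the induction hypothesis $\I{\varphi_i} = \up d_i$ with $d_i$ finite. Any element of $\I\varphi$ is then a common upper bound of the finite set $\{d_i\}_{i \in I}$, so bounded-completeness yields $d := \bigvee_{i \in I} d_i$, which is finite as a finite supremum of finite elements, and $\I\varphi = \bigcap_{i} \up d_i = \up d$. Beyond keeping the bookkeeping of these modalities straight, the only real inputs are the standard characterizations of the finite elements of products and exponentials and the appeal to bounded-completeness; I do not anticipate a genuine obstacle.
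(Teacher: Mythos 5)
Your proof is correct and follows essentially the same route as the paper's: induction on $\varphi \in \Lang_\land(\PT)$, using the explicit description of finite elements (Proposition~\ref{prop:proof:scott:fin}), bounded-completeness for the conjunction case, the order-isomorphism $\I\fold$ for the $\form\fold$ case, and the step function $(d \step e)$ for the arrow case (including the same split on whether $\I\psi$ is empty). The only cosmetic difference is that you treat a finite conjunction $\bigwedge_{i\in I}\varphi_i$ in one go where the paper handles $\True$ and binary $\land$ separately; the underlying argument is identical.
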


\begin{proof}
The proof is by induction on $\varphi \in \Lang_\land(\PT)$.
We rely on the description of finite elements given
by Proposition~\ref{prop:proof:scott:fin}
(see Figure~\ref{fig:proof:sem:finelt}).
\begin{description}
\item[Case of $\True$.]
In this case, we have $\I\varphi = \up \bot$.

\item[Case of $\varphi \land \psi$.]
First, note that $\I\varphi, \I\psi$ are non-empty since so is their intersection.
By induction hypothesis, there are finite $d,e \in \I\PT$
such that $\I\varphi = \up d$ and $\I\psi = \up e$.
Since $\up d \cap \up e$ is non-empty, and since $\I\PT$ is a Scott domain,
we get that $d \vee e$ is defined, finite, and such that
$\up(d\vee e) = \up d \cap \up e$.
Hence $\I{\varphi \land \psi} = \up(d \vee e)$.

\item[Case of $\form a$ (with $a \in \BT$ for $\BT \in \Base$).]
Since $\I{\form a} = \up a$.

\item[Case of $\form{\triangle}\varphi$
with $\triangle$ either $\pi_1$, $\pi_2$ or $\fold$.]
Note that $\I\varphi$ is non-empty since so is
$\I{\form\triangle\varphi} = \I{\form\triangle}(\I\varphi)$.
Hence by induction hypothesis, there is some finite $d$
such that $\I\varphi = \up d$.

Consider first the case of $\triangle = \fold$.
Then, since $\I\unfold$ is an isomorphism with inverse $\I\fold$
we have
\[
\begin{array}{l l l}
  \I{\form\triangle\varphi}
& =
& \I{\form\triangle}(\up d)
\\

& =
& \left\{
  x \in \I{\rec\TV.\PT} \mid \I\unfold(x) \geq d
  \right\}
\\

& =
& \up \I\fold(d)
\end{array}
\]

\noindent
The result then follows from
Proposition~\ref{prop:proof:scott:fin}.

Consider now the case of $\triangle = \pi_i$,
say $\triangle = \pi_1$ (the other case is symmetric).
Since the order in $\I{\PT_1 \times \PT_2}$ is pointwise,
we have
\[
\begin{array}{l l l}
  \I{\form{\pi_1}\varphi}
& =
& \I{\form{\pi_1}}(\up d)

\\
& =
& \left\{
  x \in \I{\PT_1 \times \PT_2} \mid \pi_1(x) \geq d
  \right\}
\\

& =
& \up(d,\bot)
\end{array}
\]

\noindent
and the result again follows from
Proposition~\ref{prop:proof:scott:fin}.

\item[Case of $\psi \realto \varphi$.]
First, if $\I\psi = \emptyset$,
then $\I{\psi \realto \varphi} = \up \bot$.

Assume now that $\I\psi \neq \emptyset$.
In this case, we must also have $\I\varphi \neq \emptyset$.
Hence by induction hypothesis there are $d,e$ finite
such that
$\up e = \I\psi$ and $\up d = \I\varphi$.
Then we are done since
\[
\begin{array}{l l l}
  \I{\psi \realto \varphi}
& =
& \left\{ f \mid \forall x \in \I\psi,~ f(x) \in \I\varphi\right\}
\\

& =
& \left\{ f \mid \forall x \geq e,~ f(x) \geq d\right\}
\\

& =
& \up \left(e \step d \right)
\end{array}
\]
\qed
\end{description}
\end{proof}

\begin{lemma}
\label{lem:proof:top:compact-open:fin}
If $d \in \I\PT$ is finite,
then there is
$\varphi \in \Lang_\land(\PT)$
such that $\up d = \I\varphi$.
\end{lemma}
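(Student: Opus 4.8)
The plan is to establish this as the converse half of Lemma~\ref{lem:top:char:fin}, by induction on the derivation witnessing $d \in \Fin(\I\PT)$ (recall from Proposition~\ref{prop:proof:scott:fin} that $\Fin(\I\PT)$ is exactly the set of finite elements of $\I\PT$). Since $\PT$ is a closed pure type, every type expression occurring in such a derivation is again closed: the product and arrow rules pass to subexpressions, the rule for $\rec\TV.\PT'$ replaces $\TV$ by the closed type $\rec\TV.\PT'$ inside $\PT'$, and the remaining rules concern base types. In particular the type-variable rule never fires, so at each step the formula we build genuinely lies in $\Lang_\land$ of the relevant closed type.

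In the base case $d = \bot$ one takes $\varphi = \True$, so that $\I\True = \I\PT = \up \bot$. For $d = a$ with $a \in \BT \in \Base$, take $\varphi = \form a$: since $\I\BT = \BT_\bot$ is flat, $\up a = \{a\} = \I{\form a}$. For $d = \I\fold(d')$ with $d' \in \Fin(\I{\PT'[\rec\TV.\PT'/\TV]})$, the induction hypothesis yields $\psi \in \Lang_\land(\PT'[\rec\TV.\PT'/\TV])$ with $\up d' = \I\psi$; take $\varphi = \form\fold\psi$ and use that $\I\unfold$ is an order-isomorphism with inverse $\I\fold$ to get $\I{\form\fold\psi} = \{x \mid \I\unfold(x) \ge d'\} = \up \I\fold(d')$. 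For $d = (d_1,d_2)$, the induction hypotheses give $\varphi_i \in \Lang_\land(\PT_i)$ with $\up d_i = \I{\varphi_i}$; take $\varphi = \pair{\varphi_1,\varphi_2} = \form{\pi_1}\varphi_1 \land \form{\pi_2}\varphi_2$, so that, the order on $\I{\PT_1 \times \PT_2}$ being pointwise, $\I\varphi = \up(d_1,\bot) \cap \up(\bot,d_2) = \up(d_1,d_2)$.

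The only mildly delicate case is the arrow rule, $d = \bigvee_{i \in I}(d_i \step e_i)$ with $I$ finite, $d_i \in \Fin(\I{\PT_1})$, $e_i \in \Fin(\I{\PT_2})$ (writing $\PT = \PT_1 \arrow \PT_2$), the join being known to exist in $\I{\PT_1 \arrow \PT_2}$. The induction hypotheses furnish $\psi_i \in \Lang_\land(\PT_1)$ and $\varphi_i \in \Lang_\land(\PT_2)$ with $\up d_i = \I{\psi_i}$ and $\up e_i = \I{\varphi_i}$; take $\varphi = \bigwedge_{i \in I}(\psi_i \realto \varphi_i)$, which is in $\Lang_\land(\PT_1 \arrow \PT_2)$ because $I$ is finite. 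Then $\I\varphi = \bigcap_{i \in I}(\up d_i \realto \up e_i) = \bigcap_{i \in I}\{f \mid \forall x \ge d_i,\ f(x) \ge e_i\} = \bigcap_{i \in I}\up(d_i \step e_i)$, the middle identity being the defining property of step functions. Since the join $\bigvee_{i \in I}(d_i \step e_i)$ exists, an element dominates it iff it dominates every $d_i \step e_i$, so this intersection equals $\up\big(\bigvee_{i \in I}(d_i \step e_i)\big) = \up d$, completing the induction.

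The main point to watch is exactly this last reduction — using the existence of the join (guaranteed by the side condition of the arrow rule in Figure~\ref{fig:proof:sem:finelt}) to rewrite $\bigcap_{i \in I}\up(d_i \step e_i)$ as $\up\bigvee_{i \in I}(d_i \step e_i)$ — together with the bookkeeping, noted above, that keeps all the type expressions closed so that each formula produced really belongs to the relevant $\Lang_\land$. Everything else is a routine unfolding of the semantic modalities.
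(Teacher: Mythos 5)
Your proof is correct and follows essentially the same route as the paper's: induction on the derivation of $d \in \Fin(\I\PT)$ from Figure~\ref{fig:proof:sem:finelt} (via Proposition~\ref{prop:proof:scott:fin}), with the same choice of formula in each case, including the identification $\up(d_i \step e_i) = \I{\psi_i \realto \varphi_i}$ and the rewriting of $\bigcap_{i}\up(d_i \step e_i)$ as $\up\bigvee_{i}(d_i \step e_i)$ in the arrow case. The extra remark on closedness of the type expressions is harmless bookkeeping the paper leaves implicit.
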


\begin{proof}
We rely on Proposition~\ref{prop:proof:scott:fin}
and on the inductive definition of $\Fin(\I\PT)$ in Figure~\ref{fig:proof:sem:finelt}.
We reason by cases on the rules 
in Figure~\ref{fig:proof:sem:finelt}.
\begin{description}
\item[Case of]
\[
\dfrac{}
  {\bot \in \Fin(\I\PT)}  
\]

\noindent
Since $\up \bot = \I{\True}$.

\item[Case of]
\[
\dfrac{\text{$\BT \in \Base$ and $a \in \BT$}}
  {a \in \Fin(\I\BT)}
\]

\noindent
Since $\up a = \I{\form a}$.

\item[Case of]
\[
\dfrac{d \in \Fin(\I\PT)
  \qquad
  e \in \Fin(\I\PTbis)}
  {(d,e) \in \Fin(\I{\PT \times \PTbis})}
\]

\noindent
By induction hypothesis, we
have $\varphi \in \Lang_\land(\PT)$
and $\psi \in \Lang_\land(\PTbis)$
such that
$\I\varphi = \up d$
and
$\I\psi = \up e$.
Since the order in $\I{\PT \times \PTbis}$
is pointwise, we get
\[
\begin{array}{l l l}
  \up(d,e)
& =
& \up d \times \up e
\\

& =
& (\up d \times \I\PTbis)
  \cap
  (\I\PT \times \up e)
\\

& =
& \I{\form{\pi_1}\varphi \land \form{\pi_2}\psi}
\end{array}
\]

\item[Case of]
\[
\dfrac{d \in \Fin(\I{\PT[\rec\TV.\PT/\TV]})}
  {\I\fold(d) \in \Fin(\I{\rec\TV.\PT})}
\]

\noindent
By induction hypothesis, there is
$\varphi \in \Lang_\land(\PT[\rec\TV.\PT/\TV])$
such that $\I\varphi = \up d$.
We thus have $\up \I\fold(d) = \I{\form\fold \varphi}$.

\item[Case of]
\[
\dfrac{\begin{array}{l}
  \text{for all $i \in I$,~
  $d_i \in \Fin(\I{\PT})$
  ~and~
  $e_i \in \Fin(\I{\PTbis})$ \@;}
  \\
  \text{for all $J \sle I$,~
  $\bigvee_{j \in J} d_j$ defined in $\I{\PT}$
  ~$\imp$~
  $\bigvee_{j \in J} e_j$ defined in $\I{\PTbis}$}
  \end{array}}
  {\bigvee_{i \in I}(d_i \step e_i) \in \Fin(\I{\PT \arrow \PTbis})}
\]

\noindent
where $I$ is a finite set.

By induction hypothesis, for each $i \in I$
there are $\varphi_i \in \Lang_\land(\PTbis)$
and $\psi_i \in \Lang_\land(\PT)$
such that $\up e_i = \I{\varphi_i}$
and $\up d_i = \I{\psi_i}$.
Note that
\[
\begin{array}{l l l}
  \up(d_i \step e_i)
& =
& \left\{
  f \colon \I\PT \to \I\PTbis \mid
  \forall x \geq d_i,~
  f(x) \geq e_i
  \right\}
\\

& =
& \left\{
  f \colon \I\PT \to \I\PTbis \mid
  \forall x \in \I{\psi_i},~
  f(x) \in \I{\varphi_i}
  \right\}
\\

& =
& \I{\psi_i \realto \varphi_i}
\end{array}
\]

\noindent
The result then follows from the fact that
\[
\begin{array}{l l l}
  \up \left(
  \bigvee_{i \in I} d_i \step e_i
  \right)
& =
& \bigcap_{i \in I} \up(d_i \step e_i)
\end{array}
\]
\qed
\end{description}
\end{proof}

We now turn to Lemma~\ref{lem:top:char}.
We first recall its statement.

\LemTopChar*

\begin{proof}
Consider first the case of a set $\SP \sle \I\PT$
which is open (resp.\ saturated).
Then $\SP$ is a union (resp.\ an intersection of unions) of
sets of the form $\up d$ with $d \in \I\PT$ finite.
Then result then follows from Lemma~\ref{lem:proof:top:compact-open:fin}
using the closure of $\Lang_\Open(\PT)$ under arbitrary disjunctions
(resp.\ the closure of $\Lang(\PT)$ under arbitrary disjunctions
and conjunctions).

The converse is proven by induction on formulae.
Since opens are stable under unions and finite intersections,
the case of $\varphi \in \Lang_\Open(\PT)$
directly follows from Lemma~\ref{lem:proof:top:fin:compact-open}.
As for $\varphi \in \Lang(\PT)$,
since saturated sets are stable under all unions and intersections,
we only have to consider the cases of modalities.
\begin{description}
\item[Case of $\form a$ (with $a \in \BT$ for $\BT \in \Base$).]
Trivial, since $\I{\form a}$ is compact open.

\item[Case of $\form{\triangle}\varphi$
with $\triangle$ either $\pi_1$, $\pi_2$ or $\fold$.]
Similarly as in Lemma~\ref{lem:proof:top:fin:compact-open},
one can apply the induction hypothesis and use the fact
that $\I{\form\triangle}$ preserves all unions and all intersections.

\item[Case of $\psi \realto \varphi$.]
By induction hypothesis, $\I\varphi$ is upward-closed.
Hence so is $\I{\psi \realto \varphi}$.
\end{description}

For the last part of the statement,
let $\varphi \in \Lang(\PT)$.
Since $\I\varphi$ is saturated,
it is an intersection of unions of sets
of the form $\up d$ with $d \in \I\PT$ finite.
Lemma~\ref{lem:proof:top:compact-open:fin} yields
that such $\up d$'s are definable in $\Lang_\land(\PT)$,
whence the result.
\qed
\end{proof}

We finally discuss Proposition~\ref{prop:sem:sound:ded}.

\PropSemSoundDed*

\begin{proof}
The proof is by induction on $\psi \thesis \varphi$,
and by cases on the rules in Figure~\ref{fig:log:ded}.
The cases of the rules for (infinitary) propositional
logic directly follow from the definition of the interpretation.
So we just have to discuss modalities.

Let $\triangle$ be either $\pi_1$, $\pi_2$ or $\fold$.
Since $\I{\form\triangle}$ acts by inverse image
(of resp.\ $\pi_1$, $\pi_2$ and $\I\unfold$), we directly
have that $\I{\form\triangle}$
is monotone (w.r.t.\ inclusion) and preserves all unions and all intersections.
This handles all the rules for $\form\triangle$.

The rule $\ax{F}$ has already been discussed,
and the other rules for $\realto$ are straightforward to check.
\qed
\end{proof}

}
\opt{full}{
\section{Proofs of \S\ref{sec:compl} (\nameref{sec:compl})}
\label{sec:proof:compl}

\subsection{\nameref{sec:compl:fin}}
\label{sec:proof:compl:fin}

We begin with Proposition~\ref{prop:compl:fin:ded}.

\PropComplFinDed*

The proof of Proposition~\ref{prop:compl:fin:ded}
is split into Lemmas \ref{lem:proof:compl:fin:ded:cor},
\ref{lem:proof:compl:fin:ded:c-false}
and
\ref{lem:proof:compl:fin:ded:compl}.
Namely:
\begin{itemize}
\item
Lemma~\ref{lem:proof:compl:fin:ded:cor} is the soundness of
the system made of Figure~\ref{fig:log:ded} (\S\ref{sec:log})
and eq.~\eqref{eq:compl:cc} for formulae in $\Lang_\land$.

\item
Lemma~\ref{lem:proof:compl:fin:ded:c-false} is a form of dichotomy:
for every $\psi \in \Lang_\land$,
either $\I\psi = \emptyset$ and $\psi \thesis \False$ is derivable,
or $\I\psi \neq \emptyset$ and $\C(\psi)$ is derivable.

\item
Lemma~\ref{lem:proof:compl:fin:ded:compl} is the completeness
of the deduction relation for formulae in $\Lang_\land$.
\end{itemize}

\begin{lemma}
\label{lem:proof:compl:fin:ded:cor}
In the extension of Figure~\ref{fig:log:ded} (\S\ref{sec:log})
with eq.\ \eqref{eq:compl:cc}:
\begin{enumerate}[(1)]
\item
for all $\varphi,\psi \in \Lang_\land(\PT)$,
we have
$\I\psi \sle \I\varphi$
if
$\psi \thesis_\PT \varphi$;

\item
for all $\varphi \in \Lang_\land$,
we have
$\I\varphi \neq \emptyset$
if
$\C(\varphi)$.
\end{enumerate}
\end{lemma}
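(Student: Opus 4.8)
The plan is to establish (1) and (2) by a single induction, carried out simultaneously on derivations of $\psi \thesis_\PT \varphi$ and of $\C(\varphi)$ in the system of Figure~\ref{fig:log:ded} augmented with eq.~\eqref{eq:compl:cc}. A simultaneous induction is forced by the mutual dependency of the two syntactic systems: a derivation of $\C(\varphi)$ may invoke a deduction judgement (via the clause taking $\C(\psi)$ and $\psi \thesis \varphi$ with $\varphi \in \Lang_\land$ to $\C(\varphi)$), while a derivation of a deduction judgement may invoke a consistency judgement (via the rule $\ax{C}$).

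For part~(1), I would note first that all the rules of Figure~\ref{fig:log:ded} are handled exactly as in Proposition~\ref{prop:sem:sound:ded}; the only way eq.~\eqref{eq:compl:cc} can enter such a derivation is through a sub-derivation of a judgement $\C(\psi)$, which is covered by the induction hypothesis for part~(2). The sole genuinely new case is the rule $\ax{C}$, which derives $(\psi \realto \False) \thesis \False$ from $\C(\psi)$: the induction hypothesis for part~(2) gives $\I\psi \neq \emptyset$, so picking $x \in \I\psi$ shows that no $f$ can satisfy $f(x) \in \I\False = \emptyset$; hence $\I{\psi \realto \False} = \emptyset \sle \I\False$.

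For part~(2), I would proceed by cases on the last rule of the derivation of $\C(\varphi)$. The base clauses are immediate: $\I\True = \I\PT$ contains $\bot$; $\I{\form a} = \{a\}$; if $x \in \I\varphi$ then $\I\fold(x) \in \I{\form\fold\varphi}$ because $\I\unfold$ is the inverse of $\I\fold$; and if $x \in \I\varphi$ and $y \in \I\psi$ then $(x,y) \in \I{\pair{\varphi,\psi}} = \I{\form{\pi_1}\varphi} \cap \I{\form{\pi_2}\psi}$. For the clause taking $\C(\psi)$ and $\psi \thesis \varphi$ to $\C(\varphi)$, the two induction hypotheses combine: $\I\psi \neq \emptyset$ and $\I\psi \sle \I\varphi$ give $\I\varphi \neq \emptyset$.

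The main obstacle is the last clause, which derives $\C(\bigwedge_{i \in I}(\psi_i \realto \varphi_i))$ for a finite index set $I$ (all the $\psi_i \realto \varphi_i$ being of a common type $\PTbis \arrow \PT$), assuming $\C(\psi_i)$ and $\C(\varphi_i)$ for each $i$ and, for each $J \sle I$, either $\bigwedge_{j \in J}\psi_j \thesis \False$ or $\C(\bigwedge_{j \in J}\varphi_j)$. Here one must produce an explicit Scott-continuous witness $f \in \I{\PTbis \arrow \PT}$. Using the induction hypothesis for~(2) together with Lemma~\ref{lem:top:char:fin}, I would write $\I{\psi_i} = \up d_i$ and $\I{\varphi_i} = \up e_i$ with $d_i \in \I\PTbis$ and $e_i \in \I\PT$ finite, and take $f \deq \bigvee_{i \in I}(d_i \step e_i)$. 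The crux is to check that this sup of step functions exists; by the characterization of finite elements of function spaces (Figure~\ref{fig:proof:sem:finelt}, cf.\ also \cite[Theorem 1.4.12]{ac98book}) it suffices that for every $J \sle I$, if $\{d_j \mid j \in J\}$ has an upper bound then so does $\{e_j \mid j \in J\}$. So suppose $\{d_j \mid j \in J\}$ is bounded, i.e.\ $\I{\bigwedge_{j \in J}\psi_j} = \bigcap_{j \in J}\up d_j \neq \emptyset$; then the induction hypothesis for~(1) rules out $\bigwedge_{j \in J}\psi_j \thesis \False$, so the side condition forces $\C(\bigwedge_{j \in J}\varphi_j)$, and the induction hypothesis for~(2) gives $\bigcap_{j \in J}\up e_j = \I{\bigwedge_{j \in J}\varphi_j} \neq \emptyset$, i.e.\ $\{e_j \mid j \in J\}$ is bounded. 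Hence $f$ is well defined, and since $f(x) \geq e_i$ for every $i$ and every $x \in \up d_i = \I{\psi_i}$, we get $f \in \I{\psi_i \realto \varphi_i}$ for all $i$, so $f \in \I{\bigwedge_{i \in I}(\psi_i \realto \varphi_i)}$, which is therefore non-empty. I expect this well-definedness check --- the translation of the purely logical side condition into the order-theoretic boundedness condition --- to be the only non-routine point of the argument.
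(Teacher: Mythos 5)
Your proposal is correct and follows essentially the same route as the paper's proof: a mutual induction on derivations of $\thesis$ and $\C$, deferring the rules of Figure~\ref{fig:log:ded} to Proposition~\ref{prop:sem:sound:ded}, handling $\ax{C}$ via the induction hypothesis for consistency, and reducing the final function-space clause to the existence criterion for finite sups of step functions via Lemma~\ref{lem:top:char:fin}. The only (cosmetic) difference is that you exhibit the witness $\bigvee_{i}(d_i \step e_i)$ directly, whereas the paper argues contrapositively from the assumption that the intersection is empty.
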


\begin{proof}
We reason by mutual induction the definition of $\thesis$ and $\C$.
Thanks to Proposition~\ref{prop:sem:sound:ded},
we do not have to consider the rules in Figure~\ref{fig:log:ded}.
We reason by cases on the last applied rule.
\begin{description}
\item[Cases of]
\[
\begin{array}{c}

\dfrac{}{\C(\True)}

\quad

\dfrac
  {\text{$\BT \in \Base$ and $a \in \BT$}}
  {\C(\form a)}

\quad

\dfrac{\C(\varphi)}
  {\C(\form\fold \varphi)}

\quad

\dfrac{\C(\psi)
  \quad
  \psi \thesis \varphi}
  {\C(\varphi)}

\end{array}
\]

\noindent
Trivial.

\item[Case of]
\[
\dfrac{\C(\varphi) 
  \quad
  \C(\psi)}
  {\C(\pair{\varphi,\psi})}
\]

\noindent
Recall that $\pair{\varphi,\psi} = \form{\pi_1}\varphi \land \form{\pi_2}\psi$.
Hence, if $\I{\pair{\varphi,\psi}} = \emptyset$ then we must have
either $\I\varphi = \emptyset$ or $\I\psi = \emptyset$,
and we conclude by induction hypothesis.

\item[Case of]
\[
\ax{C}
\dfrac{\C(\psi)}
  {(\psi \realto \False) \,\thesis\, \False}
\]

By induction hypothesis, we have $\I\psi \neq \emptyset$.
Hence $\I{\psi \realto \False} = \emptyset$.

\item[Case of]
\[
\dfrac{\begin{array}{l}
  \text{$I$ finite and $\forall i \in I$,}~
  \C(\psi_i) 
  ~\text{and}~
  \C(\varphi_i) ;
  \\
  \text{$\forall J \sle I$,}~
  \bigwedge_{j \in J} \psi_j \thesis \False
  ~\text{or}~
  \C\left( \bigwedge_{j \in J} \varphi_j \right)
  \end{array}}
  {\C\left( \bigwedge_{i \in I}(\psi_i \realto \varphi_i) \right)}
\]

Let $\PT,\PTbis$ such that
$\varphi_i \in \Lang_\land(\PT)$ and $\psi_i \in \Lang_\land(\PTbis)$
for all $i \in I$.

First, by induction hypothesis 
we have
$\I{\psi_i} \neq \emptyset$
and
$\I{\varphi_i} \neq \emptyset$
for all $i \in I$.
Hence, it follows from Lemma~\ref{lem:top:char:fin}
that for each $i \in I$,
there are finite $d_i \in \I\PT$
and $e_i \in \I\PTbis$
such that
$\up d_i = \I{\varphi_i}$
and
$\up e_i = \I{\psi_i}$.
We thus have
$\I{\psi_i \realto \varphi_i} = \up(e_i \step d_i)$
for each $i \in I$,
so that
\[
\begin{array}{l l l}
  \I{\bigwedge_{i \in I} \left( \psi_i \realto \varphi_i \right)}
& =
& \bigcap_{i \in I} \up(e_i \step d_i)
\end{array}
\]

Assume 
$\I{\bigwedge_{i \in I} \left( \psi_i \realto \varphi_i \right)} = \emptyset$.
As recalled in \S\ref{sec:proof:sem:pure}
(see also \cite[Theorem 1.4.12]{ac98book}),
there is some $J \sle I$
such that
\[
\begin{array}{l l l !{\quad\text{and}\quad} l l l}
  \bigcap_{i \in I} \up e_i
& \neq
& \emptyset

& \bigcap_{i \in I} \up d_i
& =
& \emptyset
\end{array}
\]

But the induction hypothesis
yields either $\bigcap_{i \in I} \up e_i = \emptyset$
or $\bigcap_{i \in I} \up d_i \neq \emptyset$,
a contradiction.
\qed
\end{description}
\end{proof}

\begin{lemma}
\label{lem:proof:compl:fin:ded:c-false}
For all $\psi \in \Lang_\land(\PT)$,
\begin{enumerate}[(1)]
\item
\label{item:proof:compl:fin:ded:c-false:c}
if $\I\psi \neq \emptyset$,
then $\C(\psi)$ is derivable;

\item
\label{item:proof:compl:fin:ded:c-false:ded}
if $\I\psi = \emptyset$,
then $\psi \thesis \False$ is derivable.
\end{enumerate}
\end{lemma}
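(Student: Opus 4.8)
The plan is to prove clauses (1) and (2) simultaneously by induction on the formula $\psi \in \Lang_\land(\PT)$. Lemma~\ref{lem:top:char:fin} guides the case split: a satisfiable formula of $\Lang_\land$ has interpretation $\up d$ for some finite $d$, and emptiness of $\I\psi$ is determined by emptiness of the interpretations of the immediate subformulae. The only ingredients are the rules of Figure~\ref{fig:log:ded}, those of eq.~\eqref{eq:compl:cc}, and the already-proved soundness direction (Lemma~\ref{lem:proof:compl:fin:ded:cor}), which makes the alternative ``$\I\chi=\emptyset$'' versus ``$\C(\chi)$ derivable'' exclusive.

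The non-conjunctive cases are routine. For $\True$, $\I\True\neq\emptyset$ and $\C(\True)$ is an axiom; for $\form a$, $\I{\form a}=\{a\}\neq\emptyset$ and $\C(\form a)$ is an axiom. For a modal formula $\form\triangle\varphi$ with $\triangle\in\{\pi_1,\pi_2,\fold\}$ one has $\I{\form\triangle\varphi}=\emptyset$ iff $\I\varphi=\emptyset$ ($\I\unfold$ is an isomorphism, each $\I{\PT_i}$ is non-empty); if so, the induction hypothesis gives $\varphi\thesis\False$, whence $\form\triangle\varphi\thesis\form\triangle\False\thesis\False$ via the monotonicity rule for $\form\triangle$ and the rule $\form\triangle\bigvee_{i\in I}\varphi_i\thesis\bigvee_{i\in I}\form\triangle\varphi_i$ at $I=\emptyset$; if not, the induction hypothesis gives $\C(\varphi)$, whence $\C(\form\fold\varphi)$ directly, and $\C(\form{\pi_i}\varphi)$ after rewriting $\form{\pi_1}\varphi\thesisiff\form{\pi_1}\varphi\land\form{\pi_2}\True$ and using the pairing rule for $\C$ with $\C(\True)$ and the closure of $\C$ under $\thesis$. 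For an arrow formula $\chi\realto\varphi$ (with $\chi\in\Lang_\land(\PTbis)$, $\varphi\in\Lang_\land(\PT)$) there are three sub-cases. If $\I\chi=\emptyset$, then $\I{\chi\realto\varphi}$ is the whole domain; the induction hypothesis gives $\chi\thesisiff\False$, so $\chi\realto\varphi\thesisiff\False\realto\varphi$, and $\True\thesis(\False\realto\varphi)$ (Remark~\ref{rem:log:realto}) yields $\C(\chi\realto\varphi)$ from $\C(\True)$. If $\I\chi\neq\emptyset$ but $\I\varphi=\emptyset$, then $\I{\chi\realto\varphi}=\emptyset$; the induction hypothesis gives $\varphi\thesis\False$ and $\C(\chi)$, so $\chi\realto\varphi\thesis\chi\realto\False$ by contravariance and rule $\ax{C}$ gives $\chi\realto\False\thesis\False$. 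If both are non-empty, so is $\I{\chi\realto\varphi}$, and $\C(\chi\realto\varphi)$ follows from the arrow rule for $\C$ with a one-element index set, whose premises are discharged by $\C(\chi)$, $\C(\varphi)$ and $\C(\True)$.

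The conjunctive case $\psi=\bigwedge_{i\in I}\psi_i$ ($I$ finite) is the main obstacle. If some $\I{\psi_{i_0}}=\emptyset$, the induction hypothesis gives $\psi_{i_0}\thesis\False$, hence $\psi\thesis\psi_{i_0}\thesis\False$ (and $\I\psi=\emptyset$). Otherwise the induction hypothesis gives $\C(\psi_i)$ and $\I{\psi_i}=\up d_i$ ($d_i$ finite). The plan is to flatten and push conjunctions through modalities. Since the only modality available at a pure type is the one matching its head constructor, using the rules by which $\form{\pi_1}$, $\form{\pi_2}$, $\form\fold$ commute with conjunction (Example~\ref{ex:log:modalnf}) and absorbing nested finite conjunctions, $\psi$ becomes provably equivalent to a finite conjunction of ``modal atoms'' all of the shape dictated by $\PT$: of $\form a$'s when $\PT\in\Base$, of $\form{\pi_1}\vartheta$'s and $\form{\pi_2}\vartheta$'s when $\PT$ is a product, of $\form\fold\vartheta$'s when $\PT$ is recursive, of arrow formulae $\chi\realto\vartheta$ when $\PT$ is an arrow, where every $\vartheta$ and $\chi$ is a proper subformula of $\psi$. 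One then pushes the conjunction through the head modality and recurses: $\bigwedge_k\form\fold\vartheta_k\thesisiff\form\fold\bigwedge_k\vartheta_k$ (similarly for $\form{\pi_i}$), so the status of $\psi$ is read off from the induction hypothesis applied to the smaller formula $\bigwedge_k\vartheta_k$, lifting $\C$ through the modality (using the pairing rule in the product case) or lifting $\thesis\False$ through it via $\form\triangle\False\thesis\False$; a base conjunction $\bigwedge_k\form{a_k}$ is consistent iff all $a_k$ coincide, and otherwise collapses by $\form a\land\form b\thesis_\BT\False$; and for a conjunction of arrows $\bigwedge_{k\in K}(\chi_k\realto\vartheta_k)$ — after discarding the conjuncts with $\I{\chi_k}=\emptyset$, which are provably $\True$ — the consistency criterion for a finite sup of step functions (reflected in the inductive description of finite elements, Figure~\ref{fig:proof:sem:finelt}) translates exactly into the premises of the arrow rule for $\C$ (each instance being ``$\bigwedge_{j\in J}\chi_j\thesis\False$ or $\C(\bigwedge_{j\in J}\vartheta_j)$'', supplied by the induction hypothesis on these smaller conjunctions) when that sup exists, and, when it fails at some $J$, one uses $\bigwedge_{j\in J}(\chi_j\realto\vartheta_j)\thesis(\bigwedge_{j\in J}\chi_j)\realto(\bigwedge_{j\in J}\vartheta_j)$ together with $\bigwedge_{j\in J}\vartheta_j\thesis\False$ and $\C(\bigwedge_{j\in J}\chi_j)$ (induction hypothesis) and rule $\ax{C}$ to obtain $\psi\thesis\False$. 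Every conjunction that recurs is built from proper subformulae of $\psi$, so a measure such as formula size strictly decreases and the induction is well-founded. I expect the genuine work to be in the arrow case — aligning the step-function consistency condition with the premises of the arrow rule for $\C$, and checking that flattening only ever introduces proper subformulae; the rest is propositional reasoning inside the basic deduction system.
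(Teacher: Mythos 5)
Your proposal is correct and follows essentially the same route as the paper's proof: a simultaneous induction on formula size, reducing each conjunction to a canonical form dictated by the head constructor of $\PT$ via the commutation of $\form{\pi_1},\form{\pi_2},\form\fold$ with $\bigwedge$, and in the arrow case matching the consistency criterion for finite sups of step functions against the premises of the arrow rule for $\C$, with rule $\ax{C}$ handling the inconsistent case. The only cosmetic difference is that the paper organizes the case analysis by the pure type $\PT$ up front rather than by formula shape, but the content of each case (including the treatment of conjuncts with empty premise interpretation and the application of the induction hypothesis to the smaller conjunctions $\bigwedge_{j\in J}\psi''_j$ and $\bigwedge_{j\in J}\psi'_j$) coincides with yours.
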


\begin{proof}
Both statements are proven by a simultaneous induction on the (finite!)
size of $\psi \in \Lang_\land$.

Note that $\C(\True)$ and $\False \thesis \False$
are always derivable,
so that we may always assume $\psi \neq \True$ in
item \eqref{item:proof:compl:fin:ded:c-false:c}
and $\psi \neq \False$ in
item \eqref{item:proof:compl:fin:ded:c-false:ded}.

We now reason by cases on $\PT$.
\begin{description}
\item[Case of $\BT$ with $\BT \in \Base$.]
We begin with item \eqref{item:proof:compl:fin:ded:c-false:c}.
If $\I\psi \neq \emptyset$,
then we must have $\I\psi = \{a\}$ for some $a \in \BT$.
Hence $\psi$ is $\thesisiff$-equivalent to $\form a$,
and we get $\C(\psi)$ since $\C(\form a)$ and $\form a \thesis \psi$.

We now turn to item \eqref{item:proof:compl:fin:ded:c-false:ded}.
If $\psi \neq \False$, then it must be the case that
$\psi$ is a finite conjunction containing (at least)
$\form a$ and $\form b$ for some $a \neq b$ in $\BT$.
We can thus conclude using the rule $\form a \land \form b \thesis \False$.

\item[Case of $\rec\TV.\PT$.]
If $\psi \neq \False$,
then $\psi$ is $\thesisiff$-equivalent to a formula of the form
$\bigwedge_{j \in J}\form\fold \psi_j$ for some finite $J$,
where each $\psi_j$ is smaller than $\psi$.
Let $\psi'$ be the formula
$\bigwedge_{j \in J} \psi_j$.
Note that $\psi \thesisiff \form\fold\psi'$ by Example~\ref{ex:log:modalnf},
so that
$\I\psi = \I{\form\fold}(\I{\psi'})$.

We first consider item \eqref{item:proof:compl:fin:ded:c-false:c}.
If $\I\psi \neq \emptyset$,
then $\I{\psi'} \neq \emptyset$.
If moreover $J \neq \emptyset$ (otherwise $\psi = \True$),
then
$\psi'$ is smaller than $\psi$ and the induction hypothesis
yields $\C(\psi')$.
Hence $\C(\form\fold \psi')$
and we get $\C(\psi)$ since $\form\fold \psi' \thesis \psi$.

We now turn to item \eqref{item:proof:compl:fin:ded:c-false:ded}.
If $\I\psi = \emptyset$ then
$\I{\psi'} = \emptyset$.
In this case, $J$ must be non-empty
(since otherwise $\I{\psi'} = \I{\True}$).
So $\psi'$ is smaller than $\psi$
and the induction hypothesis yields
$\psi' \thesis_{\PT[\rec\TV.\PT/\TV]} \False$,
so that
$\form\fold\psi' \thesis_{\rec\TV.\PT} \form\fold \False$.
Then we are done since $\form\fold \False \thesis \False$
(take $I = \emptyset$ in the rule
$\form\triangle \bigvee (-) \thesis \bigvee \form\triangle(-)$).

\item[Case of $\PT_1 \times \PT_2$.]
If $\psi \neq \False$,
then $\psi$ is $\thesisiff$-equivalent to a formula of the form
\( 
  (\bigwedge_{j \in J} \form{\pi_1}\psi_j)
  \land
  (\bigwedge_{k \in K} \form{\pi_2}\psi_k)
\)
where $\psi_j,\psi_k$ are smaller than $\psi$,
and
where we can assume w.l.o.g.\ $J \cap K = \emptyset$.
Let $\psi' = \bigwedge_{j \in J} \psi_j$
and $\psi'' = \bigwedge_{k \in K} \psi_k$,
so that $\psi \thesisiff \form{\pi_1}\psi' \land \form{\pi_2}\psi''$.

We first consider item \eqref{item:proof:compl:fin:ded:c-false:c}.
If $\I\psi \neq \emptyset$,
then $\I{\psi'} \neq \emptyset$
and $\I{\psi''} \neq \emptyset$.
If $J$ (resp.\@ $K$) is non-empty,
then $\psi'$ (resp. $\psi''$) is smaller than $\psi$
and the induction hypothesis applies to yield
$\C(\psi')$ (resp.\ $\C(\psi'')$).
If $J$ (resp.\ $K$) is empty, then
$\psi' = \True$ (resp.\ $\psi'' = \True$),
so that $\C(\psi')$ (resp.\ $\C(\psi'')$).
Hence, in any case we get $\C(\psi')$ and $\C(\psi'')$,
so that $\C(\pair{\psi',\psi''})$ and we are done.

We now turn to item \eqref{item:proof:compl:fin:ded:c-false:ded}.
If  $\I\psi = \emptyset$,
then we must have either $\I{\psi'} = \emptyset$
or $\I{\psi''} = \emptyset$,
say  $\I{\psi'} = \emptyset$ (the other case is symmetric).
Reasoning similarly as above
yields $\psi' \thesis_{\PT_1} \False$ by induction hypothesis,
and we conclude using
$\form{\pi_1}\False \thesis_{\PT_1 \times \PT_2} \False$.

\item[Case of $\PTbis \arrow \PT$.]
If $\psi \neq \False$,
then $\psi$ is $\thesisiff$-equivalent to a formula of the form
$\bigwedge_{i \in I} (\psi''_i \realto \psi'_i)$,
where $\psi'_i, \psi''_i$ are smaller than $\psi$.

Assume first that for some $i \in I$,
we have $\I{\psi'_i} = \emptyset$
with $\I{\psi''_i} \neq \emptyset$.
Then $\I{\psi''_i \realto \psi'_i} = \emptyset$
and $\I\psi = \emptyset$.
Hence we must be in the case of item \eqref{item:proof:compl:fin:ded:c-false:ded}.
Moreover, by induction hypothesis we have $\C(\psi''_i)$
and $\psi' \thesis \False$,
so that we can derive $\psi \thesis \False$ using the rule $\ax{C}$.

Otherwise, we have $\I{\psi'_i} \neq \emptyset$
for all $i \in I$ such that $\I{\psi''_i} \neq \emptyset$.

Given $i \in I$ such that $\I{\psi''_i} = \emptyset$,
by induction hypothesis we have $\psi''_i \thesis \False$,
and since $\True \thesis \left( \False \realto \varphi \right)$
for any formula $\varphi$ (Remark~\ref{rem:log:realto}),
we get $\True \thesis \left(\psi''_i \realto \psi'_i \right)$.
Hence
$\bigwedge_{j \neq i}(\psi''_j \realto \psi'_j) \thesisiff \psi$.

We can therefore reduce the case of 
$\bigwedge_{i \in I} (\psi''_i \realto \psi'_i)$
where 
$\I{\psi'_i} \neq \emptyset$ and $\I{\psi''_i} \neq \emptyset$
for all $i \in I$.
In, particular, the induction hypothesis yields $\C(\psi'_i)$
and $\C(\psi''_i)$ for all $i \in I$.

Regarding item \eqref{item:proof:compl:fin:ded:c-false:c},
if $\I\psi \neq \emptyset$,
then for all $J \sle I$ we have either
$\SI{\bigwedge_{j \in J} \psi''_j} = \emptyset$
or
$\SI{\bigwedge_{j \in J} \psi'_j} \neq \emptyset$,
so that either
$\bigwedge_{j \in J} \psi''_j \thesis \False$
or
$\C(\bigwedge_{j \in J} \psi'_j)$
by induction hypothesis.
We can thus obtain $\C(\psi)$ by using the last rule in eq.\ \eqref{eq:compl:cc}.

Concerning item \eqref{item:proof:compl:fin:ded:c-false:ded},
if $\I\psi = \emptyset$,
then there is some $J \sle I$
such that
$\SI{\bigwedge_{j \in J} \psi''_j} \neq \emptyset$
while
$\SI{\bigwedge_{j \in J} \psi'_j} = \emptyset$.
Hence
$\C(\bigwedge_{j \in J} \psi''_j)$
and
$\bigwedge_{j \in J} \psi'_j \thesis \False$
by induction hypothesis.
We have
\[
\begin{array}{l l l}
  \psi
& \thesis
& \bigwedge_{i \in I}
  \left(
  \left( \bigwedge_{j \in J} \psi''_j \right)
  \realto
  \psi'_i
  \right)
\end{array}
\]

\noindent
and thus
\[
\begin{array}{l l l}
  \psi
& \thesis
& \left( \bigwedge_{j \in J} \psi''_j \right)
  \realto
  \bigwedge_{j \in J} \psi'_j
\end{array}
\]

\noindent
and we obtain $\psi \thesis \False$ using the rule $\ax{C}$.
\qed
\end{description}
\end{proof}

\begin{lemma}
\label{lem:proof:compl:fin:ded:compl}
For all $\varphi,\psi \in \Lang_\land(\PT)$,
if $\I\psi \sle \I\varphi$,
then $\psi \thesis \varphi$ is derivable;
\end{lemma}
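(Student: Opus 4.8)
The plan is to prove Lemma~\ref{lem:proof:compl:fin:ded:compl} by induction on the total size $|\psi|+|\varphi|$, converting the semantic inclusion $\I\psi\sle\I\varphi$ into a derivation by case analysis on the outermost shape of $\varphi$, after first rewriting $\psi$ (up to $\thesisiff$) into a shape matching that of $\varphi$. Two reductions are immediate. If $\I\psi=\emptyset$, then $\psi\thesis\False$ is derivable by Lemma~\ref{lem:proof:compl:fin:ded:c-false}, and $\False\thesis\varphi$, so $\psi\thesis\varphi$. If $\varphi$ is a conjunction $\bigwedge_{i\in I}\varphi_i$ (in particular if $\varphi=\True$), then $\I\psi\sle\I{\varphi_i}$ for each $i\in I$, the induction hypothesis gives $\psi\thesis\varphi_i$, and the right rule for $\bigwedge$ in Figure~\ref{fig:log:ded} yields $\psi\thesis\varphi$. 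It therefore remains to treat the case where $\varphi$ is a \emph{prime} formula, i.e.\ one of $\form a$, $\form\fold\chi$, $\form{\pi_k}\chi$ or $\psi''\realto\chi$, with $\I\psi\neq\emptyset$ (and hence $\I\varphi\neq\emptyset$).

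For the modal cases one uses that at each pure type only one kind of modality is available, so by the commutation of $\form\triangle$ over finite conjunctions (Example~\ref{ex:log:modalnf}) and by flattening, $\psi$ is provably equivalent either to $\False$ -- in which case we are done -- or to a formula of the same shape as $\varphi$: to $\form a$, to $\form\fold\psi'$, or to $\form{\pi_1}\psi_1\land\form{\pi_2}\psi_2$. Peeling off the modality -- using that $\I{\form a}=\up a$ and $\I\psi\neq\emptyset$ forces $\I\psi=\{a\}$ whence $\psi\thesisiff\form a$; that $\I\unfold$ is a bijection; that $\I{\PT_1\times\PT_2}$ carries the pointwise order -- reduces the claim to a strictly smaller instance $\I{\psi'}\sle\I\chi$ (resp.\ $\I{\psi_1}\sle\I\chi$), to which the induction hypothesis applies, and one reassembles the derivation with the rule deriving $\form\triangle\psi\thesis\form\triangle\varphi$ from $\psi\thesis\varphi$ (together with the $I=\emptyset$ instance of $\bigwedge_i\form\triangle\varphi_i\thesis\form\triangle\bigwedge_i\varphi_i$ for the degenerate subcases where $\psi'$ or $\psi_1$ is $\True$).

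The main obstacle is the arrow case $\varphi=\psi''\realto\chi$. Here, after flattening and discarding $\False$-conjuncts, $\psi\thesisiff\bigwedge_{i\in I}(\alpha_i\realto\beta_i)$ with $I$ finite; a conjunct with $\I{\alpha_i}=\emptyset$ is provably implied by $\True$ (by Remark~\ref{rem:log:realto} and Lemma~\ref{lem:proof:compl:fin:ded:c-false}, hence removable), a conjunct with $\I{\alpha_i}\neq\emptyset$ and $\I{\beta_i}=\emptyset$ would force $\I\psi=\emptyset$ (contrary to assumption), and if $\I{\psi''}=\emptyset$ then $\True\thesis\psi''\realto\chi$ and we are done. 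Otherwise, by Lemma~\ref{lem:top:char:fin} write $\I{\alpha_i}=\up e_i$, $\I{\beta_i}=\up d_i$, $\I{\psi''}=\up e$, $\I\chi=\up d$ for finite elements ($\I\chi\neq\emptyset$ again because $\I\psi\neq\emptyset$); as in the proof of Lemma~\ref{lem:proof:top:fin:compact-open}, $\I{\alpha_i\realto\beta_i}=\up(e_i\step d_i)$ and $\I\psi=\up\bigl(\bigvee_{i\in I}(e_i\step d_i)\bigr)$. Evaluating the inclusion $\I\psi\sle\up(e\step d)$ at $e$ yields $d\leq\bigvee_{j\in J}d_j$ where $J=\{i\in I:e_i\leq e\}=\{i\in I:\I{\psi''}\sle\I{\alpha_i}\}$, so $\I{\psi''}\sle\I{\alpha_j}$ for $j\in J$ and $\I{\bigwedge_{j\in J}\beta_j}\sle\I\chi$. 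Since $\psi''$ and $\chi$ are strict subformulae of $\varphi$, while $\alpha_j$ and $\bigwedge_{j\in J}\beta_j$ have size bounded by that of the normalised $\psi$, the induction hypothesis gives $\psi''\thesis\alpha_j$ and $\bigwedge_{j\in J}\beta_j\thesis\chi$; one then assembles $\bigwedge_{i\in I}(\alpha_i\realto\beta_i)\thesis\psi''\realto\chi$ using contravariance of $\realto$, the $\bigwedge$-left and $\bigwedge$-right rules, and the rule $\bigwedge_i(\psi\realto\varphi_i)\thesis\psi\realto\bigwedge_i\varphi_i$, and concludes since $\psi\thesisiff\bigwedge_{i\in I}(\alpha_i\realto\beta_i)$. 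The delicate point throughout is the size bookkeeping that keeps the induction on $|\psi|+|\varphi|$ well-founded: each rewriting of $\psi$ must only decrease or preserve its size, while $\varphi$ must strictly decrease -- which it does, since every subformula of $\varphi$ reached as the right-hand side of a recursive call is strictly smaller than $\varphi$.
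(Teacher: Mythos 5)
Your proof is correct and follows essentially the same route as the paper's: induction on $|\psi|+|\varphi|$, the same two preliminary reductions (empty $\I\psi$ via Lemma~\ref{lem:proof:compl:fin:ded:c-false}, and conjunctions on the right), the same normalisation of $\psi$ to match the shape of $\varphi$ in the modal cases, and the same step-function computation $d\leq\bigvee_{e_i\leq e}d_i$ in the arrow case, reassembled with contravariance and the $\realto$/$\bigwedge$ distribution rule. Your only (harmless) deviations are casing on the outermost connective of $\varphi$ rather than on the pure type $\PT$, and being slightly more explicit than the paper about invoking the induction hypothesis to obtain $\psi''\thesis\alpha_j$ for $j\in J$ and about the size bookkeeping.
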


\begin{proof}
The proof is by induction on sum of the (finite!) 
sizes of $\psi$ and $\varphi$.

First, note that if $\varphi = \bigwedge_{i \in I}\varphi_i$
for some finite set $I$,
then $\I\psi \sle \I\varphi$ implies
$\I\psi \sle \I{\varphi_i}$ for all $i \in I$,
and we can obtain $\psi \thesis \varphi$ from the induction hypotheses.

Also, if $\I\psi = \emptyset$, then Lemma~\ref{lem:proof:compl:fin:ded:c-false}
yields that $\psi \thesis \False$,
so that $\psi \thesis \varphi$.
This in particular applies when $\I\varphi = \emptyset$,
since we must then have $\I\psi = \emptyset$ as well.

We can thus assume that $\varphi$ is not a conjunction
and that both $\I\varphi$ and $\I\psi$ are not empty.
We now reason by cases on $\PT$.
\begin{description}
\item[Case of $\BT$ with $\BT \in \Base$.]
In this case, we must have $\I\varphi = \{a\}$ for some $a \in \BT$,
and thus $\I\psi = \{a\}$ as well.
We can then obtain $\psi \thesis \varphi$ using the rule $\form a \thesis \form a$.

\item[Case of $\rec\TV.\PT$.]
Reasoning as in Lemma~\ref{lem:proof:compl:fin:ded:c-false},
we can assume that $\psi$ is of the form
$\bigwedge_{j \in J}\form\fold \psi_j$ for some finite $J$.
Let $\psi'$ be the formula
$\bigwedge_{j \in J} \psi_j$.
Note that $\psi \thesisiff \form\fold\psi'$ by Example~\ref{ex:log:modalnf},
so that
$\I\psi = \I{\form\fold}(\I{\psi'})$.

Moreover, we must have $\varphi = \form\fold \varphi'$.
Hence $\I\psi \sle \I\varphi$
implies $\I{\psi'} \sle \I{\varphi'}$.
Note that $\varphi'$ is smaller than $\varphi$
while $\psi'$ is not greater than $\psi$.
Hence the induction hypothesis yields
$\psi' \thesis \varphi'$, and we are done.

\item[Case of $\PT_1 \times \PT_2$.]
Reasoning as in Lemma~\ref{lem:proof:compl:fin:ded:c-false},
we can assume that $\psi$ is of the form
\( 
  (\bigwedge_{j \in J} \form{\pi_1}\psi_j)
  \land
  (\bigwedge_{k \in K} \form{\pi_2}\psi_k)
\)
with
$J \cap K = \emptyset$.
Let $\psi' = \bigwedge_{j \in J} \psi_j$
and $\psi'' = \bigwedge_{k \in K} \psi_k$,
so that $\psi \thesisiff \form{\pi_1}\psi' \land \form{\pi_2}\psi''$.

Moreover, we have $\varphi = \form{\pi_i}\varphi'$
say $i = 1$ (the case $i =2$ is symmetric).
But then we must have $\I{\psi'} \sle \I{\varphi'}$
and similarly as above (again), the induction hypothesis
yields $\psi' \thesis_{\PT_1} \varphi'$.
It is then easy to conclude.

\item[Case of $\PTbis \arrow \PT$.]
This is the most important case.
The proof is an adaptation to our setting
of the proof of~\cite[Proposition 10.5.2]{ac98book}.

First, note that $\varphi$ must be of the form $\varphi'' \realto \varphi'$.
If $\I{\varphi''} = \emptyset$,
then by Lemma~\ref{lem:proof:compl:fin:ded:c-false}
we obtain $\varphi'' \thesis \False$,
so that $\True \thesis (\varphi'' \realto \varphi')$
by Remark~\ref{rem:log:realto}.
Hence $\psi \thesis \varphi$ in this case.
We can thus assume $\I{\varphi''} \neq \emptyset$.
Since $\I\varphi \neq \emptyset$,
this implies that $\I{\varphi'} \neq \emptyset$ as well.
Hence, by Lemma~\ref{lem:top:char:fin}
there are finite $d''_i,d'_i$
such that $\I{\varphi''_i} = \up d''_i$
and $\I{\varphi'_i} = \up d'_i$.

On the other hand,
reasoning similarly as in Lemma~\ref{lem:proof:compl:fin:ded:c-false},
we can assume that $\psi$ is of the form
$\bigwedge_{i \in I} (\psi''_i \realto \psi'_i)$
for some finite set $I$,
with
$\I{\psi''_i} \neq \emptyset$
and $\I{\psi'_i} \neq \emptyset$ for all $i \in I$.
Hence, by Lemma~\ref{lem:top:char:fin},
for each $i \in I$ there are finite $e''_i,e'_i$
such that $\I{\psi''_i} = \up e''_i$
and $\I{\psi'_i} = \up e'_i$.
Moreover, $\bigvee_{i \in I}(e''_i \step e'_i)$
exists since $\I\psi \neq \emptyset$.

Hence $\I\psi \sle \I\varphi$ means
\[
\begin{array}{l l l}
  \up \bigvee_{i \in I} \left(e''_i \step e'_i \right)
& \sle
& \up \left( d'' \step d' \right)
\end{array}
\]

\noindent
which implies
\[
\begin{array}{l l l}
  d'' \step d'
& \leq
& \bigvee_{i \in I} \left(e''_i \step e'_i \right)
\end{array}
\]

\noindent
We thus have
\[
\begin{array}{l l l}
  d'
& \leq
& \bigvee_{d'' \geq e''_i} e_i
\end{array}
\]

\noindent
that is
\[
\begin{array}{l l l}
  \up \bigvee_{\up d'' \sle \up e''_i} e'_i
& \sle
& \up d'
\end{array}
\]

\noindent
In other words,
\[
\begin{array}{l l l}
  \I{\bigwedge_{\I{\varphi''} \sle \I{\psi ''_i}} \psi'_i}
& \sle
& \I{\varphi'}
\end{array}
\]

\noindent
and by induction hypothesis
\[
\begin{array}{l l l}
  \bigwedge_{\varphi'' \thesis \psi ''_i} \psi'_i
& \thesis
& \varphi'
\end{array}
\]

Hence we are done since
\[
\begin{array}{l l l}
  \psi
& \thesis
& \bigwedge_{i \in I} \left(
  \left( \bigwedge_{\varphi'' \thesis \psi ''_i} \psi''_i \right)
  \realto
  \psi'_i
  \right)
\end{array}
\]

\noindent
and thus
\[
\begin{array}{l l l}
  \psi
& \thesis
& \left( \bigwedge_{\varphi'' \thesis \psi ''_i} \psi''_i \right)
  \realto
  \bigwedge_{\varphi'' \thesis \psi ''_i} \psi'_i
\end{array}
\]

\noindent
while
\(
\varphi'' \thesis \bigwedge_{\varphi'' \thesis \psi ''_i} \psi''_i
\).
\qed
\end{description}
\end{proof}

We now turn to Theorem~\ref{thm:compl:fin},
namely the completeness for finite judgments.
While this result is essentially due to Abramsky \cite{abramsky91apal},
we nevertheless offer a proof since our system formally
differs from that of \cite{abramsky91apal}.
Let us recall the statement of Theorem~\ref{thm:compl:fin}.

\ThmComplFin*

\begin{proof}
First, note that Lemma~\ref{lem:reft} (\S\ref{sec:reft})
restricts to finite types, in the sense that if a type
$\RTter$ is finite, then there is some $\varphi \in \Lang_\land(\UPT\RTter)$
such that $\RTter \eqtype \reft{\UPT\RTter \mid \varphi}$.

Consider first the case of a sound judgment
$\Env \thesis M : \RT$
where 
$\Env = x_1:\RTbis_1,\dots,x_n:\RTbis_n$
is such that $\I{\RTbis_i} = \emptyset$ for some $i \in \{1,\dots,n\}$.
Since $\I{\UPT{\RTbis_i}}$ is not empty (as it is a Scott domain),
taking $\psi \in \Lang_\land(\UPT{\RTbis_i})$
such that $\RTbis_i \eqtype \reft{\UPT{\RTbis_i} \mid \psi}$,
we must have $\I\psi = \emptyset$
and thus $\psi \thesis \False$ by Lemma~\ref{lem:proof:compl:fin:ded:c-false}.
We can thus conclude by taking $I = \emptyset$ in the rule
\[
\dfrac{
  \begin{array}{l}
  \UPT{\Env'}, x:\PTbis, \UPT{\Env''} \thesis M : \UPT\RT
  \\
  \text{for each $i \in I$,}\quad
  \Env', x:\reft{\PTbis \mid \psi_i},\Env' \thesis M : \RT
  \end{array}}
  {\Env', x : \reft{\PTbis \mid \bigvee_{i \in I} \psi_i} , \Env'' \thesis M : \RT}
\]

Hence we can reduce the case of a sound judgment
$\Env \thesis M : \RT$ with
$\Env = x_1:\RTbis_1,\dots,x_n:\RTbis_n$
such that $\I{\RTbis_i} \neq \emptyset$ for all $i = 1,\dots,n$.
We now reason by induction on the typing derivation of $\UPT\Env \thesis M : \UPT\RT$.
\begin{description}
\item[Case of]
\[
\dfrac{(x:\UPT\RT) \in \UPT\Env}
  {\UPT\Env \thesis x:\UPT\RT}
\]

We have $(x : \RTbis) \in \Env$ for some type $\RTbis$ with $\UPT\RTbis = \UPT\RT$.
Let $\varphi,\psi \in \Lang_\land(\UPT\RT)$
such that $\RT \eqtype \reft{\UPT\RT \mid \varphi}$
and $\RTbis \eqtype \reft{\UPT\RT \mid \psi}$.
By assumption on $\Env \thesis M :\RT$,
we have $\I\psi \sle \I\varphi$.
Hence $\psi \thesis \varphi$
by Proposition~\ref{prop:compl:fin:ded}.
We then conclude by subtyping.

\item[Case of]
\[
\dfrac{\UPT\Env \thesis N_1 : \PT_1
  \qquad
  \UPT\Env \thesis N_2 : \PT_2}
  {\UPT\Env \thesis \pair{N_1,N_2} : \PT_1 \times \PT_2}
\]

\noindent
where $\UPT\RT = \PT_1 \times \PT_2$
and $M = \pair{N_1,N_2}$.

Let $\varphi \in \Lang_\land(\PT_1 \times \PT_2)$
such that $\RT \eqtype \reft{\PT_1 \times \PT_2 \mid \varphi}$.
Our assumption on $\Env \thesis M : \RT$
implies that $\I\varphi \neq \emptyset$.
Hence,
reasoning as in the proof of Lemma~\ref{lem:proof:compl:fin:ded:c-false}
yields that $\varphi \thesisiff \pair{\psi_1,\psi_2}$
for some $\psi_1 \in \Lang_\land(\PT_1)$
and some $\psi_2 \in \Lang_\land(\PT_2)$.

Since $\Env \thesis M :\RT$ is sound,
so are
$\Env \thesis N_1 : \reft{\PT_1 \mid \psi_1}$
and
$\Env \thesis N_1 : \reft{\PT_2 \mid \psi_2}$.
The induction hypotheses yield that
$\Env \thesis N_1 : \reft{\PT_1 \mid \psi_1}$
and
$\Env \thesis N_1 : \reft{\PT_2 \mid \psi_2}$
are derivable.
We can then conclude using the rules
\[
\dfrac{\Env \thesis N_i : \reft{\PT_i \mid \psi_i}
  \qquad
  \Env \thesis N_{3-i} : \PT_{3-i}}
  {\Env \thesis \pair{N_1,N_2} : \reft{\PT_1 \times \PT_2 \mid \form{\pi_i} \psi_i}}
\]

\noindent
for $i = 1$ and $i = 2$.

\item[Case of]
\[
\begin{array}{c}
\dfrac{\UPT\Env \thesis N : \PT \times \PTbis}
  {\UPT\Env \thesis \pi_i(N) : \UPT\RT}
\end{array}
\]

\noindent
where $i = 1,2$ and $M = \pi_1(N)$.
Assume w.l.o.g.\ $i = 1$ (so that $\UPT\RT = \PT$).
Let $\varphi \in \Lang_\land(\UPT\RT)$ such that
$\RT \eqtype \reft{\UPT\RT \mid \varphi}$.
Our assumption on $\Env \thesis M : \RT$
implies that $\I\varphi \neq \emptyset$,
and
since $\Env \thesis M : \RT$ is sound,
we get that
$\Env \thesis N : \reft{\PT \times \PTbis \mid \form{\pi_1}\varphi}$
is sound, and thus derivable by induction hypothesis.
We then conclude with the rule
\[
\dfrac{\Env \thesis N : \reft{\PT \times \PTbis \mid \form{\pi_1} \varphi}}
  {\Env \thesis \pi_1(N) : \reft{\PT \mid \varphi}}
\]

\item[Case of]
\[
\dfrac{\UPT\Env \thesis N : \PT[\rec\TV.\PT/\TV]}
  {\UPT\Env \thesis \fold(N) : \rec\TV.\PT}
\]

\noindent
where $\UPT\RT = \rec\TV.\PT$ and $M = \fold(M)$.

Let $\varphi \in \Lang_\land(\rec\TV.\PT)$
such that $\RT \eqtype \reft{ \rec\TV.\PT \mid \varphi}$.
Our assumption on $\Env \thesis M : \RT$
implies that $\I\varphi \neq \emptyset$.
Hence,
reasoning as in the proof of Lemma~\ref{lem:proof:compl:fin:ded:c-false}
yields that $\varphi \thesisiff \form{\fold}\psi$
for some $\psi \in \Lang_\land(\PT[\rec\TV.\PT/\TV])$.
Moreover, since $\Env \thesis M : \RT$ is sound,
so is
$\Env \thesis N : \reft{\PT[\rec\TV.\PT/\TV] \mid \psi}$.
We can thus conclude using the induction hypothesis and the rule
\[
\dfrac{\Env \thesis N : \reft{\PT[\rec\TV.\PT/\TV] \mid \psi}}
  {\Env \thesis \fold(N) : \reft{\rec\TV.\PT \mid \form\fold \psi}}
\]

\item[Case of]
\[
\dfrac{\UPT\Env \thesis N : \rec\TV.\PT}
  {\UPT\Env \thesis \unfold(N) : \PT[\rec\TV.\PT/\TV]}
\]

\noindent
where $\UPT\RT = \PT[\rec\TV.\PT/\TV]$
and $N = \unfold(M)$.

Let $\varphi \in \Lang_\land(\PT[\rec\TV.\PT/\TV])$
such that
$\RT \eqtype \reft{\PT[\rec\TV.\PT/\TV] \mid \varphi}$.
Our assumption on $\Env \thesis M : \RT$
implies that 
$\Env \thesis N : \reft{\rec\TV.\PT \mid \form\fold \varphi}$
is sound,
and we conclude using the induction hypothesis and the rule
\[
\dfrac{\Env \thesis N : \reft{\rec\TV.\PT \mid \form\fold \varphi}}
  {\Env \thesis \unfold(N) : \reft{\PT[\rec\TV.\PT/\TV] \mid \varphi}}
\]

\item[Case of]
\[
\dfrac{\UPT\Env,x:\PTbis \thesis N : \PT}
  {\UPT\Env \thesis \lambda x.N : \PTbis \arrow \PT}
\]

\noindent
where $\UPT\RT = \PTbis \arrow \PT$,
and where $M = \lambda x.N$.

Let $\varphi \in \Lang_\land(\PTbis \arrow \PT)$
such that
$\RT \eqtype \reft{\PTbis \arrow \PT \mid \varphi}$.
Our assumption on $\Env \thesis M : \RT$
implies $\I\varphi \neq \emptyset$.
Reasoning as in the proof of Lemma~\ref{lem:proof:compl:fin:ded:c-false}
yields that $\varphi \thesisiff \bigwedge_{i \in I}(\varphi''_i \realto \varphi'_i)$
for some finite set $I$.
Let $i \in I$.
The judgment
\(
  \Env
  \thesis
  \lambda x.N
  :
  \reft{\PTbis \arrow \PT \mid \varphi''_i \realto \varphi'_i}
\)
is sound,
and so is
\(
  \Env, x : \reft{\PTbis \mid \varphi''_i}
  \thesis
  N
  :
  \reft{\PT \mid \varphi'_i}
\).
Using the induction hypothesis, we derive
\(
  \Env
  \thesis
  \lambda x.N
  :
  \reft{\PTbis \arrow \PT \mid \varphi''_i \realto \varphi'_i}
\).
We can then derive
\(
  \Env
  \thesis
  \lambda x.N
  :
  \reft{\PTbis \arrow \PT \mid \varphi}
\).

\item[Case of]
\[
\dfrac{\UPT\Env \thesis N : \PTbis \arrow \PT
  \qquad
  \UPT\Env \thesis V : \PTbis}
  {\UPT\Env \thesis N V : \PT}
\]

\noindent
where $\UPT\RT = \PT$ and where $M = N V$.

Write $\Env = x_1:\RTbis_1,\dots,x_n:\RTbis_n$.
Given $i \in \{1,\dots,n\}$,
let $\psi_i \in \Lang_\land(\UPT{\RTbis_i})$
such that $\RTbis_i \eqtype \reft{\UPT{\RTbis_i} \mid \psi_i}$.
Moreover, by assumption we have $\I{\psi_i} \neq \emptyset$,
hence by Lemma~\ref{lem:top:char:fin}
there is a finite $e_i \in \I{\UPT{\RTbis_i}}$
such that $\I{\psi_i} = \up e_i$.

Similarly, let $\varphi \in \Lang_\land(\PT)$
such that $\RT \eqtype \reft{\PT \mid \varphi}$.
Our assumption on $\Env \thesis M : \RT$
implies that $\I\varphi \neq \emptyset$.
Hence, again by Lemma~\ref{lem:top:char:fin}
there is a finite $d \in \I{\PT}$
such that $\I{\varphi} = \up d$.

Since $\Env \thesis M : \RT$ is sound,
we have
$\I M(\vec e) \in \varphi$.
But note that
$\I M(\vec e) = \I N(\vec e)\left(\I V(\vec e) \right)$.

Now, since $\I\PTbis$ is a Scott domain, it is algebraic,
and $\I V(\vec e)$ is the directed l.u.b.\ of the finite $e \leq \I V(\vec e)$.
Since $\I N(\vec e)$ is Scott-continuous, we thus get
that $\I M(\vec e)$ is the l.u.b.\ of the directed set
\[
\left\{
  \I N(\vec e)(e)
  \mid
  \text{$e$ finite and $\leq \I V(\vec e)$}
\right\}
\]

\noindent
Since $d \leq \I M(\vec e)$ and since $d$ is finite,
it follows that we have
$d \leq \I N(\vec e)(e)$ for some finite $e \leq \I V(\vec e)$.
By Lemma~\ref{lem:top:char:fin},
there is a formula $\psi \in \Lang_\land(\PTbis)$
such that $\I\psi = \up e$.

Since $d \leq \I N(\vec e)(e)$, we have
$(e \step d) \leq \I N(\vec e)$,
so that
$\I N(\vec e) \in \I{\psi \realto \varphi}$.
Since $\I N$ is monotone, it follows that 
$\Env \thesis N : \reft{\PTbis \arrow \PT \mid \psi \realto \varphi}$
is sound.
Hence, this judgment is derivable by induction hypothesis.

Similarly, since $e \leq \I V(\vec e)$,
we obtain that the judgment
$\Env \thesis V : \reft{\PTbis \mid \psi}$
is sound and thus derivable.

We can then easily derive $\Env \thesis M : \reft{\PT \mid \varphi}$
and $\Env \thesis M : \RT$.

\item[Case of]
\[
\dfrac{\UPT\Env,x:\UPT\RT \thesis N : \UPT\RT}
  {\UPT\Env \thesis \fix x.N : \UPT\RT}
\]

\noindent
where $M = \fix x.N$.

Write $\Env = x_1:\RTbis_1,\dots,x_n:\RTbis_n$.
Similarly as above, for each $i \in \{1,\dots,n\}$
there is a finite $e_i \in \I{\UPT{\RTbis_i}}$
such that $\I{\RTbis_i} = \up e_i$.
Similarly, there are $\varphi \in \Lang_\land(\UPT\RT)$
such that $\RT \eqtype \reft{\UPT\RT \mid \varphi}$,
and a finite $d \in \I{\UPT\RT}$ such that $\I\varphi = \up d$.

Let $f \colon \I{\UPT\RT} \to \I{\UPT\RT}$
be the Scott-continuous function which takes
$a \in \I{\UPT\RT}$ to $\I N(\vec e,a)$.
We have
\[
\begin{array}{l l l}
  \I{\fix x.N}(\vec e)
& =
& \bigvee_{k \in \NN}
  f^k(\bot)
\end{array}
\]

Since $d \leq \I{\fix x.N}(\vec e)$ with $d$ finite,
there is some $k \in \NN$
such that
$d \leq f^k(\bot)$.
Write $d_k$ for $d$.
By induction, for each $j = k-1,\dots,0$,
there is some finite $d_j$ such that 
$d_{j+1} \leq f(d_j)$
and
$d_j \leq f^{j}(\bot)$.
In particular, $d_0 = \bot$.
For each $j = 0,\dots,k$,
let $\varphi_j$ such that $\I{\varphi_j} = \up d_j$.
Note that $\varphi_k = \varphi$.
Moreover, since $d_0 = \bot$,
we can take $\varphi_0 = \True$.

Again reasoning similarly as above,
we obtain that 
$\Env, x : \reft{\UPT\RT \mid \varphi_j} \thesis N : \reft{\UPT\RT \mid \varphi_{j+1}}$
is sound and thus derivable for each $j = 0,\dots,k-1$.
Moreover,
$\Env \thesis \fix x.N : \reft{\UPT\RT \mid \varphi_0}$
is derivable.
We can then derive $\Env \thesis \fix x.N : \reft{\UPT\RT \mid \varphi}$
by iterated applications of the rule
\[
\dfrac{\Env \thesis \fix x.N : \reft{\PT \mid \psi}
  \quad
  \Env, x: \reft{\PT \mid \psi} \thesis N : \reft{\PT \mid \psi'}}
  {\Env \thesis \fix x.N : \reft{\PT \mid \psi'}}
~(\psi,\psi' \in \Lang_\land(\PT))
\]

\item[Case of]
\[
\dfrac{}
  {\UPT\Env \thesis a : \BT}
\]

Let $\varphi \in \Lang_\land(\BT)$ such that $\RT \eqtype \reft{\BT \mid \varphi}$.
By assumption on $\Env \thesis M :\RT$,
we have $a \in \I\varphi$,
so that $\I{\form a} \sle \I\varphi$.
Hence $\form a \thesis \varphi$
by Proposition~\ref{prop:compl:fin:ded}.
We can then conclude by subtyping and
\[
\dfrac{}
  {\Env \thesis a : \reft{\BT \mid \form a}}
\]

\item[Case of]
\[
\dfrac{ \UPT\Env \thesis N : \BT
  \qquad\text{for each $a \in \BT$,\quad} \UPT\Env \thesis N_a : \UPT\RT}
  {\UPT\Env \thesis \cse\ N\ \copair{a \mapsto N_a \mid a \in \BT} : \UPT\RT}
\]

We reason similarly as in the cases of $\fix x.N$ and $N V$ above.

Write $\Env = x_1:\RTbis_1,\dots,x_n:\RTbis_n$.
For each $i \in I$
there is a finite $e_i \in \I{\UPT{\RTbis_i}}$
such that $\I{\RTbis_i} = \up e_i$.
Also, there is $\varphi \in \Lang_\land(\UPT\RT)$
such that $\RT \eqtype \reft{\UPT\RT \mid \varphi}$.

Assume first that $\I\varphi = \I\True$,
so that $\varphi \thesisiff \True$
by Proposition~\ref{prop:compl:fin:ded}.
Then we have $\RT \eqtype \UPT\RT$ and we easily
derive $\Env \thesis M : \RT$.

Otherwise, we must have $\bot \notin \I\varphi$,
so that $\I M(\vec e) \neq \bot$
and thus $\I N(\vec e) \neq \bot$.
Hence $\I N(\vec e) = b$ for some $b \in \BT$.
Since $\I M(\vec e) = \I{N_b}(\vec e)$,
we obtain that the judgment
$\Env \thesis N_b : \RT$ is sound and thus derivable.
Moreover,
$\Env \thesis N : \reft{\BT \mid \form b}$ is sound and thus
derivable.
We can then conclude using the rule
\[
\dfrac{
  \Env \thesis N : \reft{\BT \mid \form b}
  \qquad
  \Env \thesis N_b : \RT
  \qquad
  \text{for each $a \in A$,\quad} \UPT\Env \thesis N_a : \UPT\RT}
  {\Env \thesis \cse\ N\ \copair{a \mapsto N_a \mid a \in \BT} : \RT}
\]
\qed
\end{description}
\end{proof}

\subsection{\nameref{sec:main}}
\label{sec:proof:main}

Note that Lemma~\ref{lem:compl:nf} and Theorem~\ref{thm:main}
are proven in \S\ref{sec:app:main}.
We prove Proposition~\ref{prop:main:eta}.

\PropMainEta*

The proof of Proposition~\ref{prop:main:eta}
relies on the following
Lemmas~\ref{lem:proof:main:eta:prod} and \ref{lem:proof:main:eta:fun}.

\begin{lemma}
\label{lem:proof:main:eta:prod}
A (not necessarily normal) judgment
$\Env \thesis M : \RT_1 \times \RT_2$
is sound (resp.\ derivable)
if, and only if,
so are $\Env \thesis \pi_1 M : \RT_1$
and $\Env \thesis \pi_2 M : \RT_2$.
\end{lemma}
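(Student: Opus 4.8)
The plan is to prove both directions by unfolding the definitions of soundness (Definition~\ref{def:sound:typing}) and of the interpretation of product types, using the rules for $\pi_i$ and $\pair{\pl,\pl}$ in Figure~\ref{fig:reft:reftyping} and the product clause $\I{\RT_1 \times \RT_2} = \I{\RT_1} \times \I{\RT_2}$ from \S\ref{sec:sem:reft}. First I would treat soundness. By definition, $\Env \thesis M : \RT_1 \times \RT_2$ is sound iff $\UPT\Env \thesis M : \UPT{\RT_1} \times \UPT{\RT_2}$ is derivable in the pure system and $\I M(\vec u) \in \I{\RT_1} \times \I{\RT_2}$ for all $\vec u \in \I\Env$. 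Since $\I{\RT_1 \times \RT_2}$ is the set-theoretic product, membership $\I M(\vec u) \in \I{\RT_1} \times \I{\RT_2}$ is equivalent to the conjunction $\pi_1(\I M(\vec u)) \in \I{\RT_1}$ and $\pi_2(\I M(\vec u)) \in \I{\RT_2}$, i.e.\ $\I{\pi_1 M}(\vec u) \in \I{\RT_1}$ and $\I{\pi_2 M}(\vec u) \in \I{\RT_2}$, using that the semantics of $\pi_i$ is the categorical projection. On the pure side, $\UPT\Env \thesis M : \UPT{\RT_1} \times \UPT{\RT_2}$ is derivable iff both $\UPT\Env \thesis \pi_1 M : \UPT{\RT_1}$ and $\UPT\Env \thesis \pi_2 M : \UPT{\RT_2}$ are: the forward direction uses the pure $\pi_i$ rules, and the backward direction notes that $M$ and $\pair{\pi_1 M, \pi_2 M}$ have the same type (applying the pairing rule to the two projection derivations gives the type of $M$, since $M$ must already be of product type). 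Combining these two equivalences gives the soundness part.

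Next I would treat derivability. For the ``only if'' direction, given a derivation of $\Env \thesis M : \RT_1 \times \RT_2$, apply the refinement $\pi_i$ rule. This rule is stated in Figure~\ref{fig:reft:reftyping} only for types of the shape $\reft{\PT_1 \times \PT_2 \mid \form{\pi_i}\varphi}$; so I first use Lemma~\ref{lem:reft} to rewrite $\RT_j \eqtype \reft{\UPT{\RT_j} \mid \varphi_j}$ and the subtyping equations $\reft{\PT \mid \varphi}\times\reft{\PTbis \mid \psi} \eqtype \reft{\PT\times\PTbis \mid \pair{\varphi,\psi}}$ and $\pair{\varphi_1,\varphi_2} = \form{\pi_1}\varphi_1 \land \form{\pi_2}\varphi_2$ to bring $\RT_1\times\RT_2$ into the form $\reft{\UPT{\RT_1}\times\UPT{\RT_2} \mid \form{\pi_1}\varphi_1 \land \form{\pi_2}\varphi_2}$. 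From there the conjunction-elimination available via subtyping plus the $\pi_i$ rule yields $\Env \thesis \pi_i M : \reft{\UPT{\RT_i} \mid \varphi_i} \eqtype \RT_i$. For the ``if'' direction, from derivations of $\Env \thesis \pi_1 M : \RT_1$ and $\Env \thesis \pi_2 M : \RT_2$, the pairing rule gives $\Env \thesis \pair{\pi_1 M, \pi_2 M} : \RT_1 \times \RT_2$; since $M$ is pure-typed at a product type, $M$ and $\pair{\pi_1 M, \pi_2 M}$ are interchangeable in the type system (they have the same denotation and the same pure type, and the subtyping/typing rules only constrain the denotation), so we conclude $\Env \thesis M : \RT_1 \times \RT_2$.

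The main obstacle I anticipate is this last point: the system as presented has no surjective-pairing rule letting one replace $M$ by $\pair{\pi_1 M, \pi_2 M}$ at the syntactic level, so justifying the ``if'' direction for derivability requires care. The cleanest fix is to observe that for $M$ of pure product type, the equation $M = \pair{\pi_1 M,\pi_2 M}$ holds denotationally (this is exactly the fact $x = \I\Cons(\I\hd x, \I\tl x)$-style identity noted for streams in Example~\ref{ex:scott:stream-tree}, here at the level of binary products), and then either (i) the soundness part already proved reduces derivability of $\Env\thesis M:\RT_1\times\RT_2$ to that of the two projection judgments via an appeal to completeness in the relevant (finite or normal) case, or (ii) one argues directly that every typing rule applicable to $\pair{\pi_1 M,\pi_2 M}$ is equally applicable to $M$. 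Route (ii) is the self-contained one and is what I would write out, checking it against the pairing rule, the subtyping rule, and the conjunction rule of Figure~\ref{fig:reft:reftyping}.
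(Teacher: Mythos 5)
Your treatment of soundness and of the left-to-right direction for derivability matches the paper's (both go through Lemma~\ref{lem:reft} and the subtyping equation $\RT_1 \times \RT_2 \eqtype \reft{\UPT{\RT_1} \times \UPT{\RT_2} \mid \pair{\varphi_1,\varphi_2}}$). The gap is in the right-to-left direction for derivability, and it sits exactly where you flag it yourself. Route (i) is not available: this lemma feeds into Proposition~\ref{prop:main:eta}, which is used to prove the completeness theorem, so appealing to completeness would be circular; moreover the lemma is stated for judgments that need not be normal, while completeness is only established for normal (or finite) ones. Route (ii) does not work either: the claim that the typing rules ``only constrain the denotation'' is false --- the term-level rules of Figure~\ref{fig:reft:reftyping} are syntax-directed (the pairing rules apply only to terms literally of the form $\pair{M_1,M_2}$, the projection rules only to terms of the form $\pi_i(M)$), so $M$ and $\pair{\pi_1 M,\pi_2 M}$ are not interchangeable inside a derivation. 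Worse, the detour through $\pair{\pi_1 M,\pi_2 M}$ is circular: the only syntax-directed rules concluding a type for that pair have as premises typings of $\pi_1 M$ and $\pi_2 M$, and turning those back into a typing of $M$ is precisely what is to be proved.

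The paper closes this direction by an induction on the derivation of $\Env \thesis \pi_1 M : \RT_1$ itself, showing $\Env \thesis M : \RT_1 \times \UPT{\RT_2}$. The key observation is that the only term-level rules that can conclude a typing of a term of the shape $\pi_1 M$ are the two projection rules, and both have a premise typing $M$ at a product type (either $\reft{\UPT{\RT_1} \times \UPT{\RT_2} \mid \form{\pi_1}\varphi_1}$ or $\RT_1 \times \RT_2$); the remaining applicable rules (the $\bigwedge$-introduction rule, the $\bigvee$-in-context rule, subtyping) are handled by the induction hypothesis together with Example~\ref{ex:log:modalnf}. Symmetrically one obtains $\Env \thesis M : \UPT{\RT_1} \times \RT_2$, and the two are combined by the conjunction rule and subtyping into $\Env \thesis M : \reft{\UPT{\RT_1}\times\UPT{\RT_2} \mid \pair{\varphi_1,\varphi_2}} \eqtype \RT_1 \times \RT_2$. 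You would need to replace your route (ii) with this induction (or an equivalent inversion argument on derivations of typings of $\pi_i M$) for the proof to go through.
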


\begin{proof}
By Lemma~\ref{lem:reft},
there are formulae
$\varphi_1 \in \Lang(\UPT{\RT_1})$
and
$\varphi_2 \in \Lang(\UPT{\RT_2})$
such that
$\RT_1 \eqtype \reft{\UPT{\RT_1} \mid \varphi_1}$
and
$\RT_2 \eqtype \reft{\UPT{\RT_2} \mid \varphi_2}$.
Hence
\(
  \RT_1 \times \RT_2
  \eqtype
  \reft{\UPT{\RT_1} \times \UPT{\RT_2} \mid \pair{\varphi_1,\varphi_2}}
\).

It follows that
$\Env \thesis M : \RT_1 \times \RT_2$
is sound if, and only if,
so are $\Env \thesis \pi_1 M : \RT_1$
and $\Env \thesis \pi_2 M : \RT_2$.

It is clear that $\Env \thesis \pi_1 M : \RT_1$
and $\Env \thesis \pi_2 M : \RT_2$
are derivable whenever so is
$\Env \thesis M : \RT_1 \times \RT_2$.

For the converse, assume that
$\Env \thesis \pi_1 M : \RT_1$
and $\Env \thesis \pi_2 M : \RT_2$
are derivable.
We first show that
$\Env \thesis M : \RT_1 \times \UPT{\RT_2}$.
We reason by induction on the derivation of $\Env \thesis \pi_1 M : \RT_1$
and by cases on the last possible rule.
\begin{description}
\item[Case of]
\[
\dfrac{
  \begin{array}{l}
  \UPT\Env \thesis \pi_1 M : \UPT{\RT_1}
  \\
  \text{for each $i \in I$,}\quad \Env \thesis \pi_1 M : \reft{\UPT{\RT_1} \mid \psi_i}
  \end{array}}
  {\Env \thesis \pi_1 M : \reft{\UPT{\RT_1} \mid \bigwedge_{i \in I} \psi_i}}
\]

\noindent
where $\varphi_1 = \bigwedge_{i \in I}\psi_i$.
Then by induction hypothesis and subtyping,
for all $i \in I$ we can derive
\[
\begin{array}{*{5}{l}}
  \Env
& \thesis
& M
& :
& \reft{\UPT{\RT_1} \times \UPT{\RT_2} \mid \form{\pi_1}\psi_i}
\end{array}
\]

\noindent
and thus
\[
\begin{array}{*{5}{l}}
  \Env
& \thesis
& M
& :
& \reft{\UPT{\RT_1} \times \UPT{\RT_2} \mid \bigwedge_{i \in I} \form{\pi_1}\psi_i}
\end{array}
\]

\noindent
We can then conclude using subtyping and Example~\ref{ex:log:modalnf}.

\item[Case of]
\[
\dfrac{
  \begin{array}{l}
  \UPT\Env, x:\PTbis, \UPT{\Env'} \thesis \pi_1 M : \UPT{\RT_1}
  \\
  \text{for each $i \in I$,}\quad
  \Env, x:\reft{\PTbis \mid \psi_i},\Env' \thesis \pi_1 M : \RT_1
  \end{array}}
  {\Env, x : \reft{\PTbis \mid \bigvee_{i \in I} \psi_i} , \Env' \thesis \pi_1 M : \RT_1}
\]

By induction hypothesis.

\item[Case of]
\[
\dfrac{
  \Env \subtype \Env'
  \quad 
  \RT'_1 \subtype \RT_1
  \quad
  \Env' \thesis \pi_1 M : \RT'_1}
  {\Env \thesis \pi_1 M : \RT_1}
\]

By subtyping we obtain
$\Env' \thesis \pi_2 M : \RT_2$
and the induction hypothesis yields
$\Env' \thesis M : \RT'_1 \times \UPT{\RT_2}$.
Then we are done since
$\RT'_1 \times \UPT{\RT_2} \subtype \RT_1 \times \UPT{\RT_2}$
and $\Env \subtype \Env'$.

\item[Case of]
\[
\dfrac{\Env \thesis M : \reft{\UPT{\RT_1} \times \UPT{\RT_2} \mid \form{\pi_1} \varphi_1}}
  {\Env \thesis \pi_1 M : \reft{\UPT{\RT_1} \mid \varphi_1}}
\]

Since
\(
  \reft{\UPT{\RT_1} \times \UPT{\RT_2} \mid \form{\pi_1} \varphi_1}
  \eqtype
  \RT_1 \times \UPT{\RT_2}
\).

\item[Case of]
\[
\dfrac{\Env \thesis M : \RT_1 \times \RT_2}
  {\Env \thesis \pi_1 M : \RT_1}
\]

Since
$\RT_1 \times \RT_2 \subtype \RT_1 \times \UPT{\RT_2}$.
\end{description}

\noindent
We similarly obtain
$\Env \thesis M : \UPT{\RT_1} \times \RT_2$.
Using subtyping, we then get
\[
\begin{array}{*{5}{l}}
  \Env
& \thesis
& M
& :
& \reft{\UPT{\RT_1} \times \UPT{\RT_2} \mid \form{\pi_1}\varphi_1}
\\

  \Env
& \thesis
& M
& :
& \reft{\UPT{\RT_1} \times \UPT{\RT_2} \mid  \form{\pi_2}\varphi_2}
\end{array}
\]

\noindent
from which we get
\[
\begin{array}{*{5}{l}}
  \Env
& \thesis
& M
& :
& \reft{\UPT{\RT_1} \times \UPT{\RT_2} \mid \pair{\varphi_1,\varphi_2}}
\end{array}
\]

\noindent
and thus
$\Env \thesis M : \RT_1 \times \RT_2$.
\qed
\end{proof}

\begin{lemma}
\label{lem:proof:main:eta:fun}
A (not necessarily normal) judgment
$\Env \thesis M : \RTbis \arrow \RT$
is sound (resp.\ derivable)
if, and only if,
so is
$\Env, x:\RTbis \thesis M x : \RT$.
\end{lemma}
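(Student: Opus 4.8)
The plan is to establish the two equivalences (soundness and derivability) separately, closely following the pattern of the proof of Lemma~\ref{lem:proof:main:eta:prod} just above. First I would use Lemma~\ref{lem:reft} to fix $\psi \in \Lang(\UPT\RTbis)$ and $\varphi \in \Lang(\UPT\RT)$ with $\RTbis \eqtype \reft{\UPT\RTbis \mid \psi}$ and $\RT \eqtype \reft{\UPT\RT \mid \varphi}$, so that $\RTbis \arrow \RT \eqtype \reft{\UPT\RTbis \arrow \UPT\RT \mid \psi \realto \varphi}$ and $\I{\RTbis \arrow \RT} = \I\psi \realto \I\varphi$. Writing $\Env = x_1:\RTbis_1,\dots,x_n:\RTbis_n$, the judgment $\Env \thesis M : \RTbis \arrow \RT$ is sound iff $\UPT\Env \thesis M : \UPT\RTbis \arrow \UPT\RT$ is derivable and $\I M(u_1,\dots,u_n)(v) \in \I\varphi$ for all $u_i \in \I\RTbis_i$ and all $v \in \I\psi$; since $\I{M x}(u_1,\dots,u_n,v) = \I M(u_1,\dots,u_n)(v)$, this is precisely soundness of $\Env, x:\RTbis \thesis M x : \RT$. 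Here one uses that, in the pure system, $\UPT\Env \thesis M : \UPT\RTbis \arrow \UPT\RT$ is derivable iff $\UPT\Env, x:\UPT\RTbis \thesis M x : \UPT\RT$ is: the forward implication is weakening followed by the application rule applied to the variable $x$, and the backward one is inversion of the (syntax-directed) application rule together with uniqueness of pure types and the fact that $x$ does not occur in $M$.

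For derivability, the ``only if'' direction is immediate: from $\Env \thesis M : \RTbis \arrow \RT$, weaken to $\Env, x:\RTbis \thesis M : \RTbis \arrow \RT$ and apply the application rule to $\Env, x:\RTbis \thesis x : \RTbis$. The ``if'' direction goes by induction on the derivation of $\Env, x:\RTbis \thesis M x : \RT$, inspecting the last rule (necessarily one whose subject is an application and whose context is $\Env, x:\RTbis$). If it is the application rule, then $\Env, x:\RTbis \thesis M : \RTbis' \arrow \RT$ and $\Env, x:\RTbis \thesis x : \RTbis'$, the latter forcing $\RTbis \subtype \RTbis'$; strengthening (as $x$ does not occur in $M$) gives $\Env \thesis M : \RTbis' \arrow \RT$, and contravariance of $\arrow$ yields $\Env \thesis M : \RTbis \arrow \RT$. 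If it is $\bigwedge$-introduction, then $\RT = \reft{\PT \mid \bigwedge_{i \in I}\varphi_i}$ and the induction hypothesis gives $\Env \thesis M : \RTbis \arrow \reft{\PT \mid \varphi_i}$ for each $i$; passing to the $\realto$-form via Lemma~\ref{lem:reft}, subtyping, $\bigwedge$-introduction over $I$, and the rule $\bigwedge_{i \in I}(\psi \realto \varphi_i) \thesis \psi \realto \bigwedge_{i \in I}\varphi_i$ of Figure~\ref{fig:log:ded} yield $\Env \thesis M : \reft{\UPT\RTbis \arrow \PT \mid \psi \realto \bigwedge_{i}\varphi_i} \eqtype \RTbis \arrow \RT$. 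If it is $\bigvee$-elimination on a declaration of $\Env$, the induction hypothesis applies to each premise and $\bigvee$-elimination re-assembles the conclusion; if instead $\RTbis = \reft{\PTbis \mid \bigvee_{i \in I}\psi_i}$ and the elimination is on $x$, the induction hypothesis gives $\Env \thesis M : \reft{\PTbis \mid \psi_i} \arrow \RT$ for each $i$, and the dual argument (using $\bigwedge_{i \in I}(\psi_i \realto \varphi) \thesis (\bigvee_{i \in I}\psi_i) \realto \varphi$) gives $\Env \thesis M : \RTbis \arrow \RT$. Finally, if it is subtyping, one combines covariance and contravariance of $\arrow$ with the induction hypothesis. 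Degenerate index sets ($I = \emptyset$) are handled using Remark~\ref{rem:log:realto} and the subtyping rule $\RT \subtype \UPT\RT$.

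The main obstacle is the $\bigwedge$- and $\bigvee$-cases of this induction: one must route the argument through the $\realto$-presentation supplied by Lemma~\ref{lem:reft} so that the commutation rules for $\realto$ in Figure~\ref{fig:log:ded} become applicable, and take care of the empty-index-set corner cases via Remark~\ref{rem:log:realto}. The remaining ingredients --- weakening and strengthening for both the pure and the refinement systems, uniqueness of pure types, soundness of subtyping --- are routine structural facts, already implicitly used elsewhere in the paper.
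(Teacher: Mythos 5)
Your proposal is correct and follows essentially the same route as the paper's proof: reduce both types to $\realto$-form via Lemma~\ref{lem:reft} (which settles soundness immediately), derive the ``only if'' direction by the application rule, and prove the ``if'' direction by induction on the derivation of $\Env, x:\RTbis \thesis M x : \RT$, using the commutation rules $\bigwedge_{i}(\psi \realto \varphi_i) \thesis \psi \realto \bigwedge_{i}\varphi_i$ and $\bigwedge_{i}(\psi_i \realto \varphi) \thesis (\bigvee_{i}\psi_i) \realto \varphi$ for the conjunction- and disjunction-rule cases. Your treatment of the application case is in fact slightly more careful than the paper's (which writes the premise directly as $\Env \thesis M : \RTbis \arrow \RT$ and calls it trivial, whereas you make the implicit strengthening and variable-inversion steps explicit).
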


\begin{proof}
By Lemma~\ref{lem:reft},
there are formulae
$\varphi \in \Lang(\UPT{\RT})$
and
$\psi \in \Lang(\UPT{\RTbis})$
such that
$\RT \eqtype \reft{\UPT{\RT} \mid \varphi}$
and
$\RTbis \eqtype \reft{\UPT{\RTbis} \mid \psi}$.
Hence
\(
  \RTbis \arrow \RT
  \eqtype
  \reft{\UPT{\RTbis} \arrow \UPT{\RT} \mid \psi \realto \varphi}
\).

It follows that
$\Env \thesis M : \RTbis \arrow \RT$
is sound if, and only if,
so is $\Env,x:\RTbis \thesis M x : \RT$.

It is clear that $\Env, x:\RTbis \thesis M x : \RT$
is derivable whenever so is
$\Env \thesis M : \RTbis \arrow \RT$.

For the converse, assume that
$\Env,x:\RTbis \thesis M x : \RT$
is derivable.
We show that
$\Env \thesis M : \RTbis \arrow \RT$
is derivable by induction on the derivation of
$\Env,x:\RTbis \thesis M x : \RT$.
We reason by cases on the last possible rule.
\begin{description}
\item[Case of]
\[
\dfrac{
  \begin{array}{l}
  \UPT\Env, x : \UPT\RTbis \thesis M x : \UPT{\RT}
  \\
  \text{for each $i \in I$,}\quad
  \Env,x:\RTbis \thesis M x : \reft{\UPT{\RT} \mid \varphi_i}
  \end{array}}
  {\Env,x:\RTbis \thesis M x : \reft{\UPT{\RT} \mid \bigwedge_{i \in I} \varphi_i}}
\]

\noindent
where $\varphi = \bigwedge_{i \in I}\varphi_i$.
Then by induction hypothesis and subtyping,
for all $i \in I$ we can derive
\[
\begin{array}{*{5}{l}}
  \Env
& \thesis
& M
& :
& \reft{\UPT{\RTbis} \arrow \UPT{\RT} \mid \psi \realto \varphi_i}
\end{array}
\]

\noindent
and thus
\[
\begin{array}{*{5}{l}}
  \Env
& \thesis
& M
& :
& \reft{\UPT{\RTbis} \arrow \UPT{\RT} \mid \bigwedge_{i \in I}(\psi \realto \varphi_i)}
\end{array}
\]

\noindent
We can then conclude by subtyping since
\[
\begin{array}{l !{\quad\thesis\quad} l}
  \bigwedge_{i \in I}(\psi \realto \varphi_i)
& \left( \psi \realto \bigwedge_{i \in I}\varphi_i \right)
\end{array}
\]

\item[Case of]
\[
\dfrac{
  \begin{array}{l}
  \UPT\Env, y:\PTbis, \UPT{\Env'},x : \UPT\RTbis \thesis M x : \UPT{\RT}
  \\
  \text{for each $i \in I$,}\quad
  \Env, y:\reft{\PTbis \mid \psi_i},\Env',x : \RTbis \thesis M x : \RT
  \end{array}}
  {\Env, y : \reft{\PTbis \mid \bigvee_{i \in I} \psi_i} , \Env', x : \RTbis
  \thesis M x : \RT}
\]

By induction hypothesis.

\item[Case of]
\[
\dfrac{
  \begin{array}{l}
  \UPT\Env, x:\UPT\RTbis \thesis M x  : \UPT{\RT}
  \\
  \text{for each $i \in I$,}\quad
  \Env, x:\reft{\UPT\RTbis \mid \psi_i} \thesis M x : \RT
  \end{array}}
  {\Env, x : \reft{\UPT\RTbis \mid \bigvee_{i \in I} \psi_i} \thesis M x : \RT}
\]

\noindent
where $\psi = \bigvee_{i \in I}\psi_i$.
By induction hypothesis and subtyping,
for all $i \in I$ we can derive
\[
\begin{array}{*{5}{l}}
  \Env
& \thesis
& M
& :
& \reft{\UPT{\RTbis} \arrow \UPT{\RT} \mid \psi_i \realto \varphi}
\end{array}
\]

\noindent
and thus
\[
\begin{array}{*{5}{l}}
  \Env
& \thesis
& M
& :
& \reft{\UPT{\RTbis} \arrow \UPT{\RT} \mid \bigwedge_{i \in I}(\psi_i \realto \varphi)}
\end{array}
\]

We can then conclude by subtyping since
\[
\begin{array}{l !{\quad\thesis\quad} l}
  \bigwedge_{i \in I}(\psi_i \realto \varphi)
& \left( \bigvee_{i \in I} \psi_i \right) \realto \varphi
\end{array}
\]

\item[Case of]
\[
\dfrac{
  \Env \subtype \Env'
  \quad
  \RTbis \subtype \RTbis'
  \quad 
  \RT' \subtype \RT
  \quad
  \Env', x:\RTbis' \thesis M x : \RT'}
  {\Env,x:\RTbis \thesis M x : \RT}
\]

By induction hypothesis we obtain
$\Env' \thesis M : \RTbis' \arrow \RT'$.
Then we are done since
$\RTbis' \arrow \RT' \subtype \RTbis \arrow \RT$
and $\Env \subtype \Env'$.

\item[Case of]
\[
\dfrac{\Env \thesis M : \RTbis \arrow \RT
  \qquad
  \Env, x: \RTbis \thesis x : \RTbis}
  {\Env, x: \RTbis \thesis M x : \RT}
\]

Trivial.
\qed
\end{description}
\end{proof}

We can now prove Proposition~\ref{prop:main:eta}.

\begin{proof}[of Proposition~\ref{prop:main:eta}]
We reason by induction on the fonf type $\RT$.
If $\RT$ is normal, then the result is trivial
since 
$\eta(\Env \thesis M : \RT) = \left\{ \Env \thesis M : \RT \right\}$.
In the cases of $\RT_1 \times \RT_2$
and $\RTbis \arrow \RT$ (with $\RTbis$ normal)
we conclude by induction hypothesis
and Lemmas~\ref{lem:proof:main:eta:prod} and \ref{lem:proof:main:eta:fun}, respectively.
\qed
\end{proof}

\subsection{\nameref{sec:compl:general}}
\label{sec:proof:compl:general}

We prove Lemma~\ref{lem:compl:nf:wf}.

\LemComplNfWf*

\begin{proof}
The proof is by induction on $\varphi$.
In the case of $\bigwedge$ and $\bigvee$,
we conclude by induction hypothesis and Example~\ref{ex:log:distr}.
In the case of $\form\triangle\varphi$ ($\triangle$ either $\pi_1$, $\pi_2$ or $\fold$),
we conclude by induction hypothesis and Example~\ref{ex:log:modalnf}.

Consider now the case of $\psi \realto \varphi$.
By induction hypothesis we can assume $\varphi \in \Norm$.
By combining the induction hypothesis with 
Example~\ref{ex:log:distr},
we can assume that $\psi$
is a $\bigvee$ of $\bigwedge$'s of formulae in $\Lang_\land$.
Since
\[
\begin{array}{r !{\quad\thesisiff\quad} l}
  \bigwedge_{i \in I}\left(\psi \realto \varphi_i \right)
& \psi \realto \left(\bigwedge_{i \in I} \varphi_i\right)
\\

  \bigwedge_{i \in I}\left( \psi_i \realto \varphi \right)
& \left(\bigvee_{i \in I} \psi_i \right) \realto \varphi
\end{array}
\]

\noindent
we can reduce to the case of
$\psi = \bigwedge_{i \in I}\psi_i$
and
$\varphi = \bigvee_{k \in K}\varphi_k$
with $\varphi_k,\psi_i \in \Lang_\land$.

Now, note that we can derive
\[
\begin{array}{r !{\quad\thesisiff\quad} l}
  \left( \bigwedge_{i \in I} \psi_i \right)
  \realto
  \varphi
& \bigvee_{\text{$J \sle I$, $J$ finite}}
  \left(
  \left( \bigwedge_{j \in J} \psi_j \right)
  \realto
  \varphi
  \right)
\end{array}
\]

\noindent
Indeed, the $\thesis$ direction is given by the rule $\ax{WF}$.
The converse is derivable using the left-rule for $\bigvee$,
since $\bigwedge_{i \in I}\psi_i \thesis \bigwedge_{j \in J}\psi_j$
whenever $J \sle I$.

It follows that we can actually assume $\psi \in \Lang_\land$
(still with $\varphi = \bigvee_{k \in K}\varphi_k$
where $\varphi_k \in \Lang_\land$).
If $K \neq \emptyset$, then we can conclude using the rule
$\ax{F}$ in Figure~\ref{fig:log:ded}.

Otherwise, $K = \emptyset$ and $\varphi = \False$.
If $\C(\psi)$ then we conclude using the rule $\ax{C}$
in eq.\ \eqref{eq:compl:cc}.
Otherwise, by Proposition~\ref{prop:compl:fin:ded}
we have $\psi \thesis \False$,
and we are done since
$\True \thesis (\False \realto \False)$
by Remark~\ref{rem:log:realto}.
\qed
\end{proof}

}

\opt{full}{\newpage}
\opt{full}{\tableofcontents}

\end{document}